\documentclass[11pt]{article}
\usepackage{graphicx} %

\usepackage[utf8]{inputenc}
\usepackage{amsmath,amsthm,amssymb,xcolor}
\usepackage[margin=2.54cm]{geometry}
\usepackage{physics}
\usepackage[hidelinks]{hyperref}
\usepackage{float}
\usepackage{hyperref}
\usepackage{multirow}
\usepackage{enumitem}
\usepackage{cleveref}

\usepackage{thmtools}
\usepackage{thm-restate}

\usepackage{makecell}

\usepackage{lineno}

\usepackage{multirow}
\usepackage[ruled,linesnumbered,vlined,noend]{algorithm2e}
\crefname{algocf}{algorithm}{algorithms}
\Crefname{algocf}{Algorithm}{Algorithms}

\usepackage{pgfplots}
\usepackage{xcolor}
\pgfplotsset{compat=1.7}
\usetikzlibrary{tikzmark, arrows}
\usetikzlibrary{calc}
\usetikzlibrary{patterns,patterns.meta} 
\usetikzlibrary{intersections}
\usepackage{cite}
\usepackage{tkz-euclide}

\usepackage[table, dvipsnames]{xcolor}

\newtheorem{theorem}{Theorem}[section]
\newtheorem{lemma}[theorem]{Lemma}
\newtheorem{fact}[theorem]{Fact}

\usepackage{bm}

\usepackage{natbib}

\usepackage{dsfont}

\usepackage[disable]{todonotes}

\title{On the Power of Adaptivity in Testing Quantum States in Fidelity
}

\author{
Jan Seyfried\\
Centre for Quantum Technologies,\\
National University of Singapore\\
jan.seyfried@u.nus.edu
\and
Sayantan Sen\\
Centre for Quantum Technologies,\\
National University of Singapore\\
sayantan789@gmail.com
\and
Marco Tomamichel \\
Department of Electrical and Computer Engineering,\\
Centre for Quantum Technologies,\\
National University of Singapore\\
marco.tomamichel@nus.edu.sg
}

\date{}

\begin{document}

\maketitle
\newcommand{\eps}{\varepsilon}

\newcommand{\subD}[3]{{#1}^{#2}_{#3}}

\newcommand{\E}{\mathbb{E}}

\newcommand{\cHSdistr}{c_{\textnormal{test}}^{\ell_2, C}}

\newcommand{\cHSquantum}{c_{\textnormal{test}}^{\ell_2, Q}}

\newcommand{\cFquantum}{c_{\textnormal{id}}^{F, Q}}

\newcommand{\cOpLearn}{c_{\textnormal{learn}}^{\opn}}

\newcommand{\cOpG}{c_{\textnormal{learn}}^{F,\opn}}

\newcommand{\cMtest}{c_{\textnormal{test}}^{M}}

\newcommand{\cOpMix}{c_{\textnormal{mix}}^{\opn}}

\newcommand{\opn}{\textnormal{op}}

\newcommand{\bv}{s}
\newcommand{\eqM}{\hat{M}}

\newcounter{mynotes}
\setcounter{mynotes}{0}

\newcommand{\ssnotes}[1]{\addtocounter{mynotes}{1}{{}}\todo[color=blue!20!white]{[\arabic{mynotes}] \scriptsize  {Sayantan: {\sf {#1}}}}}

\newcommand{\jsnotes}[1]{\addtocounter{mynotes}{1}{{}}\todo[color=cyan!20!white]{[\arabic{mynotes}] \scriptsize  {Jan: {\sf {#1}}}}}

\newcommand{\mtnotes}[1]{\addtocounter{mynotes}{1}{{}}\todo[color=green!20!white]{[\arabic{mynotes}] \scriptsize  {Marco: {\sf {#1}}}}}

\newtheorem{proposition}{Proposition}[section]
\newtheorem{claim}{Claim}[section]
\newtheorem{remark}[theorem]{Remark}
\newtheorem{definition}[theorem]{Definition}

\newcommand{\blue}[1]{\textcolor{black}{#1}}
\newcommand{\red}[1]{\textcolor{black}{#1}}

\newcommand{\green}[1]{\textcolor{green!80!black}{#1}}

\newcommand{\snote}[1]{{\color{orange} Sayantan: #1}}

\begin{abstract}
We study the problems of \emph{quantum state certification}, \emph{equivalence testing} and \emph{independence testing}. In certification, given samples of an unknown quantum state $\rho \in \mathbb{C}^{d \times d}$ and the description of another state $\sigma \in \mathbb{C}^{d \times d}$, the goal is to reliably distinguish whether $\rho=\sigma$, or whether $\rho$ and $\sigma$ are \emph{far} with respect to a given distance measure. Equivalence testing further assumes that $\sigma$ is also unknown and only accessible via samples. Independence testing decides whether $\rho_{AC}=\rho_A\otimes\rho_C$, where $d=d_Ad_C$, or is far from being product. The sample complexities of these problems are now well-understood for a decision gap in \emph{trace distance}: in the \emph{single-copy measurement} setting, all three tasks can be solved using the same non-adaptive approach, which uses $\Theta(d^{3/2}/\varepsilon^2)$ samples and is optimal in general, {\color{black}see Chen, Li, and O'Donnell (COLT 2022), and Yu (ITCS 2021). Moreover, Chen, Li, Huang, and Liu (FOCS 2022) showed that adaptivity does not asymptotically help for these problems.}

{\color{black}In classical distribution testing in trace distance, these problems have clear separations in their sample complexities, due to algorithms specifically tailored to benefit from the respective problem structure. We are interested in seeing to what extent similar separations can appear in the quantum case as well.} As a first setting in which the different constraints of the above problems can matter, we consider a decision gap expressed in \emph{fidelity}, a fundamental quantity in quantum information.
We prove that when the known state $\sigma$ is of rank $r$, certification with respect to fidelity requires $\widetilde{\Theta}(r^{3/2}/\varepsilon)$ samples, which is independent of the ambient dimension $d$, and is optimal even without using adaptivity. For equivalence testing and independence testing, we provide adaptive algorithms using $\widetilde{O}(\min\{d^{3/2}/\varepsilon^2,d^{9/4}/\varepsilon\})$ and $\widetilde{O}(\min\{(d_Ad_C)^{3/2}/\varepsilon^2,d_A^{9/4}d_C^{3/4}/\varepsilon\})$ samples, for $d_A\geq d_C$, respectively. Our main technique is a framework that uses partial learning and a reduction to testing in $\ell_2$-distance, adapted from the distribution testing literature. To explore whether adaptivity is \emph{necessary} for optimal testing in fidelity, we show that non-adaptive equivalence testing in fidelity generally requires $\widetilde{\Omega}(1/\eps^2)$ samples, showing a separation from certification. We conjecture that clear separations in the sample complexities of these problems remain even in the adaptive case.

\end{abstract}
\newpage
\tableofcontents
\section{Introduction}

In this paper, we study three fundamental problems in quantum property testing:
quantum \emph{state certification}, \emph{equivalence testing}, and \emph{independence testing}. For state certification, we are given the classical description of a quantum state $\sigma \in \mathbb{C}^{d \times d}$, together with sample access to an unknown quantum state $\rho \in \mathbb{C}^{d \times d}$. The goal is to distinguish, with high probability, between the case when $\rho=\sigma$ and the case where $\rho$ is \emph{far} from $\sigma$ with respect to a given distance measure $L$, i.e.,
\begin{align}
    \rho=\sigma \quad\text{or}\quad L(\rho,\sigma)\geq \eps.
\end{align}
The parameter $\eps$ is usually referred to as the \emph{decision gap} (and if the distance between $\rho$ and $\sigma$ is less than $\eps$, the algorithm may output either value). Quantum equivalence testing studies the same question, but in the setting where both states are unknown, and only accessible via samples. The third problem, independence testing, is aimed at testing the structure of a given state. Here, we are given samples of a state $\rho_{AC}\in \mathbb{C}^{d_Ad_C \times d_Ad_C}$ and need to decide whether it has product structure, $\rho_{AC}=\rho_A\otimes\rho_C$ or is far from being a product state\footnote{We assume in the following that it is possible to generate samples of $\rho_A\otimes\rho_C$ given samples from $\rho_{AC}$, for example via partial traces, even though this technically requires two copies of $\rho_{AC}$.}.

These problems are the natural analogues of testing problems in distribution testing, where we receive i.i.d.\ samples from an unknown probability distribution and the goal is to distinguish whether two distributions are the same or far under suitable distance measures. In the classical setting, these testing problems have been studied extensively, see \cite{goldreich1998property,DBLP:conf/focs/BatuFRSW00,gs009,canonne2022topics} and the references therein. It is natural to expect that techniques used for distribution testing can, in some form, be generalized to the quantum setting, which is one of the key motivations behind our work. The problem of quantum property testing has been studied in great detail, see the surveys~\cite{montanaro2016survey,DBLP:journals/sigact/ArunachalamW17,DBLP:journals/corr/abs-2305-20069} and the references therein.

An important aspect of the setting is the choice of measurement model. In the most powerful \emph{multi-copy} (or \emph{coherent}) model, one is allowed to
perform an arbitrary joint measurement on the joint state $\rho^{\otimes n}$. This model has been extensively studied in the literature~\cite{buadescu2019quantum, scharnhorst2025optimal, odonnell_certification}. While this model is mathematically interesting, it requires storing all copies coherently and
measuring them simultaneously, which is often practically infeasible for large $n$.

A more restricted, although arguably more practically feasible access model is the setting where only \emph{single-copy} (or \emph{incoherent})
measurements are considered, that is, the copies of $\rho$ are measured individually. A further distinction in the single-copy model is the separation between \emph{adaptive} and \emph{non-adaptive} protocols. In the adaptive setting, the measurement performed on a copy may depend on the outcomes of earlier measurements. In contrast, in the \emph{non-adaptive} setting, all measurements are fixed in advance, but possibly randomized. %

The three testing problems we are interested in have been studied extensively in the single-copy setting for decision gaps expressed in terms of the \emph{trace distance}. For both certification and equivalence testing, that is, 
\begin{align}
    \text{distinguishing whether}\quad \rho=\sigma \quad\text{or}\quad \|\rho-\sigma\|_{\tr}\geq\eps,
\end{align}
a sequence of works~\cite{bubeck_entanglement_2020,chen_toward_2021,chen2022tightboundsquantumstate} established tight bounds of $\widetilde{\Theta}\!\left({d^{3/2}}/{\eps^2}\right)$ copies. Surprisingly, it has also been shown that adaptive single-copy measurements give no general advantage over non-adaptive strategies. Further, the direct application of this algorithm to states $\rho_{AC}$ and $\rho_A\otimes\rho_C$ is generally optimal for independence testing, which intuitively is due to the fact that the hard instances are based on the maximally mixed state \cite{yu:LIPIcs.ITCS.2021.11}.

In contrast, for distribution testing, there are clear separations between the sample complexities of the different testing tasks. This makes intuitive sense, as there is a natural hierarchy between them, with equivalence testing being more general than certification and mutual information testing, and algorithms can exploit the additional guarantees. It is natural to ask whether testing quantum states can show similar separations between problems, and also the role adaptivity plays in it.

As a first step, this work studies the aforementioned problems with respect to a guarantee expressed in \emph{fidelity}, another natural notion of distance/closeness between quantum states. Interestingly, the role of adaptivity has already been studied for the task of \emph{tomography}, i.e., the task of learning a quantum state to a given precision $\eps$.

In a recent seminal work, \cite{chen_when_2023} showed that adaptivity does not help when learning in trace distance, mirroring the behavior of the property testing regime described earlier. However, for quantum state tomography with respect to fidelity, adaptive single-copy measurements are generally more powerful than non-adaptive ones, achieving sample complexities of $\Theta(d^3/\eps)$ instead of the non-adaptive $\Theta(d^3/\eps^2)$. It is also important to note that this marks a natural trade-off between fidelity and trace distance: the latter gives a stronger guarantee, but is often more difficult to learn or test for. %
Meanwhile, the corresponding role of adaptivity for quantum property-testing tasks has remained much less understood. As mentioned earlier, for certification, equivalence, and independence testing with respect to trace distance, it is known that adaptivity does not provide any advantage over non-adaptive measurements. 
This motivates the following question:
\begin{center}
\emph{Does adaptivity help in  
testing quantum states with respect to fidelity,
and if yes, how much?}    
\end{center}
We address the first part of the question by showing a separation: For certification in fidelity, adaptivity is not needed to achieve the optimal $\widetilde{\Theta}(d^{3/2}/\eps)$. However, we prove a separation for equivalence testing by providing lower bounds of $\widetilde{\Omega}(1/\eps^2)$ in the non-adaptive case. Moreover, we demonstrate that for equivalence testing, adaptivity can be used more effectively than via tomography by providing an equivalence testing algorithm in $\widetilde{O}(\min\{d^{3/2}/\eps^2,d^{9/4}/\eps\})$. As described below, we further show that for the more specific task of independence testing, we can use adaptivity to be even more efficient.

When it comes to the techniques used to derive algorithms, we would ideally like to benefit from parallels between distribution testing and quantum property testing, and suitably leverage existing ideas to the quantum setting. One powerful framework, which we will outline in more detail in \Cref{sec:overview}, reduces testing problems in different distance measures to a subroutine of testing in the $\ell_2$-distance \cite{DBLP:conf/focs/DiakonikolasK16}. It has been shown to give sample-optimal algorithms for different testing tasks, in particular for testing problems with guarantees and decision gaps in fidelity. The approach can be understood as a partial-learning and testing framework. One of our key questions is:
\begin{center}
    \emph{Can the $\ell_2$-testing framework be generalized to quantum and give sample optimal algorithms?}
\end{center}
Our interest in this framework is based on the fact that it relies on certain subroutines for testing and partial learning which are already mostly available in the quantum literature. This helps in the design and analysis of testing algorithms. The reliance on well-defined subroutines also makes the framework easy to adapt to other settings as well, such as joint measurements.

As an application of our algorithms and the underlying framework, we use our independence tester to test for \emph{quantum mutual information}. For a bipartite state $\rho_{AC} \in \mathbb{C}^{d_{AC} \times d_{AC}}$, the mutual information
\begin{align}
    I(A:C)_\rho
    =D(\rho_{AC}\|\rho_A\otimes\rho_C),
\end{align}
where $D(\cdot\|\cdot)$ denotes the relative entropy. The mutual information is a fundamental quantity in (quantum) information theory and appears throughout quantum communication, entanglement theory, many-body physics, and learning theory.
This is the quantum analogue of classical mutual information testing, a problem that is well studied in the literature~\cite{DBLP:conf/focs/DiakonikolasK16, canonne_testing_2018, pmlr-v291-seyfried25a}.

In \emph{mutual information testing}, given samples of an unknown bipartite state $\rho_{AC}$, the goal is to distinguish with high probability whether $I(A:C)_\rho = 0$ (i.e., $\rho_{AC}=\rho_A\otimes\rho_C$), or $I(A:C)_\rho \geq \eps$. Our algorithm achieves a sample complexity of $\widetilde{O}(\min\{(d_Ad_C)^{3/2}/\eps^2,d_A^{9/4}d_C^{3/4}/\eps\})$, which, for certain regimes, improves on applying mere equivalence testing by a factor of $\widetilde{O}(d_C^{3/2})$.

While it remains an open question whether our algorithms for equivalence testing and independence testing are optimal, we conjecture that optimal approaches will need to take advantage of the different problem settings in a similar manner, which we in particular expect to be reflected in different sample complexities. For example, we already note here that equivalence testing can always be used directly for independence testing: for trace distance, this is optimal, but our approach suggests that in fidelity, one can be more efficient.
The reason for our conjecture is that testing in fidelity appears to be much more sensitive to the structure of the states than trace distance, as we will also outline in \Cref{sec:overview}. We further note that in classical distribution testing, where the equivalent problems have all been fully resolved, we see separations in the sample complexities between all problems for both the trace distance (i.e., total variation distance), and the fidelity. We further believe that the study of tailored algorithms for these basic testing routines may be valuable also for the development of other algorithms which aim to be more sample-efficient: this may be achieved by providing a different set of guarantees instead of a stringent guarantee in trace distance or fidelity, which may be chosen to allow for an efficient use of the structure of the problem.

\color{blue}

\color{black}

\subsection{Our Results}

Let us start with our result on certification. Here, given the complete description of a quantum state $\sigma \in \mathbb{C}^{d \times d}$ and samples (copies) of an unknown quantum state $\rho \in \mathbb{C}^{d \times d}$, we show that certification with respect to fidelity can be done with $\widetilde{\Theta}(d^{3/2}/ \eps)$ samples of $\rho$, where $\eps \in (0,1)$ is the distance parameter. If the rank of $\sigma$ is $r$, this can be done with only $\widetilde{\Theta}(r^{3/2}/ \eps)$ samples of $\rho$, independent of the ambient dimension $d$. %
Concretely, we prove the following.

\begin{restatable}[Certification]{theorem}{fididtesting}\label{lemma:fid_id_testing}
Let $\rho, \sigma \in \mathbb{C}^{d \times d}$ be an unknown and a known quantum states, respectively, and $\eps \in (0,1)$ be a parameter. Moreover, $\sigma$ is of rank $r$. %
Then, using non-adaptive single-copy measurements, deciding whether
\begin{align}
        \rho=\sigma
        \quad\text{or}\quad
        F(\rho,\sigma)\leq 1-\varepsilon
\end{align}
with probability at least $0.99$, $\widetilde{O}(r^{3/2}/\varepsilon)$ samples of $\rho$ sufficient. On the other hand, there exist $\sigma$ such that $\widetilde{\Omega}(r^{3/2}/\eps)$ samples are necessary for certification, even for adaptive protocols.   
\end{restatable}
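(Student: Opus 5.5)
Both directions rest on the $\ell_2$-reduction framework sketched in \Cref{sec:overview}: relate the fidelity gap to a Hilbert--Schmidt gap after a suitable transformation, then invoke known single-copy $\ell_2$ certification in the $\sigma$-dependent basis.

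\textbf{Upper bound.}
\begin{enumerate}[label=(\roman*)]
\item \emph{Projection test.} Let $\Pi$ be the projector onto $\operatorname{supp}(\sigma)$, so $\dim \Pi = r$. If $F(\rho,\sigma)\le 1-\eps$, then either $\tr((I-\Pi)\rho)\gtrsim \eps$, or the normalized restriction $\rho'=\Pi\rho\Pi/\tr(\Pi\rho)$ still satisfies $F(\rho',\sigma)\le 1-\Omega(\eps)$. The first case is detected with $O(1/\eps)$ copies by measuring $\{\Pi,I-\Pi\}$ and rejecting on any outcome $I-\Pi$; under $\rho=\sigma$ this outcome never occurs. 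Conditioned on outcome $\Pi$, we obtain copies of $\rho'$, a state on an $r$-dimensional space. It therefore suffices to treat the full-rank case $d=r$.

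\item \emph{Bucketing the spectrum of $\sigma$.} Diagonalize $\sigma=\sum_i \lambda_i\ket{i}\bra{i}$ and group the eigenvalues into $O(\log(r/\eps))$ dyadic buckets $B_k=\{i:\lambda_i\in(2^{-k-1},2^{-k}]\}$. Eigenvalues below $\eps/r$ are merged into a single light part; this costs only $O(\eps)$ in fidelity, which the tail-mass test of step (i) handles.

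\item \emph{Reduction to Hilbert--Schmidt distance.} We use $1-F\le \tfrac12\|\sqrt{\rho}-\sqrt{\sigma}\|_2^2$ together with a chi-square/Bures-type bound
\[
1-F(\rho,\sigma)\lesssim \sum_{k,l}\frac{\|P_k(\rho-\sigma)P_l\|_2^2}{2^{-k}+2^{-l}},
\]
where $P_k$ is the projector onto bucket $B_k$. This step needs care, since fidelity is not a sum over blocks; I would prove the bound via the operator-convexity/Bures metric bound $1-F\le \tfrac12\,\tr\bigl((\rho-\sigma)\,\Omega_\sigma^{-1}(\rho-\sigma)\bigr)$ in the local regime, and handle the far regime by a mixture argument. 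Consequently, if $1-F\ge\eps$, some block pair $(k,l)$ has $\|P_k(\rho-\sigma)P_l\|_2^2\gtrsim \eps\,2^{-\min(k,l)}/\log^2$.

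\item \emph{Block-wise testing.} Each block is tested with the quantum $\ell_2$ certifier of \cite{chen2022tightboundsquantumstate,bubeck_entanglement_2020}, which uses random-basis measurements and needs about $\sqrt{m}\,/\,\text{gap}$ samples on dimension $m$. Only a fraction about $2^{-k}|B_k|$ of the samples of $\rho$ land in block $B_k$; I would obtain block restrictions by measuring $\{P_k\}$ first, or $\{P_k+P_l\}$ for off-diagonal pairs. Balancing these factors, with $|B_k|\le \min(r,2^k)$, gives $\widetilde O(r^{3/2}/\eps)$ per block, and a union bound over the polylogarithmically many blocks completes the argument. All measurements are fixed in advance, so the protocol is non-adaptive.
\end{enumerate}

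\textbf{Lower bound.} Take $\sigma = I_r/r$ embedded in $\mathbb C^d$. The Paninski-type hard instances of \cite{chen2022tightboundsquantumstate}, $\rho = (I+\delta U D U^\dagger)/r$ with $D$ a random $\pm1$ diagonal and $U$ Haar-random, have trace distance about $\delta$. Their fidelity satisfies $1-F\approx \delta^2/8$, so choosing $\delta=\Theta(\sqrt{\eps})$ gives $1-F\ge\eps$. The known adaptive lower bound of \cite{chen_when_2023} of $\Omega(r^{3/2}/\delta^2)$ for this ensemble, stated for states on an $r$-dimensional space, then yields $\widetilde\Omega(r^{3/2}/\eps)$.

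\textbf{Main obstacle.} I expect step (iii) of the upper bound to be the hardest: proving the block decomposition bound on infidelity rigorously, including the off-diagonal coherences between light and heavy buckets, while losing only polylogarithmic factors.
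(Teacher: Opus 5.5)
Your architecture matches the paper's: dyadic bucketing of $\sigma$, the bound $1-F(\rho,\sigma)\le D_{\chi^2}(\rho\|\sigma)$ split into block pairs, the subspace $\ell_2$ tester on each pair, and a lower bound inherited from the adaptive trace-distance mixedness bound. The paper goes through Fuchs--van de Graaf, while you compute the fidelity of the hard instances directly; the two are equivalent.

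Where you genuinely differ is the rank reduction, and your route is cleaner. Since $\sqrt{\sigma}=\sqrt{\sigma}\,\Pi$, one has $F(\rho,\sigma)=\tr\sqrt{\sqrt{\sigma}\,\Pi\rho\Pi\,\sqrt{\sigma}}=\sqrt{\tr(\Pi\rho)}\,F(\rho',\sigma)$. So in the far case, $\tr((I-\Pi)\rho)\le\eps$ forces $F(\rho',\sigma)\le\sqrt{1-\eps}\le 1-\eps/2$. You should state this identity explicitly, because it is the entire justification of step (i). Implementing (i) with the single POVM $\{\Pi M_z\Pi\}_z\cup\{I-\Pi\}$ keeps the protocol non-adaptive and costs only $O(1/\eps)$ extra copies.

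The paper instead does the following:
\begin{itemize}
\item It spends $\widetilde O(r^{3/2}/\eps)$ copies to force $\tr((I-\Pi_+)\rho)=O(\eps/r^{3/2})$.
\item It keeps the kernel as an extra bucket $L$.
\item It controls the support--kernel coherences $\Pi_L\rho\Pi_{S_x}$ via the auxiliary state $\tilde\sigma$, the Bures triangle inequality, a modified depolarization, and an additional bucket-count check.
\end{itemize}
Your identity shows those coherences never affect $F(\cdot,\sigma)$, so all of that machinery is avoidable.

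Two steps need repair.

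\emph{Step (ii) does not work as written.} Eigenvalues of $\sigma$ below $\eps/r$ carry nonzero $\sigma$-mass, so the reject-on-any-hit test of (i) cannot absorb them. Keeping them as a bucket instead makes the denominators $\sigma_i+\sigma_j$, and hence the required $\ell_2$ precision, arbitrarily small. The fix is the paper's \Cref{rem:min_ev}: replace both states by $(1-\nu)X+\nu\,\mathbf{1}_r/r$ with $\nu$ a small constant times $\eps$. This map is simulable from samples, preserves $\rho=\sigma$, and keeps $1-F\ge\eps/2$. It also makes all eigenvalues $\Omega(\eps/r)$, so there are $O(\log(r/\eps))$ buckets and the second term in \Cref{idenl2:lemma:equiv_proj} stays $\widetilde O(r/\eps)$.

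\emph{The obstacle you anticipate in step (iii) does not exist.} The inequality $1-F(\rho,\sigma)\le D_{\chi^2}(\rho\|\sigma)$ holds globally (\Cref{lemma:fid_chi2_bound}, from \cite{Flammia2024quantumchisquared}). No local/far split is needed, and your sketched local-plus-mixture argument is not a proof as it stands.

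Two smaller points. If you restrict by first measuring $\{P_k+P_l\}$ and then test normalized restrictions, you must also compare the masses $\tr(P\rho)$ and $\tr(P\sigma)$. The paper's \Cref{idenl2:lemma:equiv_proj} handles this automatically through the $\tr[M]^2$ term of the unnormalized statistic. Also, the adaptive $\Omega(r^{3/2}/\delta^2)$ mixedness lower bound is due to \cite{chen2022tightboundsquantumstate}, not \cite{chen_when_2023}.
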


Note that the success probability in the above theorem is arbitrary and can be improved to any $1- \delta$ for $\delta \in (0,1)$, with an additional multiplicative $\log 1/\delta$ factor by using a majority voting argument. The testing algorithm underlying the above result is non-adaptive. The lower bound follows from a reduction to mixedness testing~\footnote{A quantum state $\rho \in \mathbb{C}^{d \times d}$ is called \emph{maximally mixed} if $\rho= \mathds{1}/d$. In mixedness testing, the goal is to distinguish whether an unknown state $\rho$ is maximally mixed or far from it.}.

Next we move to equivalence testing.  Unlike certification, here both quantum states $\rho$ and $\sigma$ are unknown, and the goal is to distinguish whether $\rho=\sigma$ or the fidelity between them is at most $1- \eps$. We show that $\widetilde{O}(\min\{d^{3/2}/\varepsilon^2,d^{9/4}/\varepsilon\})$ samples of $\rho$ and $\sigma$ are sufficient for fidelity testing. Unlike our certification result, the testing algorithm is adaptive.

\begin{restatable}[Equivalence Testing]{theorem}{fideqtesting}\label{lemma:fid_eq_testing}
Let $\rho, \sigma \in \mathbb{C}^{d \times d}$ be two unknown quantum states and $\eps \in (0,1)$ be a parameter. Then, using single-copy adaptive measurements, deciding whether
\begin{align}
        \rho=\sigma
        \quad\text{or}\quad
        F(\rho,\sigma)\leq 1-\varepsilon
\end{align}
with probability at least $0.99$, $\widetilde{O}(\min\{d^{3/2}/\varepsilon^2,d^{9/4}/\varepsilon\})$ samples of $\rho$ and $\sigma$ are sufficient.    
\end{restatable}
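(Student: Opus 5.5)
The bound $\widetilde{O}(d^{3/2}/\eps^2)$ is immediate from existing results. By Fuchs--van de Graaf, $F(\rho,\sigma)\le 1-\eps$ implies $\|\rho-\sigma\|_{\tr}\ge 1-\sqrt{F}\ge \eps/2$ (with $F$ the squared fidelity; for root fidelity the constant changes only). So the non-adaptive trace-distance equivalence tester of \cite{chen2022tightboundsquantumstate} with gap $\eps/2$ suffices. The real work is the $\widetilde{O}(d^{9/4}/\eps)$ bound, which I would obtain by adapting the partial-learning/$\ell_2$-reduction framework of \cite{DBLP:conf/focs/DiakonikolasK16} to the quantum case, mirroring how classical closeness testing in Hellinger distance is done.

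\textbf{Step 1 (adaptive partial learning).} Use $m$ samples of $\sigma$, or of the mixture $\tfrac12(\rho+\sigma)$, in an adaptive tomography routine in operator norm, as in the adaptive fidelity tomography of \cite{chen_when_2023}. The goal is an estimate $\hat\tau$ with eigenbasis $\{|v_i\rangle\}$ such that each eigenvalue $\lambda_i$ is learned to multiplicative accuracy, up to an additive floor $\delta\approx d/m$. Then define the unitary ``flattening'' / reweighting map $\Phi(X)=\hat\tau_\delta^{-1/2}X\hat\tau_\delta^{-1/2}$, with $\hat\tau_\delta=\hat\tau+\delta\mathds{1}$, in analogy with the classical splitting of heavy elements. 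The relevant inequality to prove is that fidelity/Bures-type distance is controlled by the weighted Hilbert--Schmidt distance:
\begin{align}
1-F(\rho,\sigma)\lesssim \big\|\hat\tau_\delta^{-1/2}(\rho-\sigma)\hat\tau_\delta^{-1/2}\big\|_2^2\cdot \max_i(\lambda_i+\delta)+\text{(contribution of light part)}.
\end{align}
This would follow from a bound of the form $1-F\le \tfrac12\chi^2$-type quantities relative to $\tfrac12(\rho+\sigma)$, together with the operator-norm approximation $\hat\tau_\delta\approx \tfrac12(\rho+\sigma)+\delta\mathds{1}$.

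\textbf{Step 2 (reduction to $\ell_2$ testing).} Implement $\Phi$ physically by measuring in the basis $\{|v_i\rangle\}$ combined with randomized measurements on blocks, so that the effective states have small purity. The purity should scale as $\operatorname{tr}(\tilde\rho^2)\lesssim 1/d$ in the flattened picture. Then invoke a quantum $\ell_2$/Hilbert--Schmidt equivalence tester whose cost scales like $\sqrt{d}\cdot\max(\operatorname{tr}\tilde\rho^2)^{1/2}/\eps_2^2$, as in \cite{chen2022tightboundsquantumstate,yu:LIPIcs.ITCS.2021.11}.

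\textbf{Step 3 (balancing).} With $m$ learning samples, the floor is $\delta\sim d^{2}/m$ per direction in the worst case (tomography cost $d^3$). The $\ell_2$ testing cost after flattening is about $d^{3/2}\cdot(\text{flattening distortion})/\eps$. Choosing $m\approx d^{9/4}/\eps$ should balance the two terms and give the claimed bound.

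The main obstacle is Step 1's inequality. In the quantum case $\rho$ and $\sigma$ need not commute with $\hat\tau$, so the off-diagonal blocks between heavy and light eigenspaces of $\hat\tau$ contribute to infidelity in ways a simple reweighting may not capture. I would first try to bound $1-F$ by the quantum $\chi^2$-divergence with respect to $\hat\tau_\delta$, using operator monotonicity. The light eigenspace would then be handled separately, by a direct trace-distance test of the projected mass: if the light part carries infidelity $\gtrsim\eps$, then it carries trace distance $\gtrsim\eps$ in a subspace where states have small weight. Getting these cross terms to cost only $d^{9/4}/\eps$ is where I expect the exponent $9/4$ to come from, and it is the step most likely to need adjustment.
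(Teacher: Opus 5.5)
\textbf{Verdict: genuine gap.} Your $\widetilde{O}(d^{3/2}/\eps^2)$ branch is fine. Since $1-F(\rho,\sigma)\le\|\rho-\sigma\|_{\tr}$, the existing non-adaptive trace-distance tester applies with gap $\eps$. (The paper gets this branch from its own algorithm through the factor $\min\{d\eta,1\}$, but your route is legitimate.) The $\widetilde{O}(d^{9/4}/\eps)$ branch, however, has a genuine gap.

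\textbf{Step~2 has no valid realization.} The map $X\mapsto\hat\tau_\delta^{-1/2}X\hat\tau_\delta^{-1/2}$ is neither unitary nor trace preserving, so there are no ``flattened'' states to sample. The surrogates you describe (measuring in $\{\ket{v_i}\}$, or projecting onto blocks and randomizing inside them) dephase across blocks. They therefore discard $\Pi_B(\rho-\sigma)\Pi_{B'}$ for $B\neq B'$, which can carry all of the infidelity: take $\rho,\sigma$ with equal block-diagonal parts but different coherences.

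\textbf{The Step~1 inequality is too lossy.} Write $\mu_i=\lambda_i+\delta$. Then $\max_l\mu_l\,\|\hat\tau_\delta^{-1/2}\Delta\hat\tau_\delta^{-1/2}\|_2^2=\sum_{i,j}|\Delta_{ij}|^2\max_l\mu_l/(\mu_i\mu_j)$. This weight exceeds the $\chi^2$ weight $1/(\mu_i+\mu_j)$ by at least $\max_l\mu_l/\min\{\mu_i,\mu_j\}$, so testing it at gap $\eps$ is far costlier than necessary.

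The paper never flattens. It uses $1-F\le D_{\chi^2}(\rho\|\sigma)\le 2(k+1)\sum_j\|\Pi_j^+(\rho-\sigma)\Pi_j^+\|_2^2/s_j^{\min}$ with \emph{nested} projectors $\Pi_j^+$, so all cross-bucket coherences are included. It then runs the subspace $\ell_2$ tester of Lemma~\ref{idenl2:lemma:equiv_proj} separately on each projector, with precision adapted to the bucket.

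\textbf{The non-commutativity you call the main obstacle is the core of the proof, and you give no mechanism for it.} Learning eigenvalues to multiplicative accuracy does not resolve it, because eigenvectors in nearby eigenvalue ranges are not identifiable. The paper handles it in three steps:
\begin{itemize}
\item It learns the buckets of $\sigma$ iteratively, applying the estimator of \cite{chen_when_2023} to the residual projector $\hat\Pi_j^+$ with operator-norm error $\alpha\sqrt{\eta\,s_j^{\max}}$ (Lemma~\ref{lemma:bucketing_guarantee}).
\item It proves by a recursion that the true tail $\Pi_i^+$ barely overlaps any learned bucket at least two levels above it: $\|\Pi_i^+\hat\Pi_j\|_{\opn}\lesssim\sqrt{\eta/s_j^{\min}}$ for $j<i-1$ (Lemma~\ref{lemma:equiv:overlap}).
\item It tests on the shifted projectors $\hat M_i=\hat\Pi_{i-2}^+$. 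Each cross term between $\Pi_i^+$ and an earlier $\hat\Pi_j$ is then bounded by this small overlap times the guarantee already certified for $\hat M_j$ (Lemma~\ref{lemma:test_approximate}).
\end{itemize}

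\textbf{Step~3 does not actually derive the exponent.} You give the additive floor as $d/m$ in one place and $d^2/m$ in another. In the paper the exponent arises as follows:
\begin{itemize}
\item Learning down to threshold $\eta$ costs $\widetilde O(d\min\{d\eta,1\}/\eta^2)$.
\item The bottom bucket has trace $\lesssim d\eta$ but must be tested to precision about $\eps\, s_k^{\min}$, where $s_k^{\min}\approx\eps/d$ after depolarization. This costs $\widetilde O(d^{3/2}\min\{d\eta,1\}/\eps^2)$.
\item Choosing $\eta\approx\eps/d^{1/4}$ balances both costs at $d^{9/4}/\eps$.
\end{itemize}
Your idea of handling the light part by a trace-distance-type test on its subspace gives the same accounting, $\sqrt{d}\cdot d\eta/(\eps^2/d)$. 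It only works once the heavy--light cross terms are controlled, which is exactly the missing piece.
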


As mentioned before, since certification and equivalence testing with respect to trace distance require $\Theta(d^{3/2}/\eps^2)$ samples, the above result shows improvement over testing in trace distance for the regime $\varepsilon<1/d^{3/4}$. A comparison with prior results is presented in \Cref{tab:overview_cert_equiv}. Note that while the result of \cite{chen_toward_2021} is stated for quantum state certification, their testing algorithm can be used for quantum equivalence testing as well and achieves the same bound.

\begin{table}[h!]
    \centering
    \renewcommand{\arraystretch}{1.3}
    \setlength{\arrayrulewidth}{0.5pt}
    \begin{tabular}{|c|c|c|c|c|}
        \hline
        & \multicolumn{2}{c|}{Identity Testing (Certification)}
        & \multicolumn{2}{c|}{Equivalence Testing} \\
        \hline
        & $\ell_1/\text{Trace distance}$
        & $D_H^2/\text{Fidelity}$
        & $\ell_1/\text{Trace distance}$
        & $D_H^2/\text{Fidelity}$ \\
        \hline

        \rule{0pt}{3ex}\multirow{ 2}{*}{Classical}
        & $\Theta\left(\frac{d^{1/2}}{\varepsilon^2}\right)$
        & $\Theta\left(\frac{d^{1/2}}{\varepsilon}\right)$
        & $\widetilde\Theta\left(\max\left\{\frac{d^{1/2}}{\varepsilon^2},\frac{d^{2/3}}{\varepsilon^{4/3}}\right\}\right)$
        & $\widetilde\Theta\left(\min\left\{\frac{d^{3/4}}{\varepsilon},\frac{d^{2/3}}{\varepsilon^{4/3}}\right\}\right)$ \\
        
        & \cite{paninski_liam}
        & \cite{paninski_liam, sublinearly}
        & \cite{cdvv}
        & \cite{DBLP:conf/focs/DiakonikolasK16} 
        \\
        \hline
        \rule{0pt}{3ex}
        \multirow{ 2}{*}{\makecell{Quantum\\(adaptive)}}
        & $\Theta\left(\frac{d^{3/2}}{\varepsilon^2}\right)$
        & \cellcolor{yellow!15}$\widetilde{\Theta}\left(\frac{d^{3/2}}{\varepsilon}\right)$
        & $\widetilde{\Theta}\left(\frac{d^{3/2}}{\varepsilon^2}\right)$
        & \cellcolor{yellow!15}$\widetilde{O}\left(\min\left\{\frac{d^{3/2}}{\varepsilon^2}, \frac{d^{9/4}}{\varepsilon}\right\}\right)$ \\
        
        & \cite{chen_toward_2021}
        & \cellcolor{yellow!15}[this work]
        & \cite{chen_toward_2021}
        & \cellcolor{yellow!15}[this work] \\
        \hline
    \end{tabular}

    \caption{Comparison between sample complexities for testing identity and equivalence, as a function of the dimension $d$ and precision $\varepsilon$.}
    \label{tab:overview_cert_equiv}
\end{table}

We also study testing for correlations between subsystems in a quantum state. As mentioned before, mutual information is the natural measure for such problems. More formally, given sample access to a bipartite quantum state $\rho_{AC}$, we want to distinguish whether the subsystems are independent, or the mutual information is at least $\eps$.%

\begin{restatable}[Mutual Information Testing]{theorem}{mitesting}\label{lemma:mi_testing}
Let $\rho_{AC} \in \mathbb{C}^{d_{AC} \times d_{AC}}$ be an unknown bipartite quantum state with $d_A\geq d_C$, and $\eps \in (0,1)$ be a parameter. Then, using single-copy adaptive measurements on $\rho_{AC}$ and $\rho_A\otimes\rho_C$, distinguishing between
    \begin{align}
        \rho_{AC}=\rho_A\otimes\rho_C\quad\text{and}\quad I(A:C)_{\rho}\geq\varepsilon
    \end{align}
with probability at least $0.99$ can be achieved using $\widetilde{O}(\min\{(d_Ad_C)^{3/2}/\varepsilon^2,d_A^{9/4}d_C^{3/4}/\varepsilon\})$ samples.
       
\end{restatable}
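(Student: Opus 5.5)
The plan is to reduce mutual information testing to the independence tester, i.e.\ to equivalence testing in fidelity between the pair $\rho_{AC}$ and $\rho_A\otimes\rho_C$, and then exploit the product structure of the second state to save samples. The first step is the reduction. We have $I(A:C)_\rho = D(\rho_{AC}\|\rho_A\otimes\rho_C)$. By monotonicity of the sandwiched R\'enyi divergences in $\alpha$, $D\geq D_{1/2} = -\log F(\rho_{AC},\rho_A\otimes\rho_C)$ is the wrong direction, so instead we need an upper bound of the relative entropy by a fidelity-type quantity. Since $\rho_{AC}\leq d_C\,\rho_A\otimes \mathds{1}_C$, the operator $\rho_A\otimes\rho_C$ need not dominate $\rho_{AC}$. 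Still, mutual information satisfies $I(A:C)\leq 2\log d_C$, so the max-divergence-type ratio is bounded. Therefore we plan to use an inequality of the form $D(\rho\|\sigma)\leq O(\log(d_C)+\log(1/\varepsilon))\cdot(1-F(\rho,\sigma))$, possibly after mixing $\sigma$ with a small amount of $\rho$ (as in the classical Diakonikolas--Kane mutual information reduction). Taking $I\geq\varepsilon$ then yields $1-F\geq \widetilde{\Omega}(\varepsilon)$, with only logarithmic loss. This reduces the problem to distinguishing $\rho_{AC}=\rho_A\otimes\rho_C$ from $F\leq 1-\widetilde{\Omega}(\varepsilon)$.

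The first term of the minimum then follows directly. Applying \Cref{lemma:fid_eq_testing} (or the trace-distance tester of \cite{chen_toward_2021}, using $1-F\leq\|\cdot\|_{\tr}$) with $d=d_Ad_C$ gives $\widetilde{O}((d_Ad_C)^{3/2}/\varepsilon^2)$.

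For the second term, we run the partial-learning plus $\ell_2$-reduction framework with learning done only on the $A$ side. We will use adaptive fidelity tomography of $\rho_A$ with $\widetilde{O}(d_A^3/\varepsilon')$ copies; note that $\rho_C$ is smaller. This yields a basis in which we split the eigen-directions of $\rho_A$ into light and heavy buckets, and we flatten/rescale so that the operator norm of the rescaled $\rho_A\otimes\rho_C$ is controlled. Next we test the rescaled states in Hilbert--Schmidt distance using the single-copy $\ell_2$ tester, whose cost scales like $d^{1/2}\|\cdot\|_{\opn}$-dependent terms times $1/\text{gap}^2$. We then balance the learning budget $d_A^3/\varepsilon'$ against the testing cost to obtain $d_A^{9/4}d_C^{3/4}/\varepsilon$. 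This balance mirrors the $d^{9/4}$ derivation for equivalence testing, with only $d_A$ learned and $d_C$ entering through the HS tester's dimension dependence.

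The main obstacle is the conversion inequality between $I$ and $1-F$, together with showing that the $A$-only partial learning suffices. The difficulty is that a fidelity gap on $AC$ must survive rescaling by a learned operator acting only on $A$. For the latter, I would first try bounding $\rho_{AC}$ via $\rho_{AC}\leq d_C\,\rho_A\otimes\mathds{1}$, so that the light eigenspaces of $\rho_A$ carry little mass in both states.
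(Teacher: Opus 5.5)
There is a genuine gap in your second branch: learning only on the $A$ side cannot reach the $d_A^{9/4}d_C^{3/4}/\eps$ term. The fidelity gap is certified through $D_{\chi^2}(\rho_{AC}\|\rho_A\otimes\rho_C)$, whose weights are $1/(a_ic_j+a_kc_l)$ in the product eigenbasis. Each $\ell_2$ test therefore has to be run at a precision proportional to the \emph{smallest} eigenvalue of $\rho_A\otimes\rho_C$ on the tested subspace. The bottleneck is the bottom of the spectrum, not the operator norm that your ``flatten/rescale'' step is meant to control. Suppose you only bucket $\rho_A$ and test on $\Pi^A_u\otimes\mathds{1}_C$. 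Even after depolarizing so that $c_{\min}=\Theta(\eps/d_C)$, that smallest eigenvalue is about $a_u\eps/d_C$. Meanwhile the cost of \Cref{idenl2:lemma:equiv_proj} scales with $\tr[(\Pi^A_u\otimes\mathds{1}_C)(\rho_A\otimes\rho_C)]=\tr[\Pi^A_u\rho_A]\le\min\{d_Aa_u,1\}$. This gives roughly $\sqrt{d_Ad_C}\,d_C\min\{d_Aa_u,1\}/(\eps^2a_u)$, which is $\Theta((d_Ad_C)^{3/2}/\eps^2)$ in the worst case already for the well-learned buckets, and worse for the light bucket. So you are thrown back to the first term of the minimum.

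The missing idea is to bucket \emph{both} marginals and compose the buckets. The paper does the following:
\begin{itemize}
\item It runs \Cref{alg:equiv_bucketing} on $\rho_A$ down to $\eta_A=\Theta(\eps/(d_A^{1/4}d_C^{3/4}))$.
\item It runs it on $\rho_C$ essentially completely, with $\eta_C=\Theta(\eps/d_C)$. This costs $\widetilde O(d_C^3/\eps)$, which fits the budget because $d_C\le d_A$.
\item It tests the unions of product buckets $\hat M_{ij,kl}$ at precision $\eps\, s^{\max}_{ij,kl}(d_C/d_A)^{3/4}$. The extra factor is chosen so that $\eta_A(d_C/d_A)^{3/4}=\Theta(\eps/d_A)$.
\end{itemize}
That choice makes the leakage of the imperfectly learned $A$-buckets into the true buckets negligible. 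The leakage is controlled via the overlaps $\|\Pi^{A,+}_x\hat\Pi^A_\alpha\|_{\opn}\lesssim\sqrt{\eta_A/s^{\min}_{A,\alpha}}$ (\Cref{lemma:mi_test_approximate}). This leakage analysis is the bulk of the proof and has no counterpart in your sketch; it is what actually yields the balance you assert. Note also that a fidelity-tomography guarantee for $\hat\rho_A$ alone does not supply the per-bucket operator-norm bounds it needs. You need the internal guarantees of \Cref{lemma:bucketing_guarantee}.

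Your reduction step is also not sound as justified. The bound $I(A:C)\le 2\log d_C$ does not bound any ratio between $\rho_{AC}$ and $\rho_A\otimes\rho_C$. Classically, $P_{AC}(0,0)=\delta$, $P_{AC}(1,1)=1-\delta$ has ratio $1/\delta$ at $(0,0)$. The paper instead uses the Flammia--O'Donnell bound $D(\rho\|\sigma)\le 4(2+D_\infty^{\textnormal{Ren}}(\rho\|\sigma))(1-F(\rho,\sigma))$. Here $D_\infty^{\textnormal{Ren}}$ is the maximal eigenvalue ratio $\rho_i/\sigma_j$ over eigenvector pairs with nonzero overlap. The paper makes this $O(\log(d_Ad_C/\eps))$ by locally depolarizing both marginals with strength $\Theta(\eps)$, which keeps product states product and lower-bounds the spectrum of $\rho_A\otimes\rho_C$. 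Mixing $\sigma$ with a little of $\rho$, as you suggest, bounds $D_{\max}(\rho\|\sigma)$ but not this eigenvalue ratio, so it does not plug into that inequality.
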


Of course, independence testing might also be expressed with respect to different distance measures. The problem has been studied (and resolved) with respect to the trace distance, see \Cref{tab:overview_independence} for an overview and a comparison to the problem in distribution testing.

\begin{table}[h!]
    \centering
    \renewcommand{\arraystretch}{1.3}
    \setlength{\arrayrulewidth}{0.5pt} %
    \begin{tabular}{|c|c|c|}
        \hline
        & \multicolumn{2}{c|}{Independence Testing} \\
        \hline
        & $\ell_1$/Trace distance
        & $I(A:C)$ (i.e., $D_H^2$/Fidelity) \\
        \hline
        
        \rule{0pt}{4ex}\multirow{ 2}{*}{Classical}
        & $\widetilde\Theta\left(
            \max\left\{
                \frac{(d_A d_C)^{1/2}}{\varepsilon^2},
                \frac{d_A^{2/3}d_C^{1/3}}{\varepsilon^{4/3}}
            \right\}
        \right)$
        & $\widetilde\Theta\left(
            \min\left\{
                \frac{d_A^{3/4}d_C^{1/4}}{\varepsilon},
                \frac{d_A^{2/3}d_C^{1/3}}{\varepsilon^{4/3}}
            \right\}
        \right)$ \\
        
        & \cite{DBLP:conf/focs/DiakonikolasK16}
        & \cite{pmlr-v291-seyfried25a} \\
        \hline
        
        \rule{0pt}{4ex}\multirow{ 2}{*}{\makecell{Quantum\\(adaptive)}}
        & $\Theta\left(\frac{(d_A d_C)^{3/2}}{\varepsilon^2}\right)$
        & \cellcolor{yellow!15}$\widetilde{O}\left(
            \min\left\{
                \frac{(d_A d_C)^{3/2}}{\varepsilon^2},
                \frac{d_A^{9/4}d_C^{3/4}}{\varepsilon}
            \right\}
        \right)$ \\
        
        & \cite{yu:LIPIcs.ITCS.2021.11}
        & \cellcolor{yellow!15}[this work] \\
        \hline
    \end{tabular}

    \caption{Comparison between sample complexities for independence testing, without loss of generality assuming $d_A \ge d_C$.}
    \label{tab:overview_independence}
\end{table}

Using a known reduction introduced by Flammia and O'Donnell \cite{Flammia2024quantumchisquared}, we reduce this problem to independence testing in fidelity. The same paper also gives an algorithm for mutual information testing, but in the joint-measurement setting (although their techniques can also be applied to the single-copy setting). Conceptually, they perform a testing-by-learning approach, where they first learn $\rho_A$ and $\rho_C$, resulting in $\hat\rho_A$ and $\hat\rho_C,$ and then perform a certification task between $\rho_{AC}$ and $\hat\rho_A\otimes\hat\rho_C$. Importantly, they incorporate an overhead in the precision learned results in a dimensional scaling which is worse than learning. Our approach uses only partial learning and avoids this overhead. Moreover, for the special case where $d_A$ and $d_C$ are (roughly) of the same size, $d_A=\widetilde{\Theta}(d_C)$, the sample complexities for independence testing give us 
\begin{align}
    \text{trace distance:~}\Theta\left(\frac{d_A^3}{\eps^2}\right),\quad \text{fidelity:~}\widetilde{\Theta}\left(\frac{d_A^3}{\eps}\right).
\end{align}
This is exactly the cost of learning the subsystem, and in general, our algorithm is more efficient than learning the larger subsystem. The optimality for the fidelity problem follows since any improvement would imply a better trace distance algorithm, which was already known to be optimal. In this special case, we further observe the `unconditional' $1/\eps$ improvement we also see in tomography when comparing trace distance and fidelity.

\ssnotes{please change this part}

\color{black}

Now we discuss our lower bound result. We show that equivalence testing with respect to fidelity using single-copy non-adaptive measurements requires $\widetilde{\Omega}(1/\eps^2)$ samples.

\begin{restatable}[Equivalence Testing Lower Bound for Qubits]{theorem}{lbnonadaptive}\label{theo:lbnonadaptive}
Let $\rho, \sigma \in \mathbb{C}^{d \times d}$ be two unknown quantum states and $\eps \in (0,1)$ be a parameter. Then, using single-copy non-adaptive measurements, deciding whether
\begin{align}
        \rho=\sigma
        \quad\text{or}\quad
        F(\rho,\sigma)\leq 1-\varepsilon
\end{align}
with probability at least $0.99$, $\widetilde{\Omega}(1/\eps^2)$ samples of $\rho$ and $\sigma$ are necessary in general, even for qubits.    
\end{restatable}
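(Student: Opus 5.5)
The plan is to build a hard instance on qubits in which the tester must detect a small \emph{mixing} perturbation of a pure state. A non-adaptive tester cannot align its measurements with that state, so the perturbation carries only $O(\varepsilon^2\log(1/\varepsilon))$ information per copy.

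\textbf{Hard instance and reduction.} Take $\sigma=|\psi\rangle\langle\psi|$ with $\psi$ drawn Haar-randomly on the Bloch sphere, with Bloch vector $v$. In the YES case set $\rho=\sigma$. In the NO case set
\begin{align}
\rho=(1-\varepsilon)|\psi\rangle\langle\psi|+\varepsilon|\psi^\perp\rangle\langle\psi^\perp|,
\end{align}
so $F(\rho,\sigma)=\langle\psi|\rho|\psi\rangle=1-\varepsilon$. Here I use the squared-fidelity convention; with the root convention one rescales $\varepsilon$ by a constant. Samples of $\sigma$ reveal nothing beyond $\psi$, so I strengthen the tester by handing it $\psi$ \emph{after} all measurement outcomes are recorded. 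Because the measurements are non-adaptive, they are fixed before and independently of $\psi$; shared randomness can be fixed by averaging. It therefore suffices to show that, for fixed measurements $M_1,\dots,M_n$,
\begin{align}
\E_\psi\big[\mathrm{KL}(P^{\mathrm{NO}}_\psi\,\|\,P^{\mathrm{YES}}_\psi)\big]=o(1)
\quad\text{unless}\quad n=\widetilde\Omega(1/\varepsilon^2).
\end{align}
Here $P_\psi$ denotes the product distribution of the $n$ outcomes. By Pinsker and convexity this bounds the advantage of any test that also knows $\psi$. The KL divergence tensorizes, so I only need to bound the per-copy divergence averaged over $\psi$.

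\textbf{Per-copy bound.} By data processing, refine each qubit POVM into rank-one elements $E_k=w_k(\mathds 1+n_k\cdot\vec\sigma)$ with $|n_k|=1$ and $\sum_k w_k n_k=0$. Outcome $k$ then has probability $p_k=w_k(1+c_k)$ under $\sigma$ and $q_k=w_k(1+(1-2\varepsilon)c_k)$ under $\rho$, where $c_k=n_k\cdot v$. I bound KL by $\chi^2$:
\begin{align}
\sum_k\frac{(q_k-p_k)^2}{p_k}=\sum_k w_k\,\frac{4\varepsilon^2c_k^2}{1+c_k}.
\end{align}
This diverges when $c_k\to-1$, because the YES outcome then has probability zero. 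To avoid this I use the reverse direction $\chi^2(P^{\mathrm{YES}}\|P^{\mathrm{NO}})$, whose denominators are $1+(1-2\varepsilon)c_k\ge 2\varepsilon$. Alternatively one can slightly depolarize both hypotheses by $\varepsilon$, which changes the fidelity gap only by a constant factor. The per-element term then becomes at most
\begin{align}
w_k\,\frac{4\varepsilon^2c_k^2}{1+c_k+\varepsilon}.
\end{align}
Over the uniform sphere, $c_k$ is uniform on $[-1,1]$, so
\begin{align}
\E_v\!\left[\frac{\varepsilon^2}{1+c_k+\varepsilon}\right]=O\big(\varepsilon^2\log(1/\varepsilon)\big).
\end{align}
Since $\sum_k w_k=1$, each copy contributes $O(\varepsilon^2\log(1/\varepsilon))$ in expectation. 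Summing over copies gives $n=\Omega(1/(\varepsilon^2\log(1/\varepsilon)))$.

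\textbf{Main obstacle.} The delicate step is controlling the near-aligned outcomes with $c_k\approx-1$. There the per-copy divergence is only $\Theta(\varepsilon)$, which is why adaptive or aligned measurements achieve $1/\varepsilon$. The proof must show that randomizing $\psi$ makes such alignment rare, costing only the logarithmic factor. This requires choosing the divergence direction, or a small depolarization, so that the bound stays finite. It also requires the averaging over $\psi$ to be done \emph{per copy, before} tensorizing, which is legitimate precisely because the measurements do not depend on $\psi$. Finally, the case $d>2$ follows by embedding the qubit instance.
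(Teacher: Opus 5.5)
Your argument is correct, and it takes a different and considerably shorter route than the paper.

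\textbf{The paper's route.} It uses the instance $U\mathrm{diag}(1-\eps,\eps)U^\dagger$ versus $U\mathrm{diag}(1-2\eps,2\eps)U^\dagger$. It moves the absolute value inside the Haar integral, which, like your step, amounts to revealing $U$ to the tester. It then controls the likelihood ratio $\prod_i q_i/p_i$ directly for each fixed $U$, in several steps:
\begin{itemize}
\item it buckets measurement elements by bias using Markov's inequality and a union bound;
\item it declares transcripts typical when no outcome has probability below $\eps\log(1/\eps)$;
\item it bounds the mean of the log-likelihood ratio and applies Bernstein's inequality to its fluctuations;
\item it sums exponential tails and treats atypical transcripts separately.
\end{itemize}
This ends with indistinguishability for $N\lesssim 1/(\eps^2\log^4(1/\eps))$.

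\textbf{Your route.} You pass to an additive divergence. Because the measurements are fixed independently of $\psi$, each rank-one element's overlap $c_k$ is marginally uniform on $[-1,1]$. Linearity of expectation then lets you average the per-copy $\chi^2$ over $\psi$, with no need to control correlations between elements. KL tensorization, Pinsker and Jensen finish the argument.

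\textbf{What this buys.} It removes all of the typicality and concentration machinery. It gives the sharper bound $\Omega(1/(\eps^2\log(1/\eps)))$. It also applies verbatim to the paper's instance, where the $\chi^2$ denominators are $\gtrsim\eps$ in either direction. The paper's pointwise control of the likelihood ratio is simply more than is needed here.

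\textbf{Two small fixes.}
\begin{itemize}
\item You announce the target $\mathbb{E}_\psi[\mathrm{KL}(P^{\mathrm{NO}}_\psi\|P^{\mathrm{YES}}_\psi)]$. However, after switching to $\chi^2(P^{\mathrm{YES}}\|P^{\mathrm{NO}})$, you are upper bounding $\mathrm{KL}(P^{\mathrm{YES}}\|P^{\mathrm{NO}})$, not the reverse. State that you switch the KL direction too; Pinsker and tensorization hold in either order, so nothing else changes.
\item Your per-element bound with denominator $1+c_k+\eps$ holds only up to a constant factor. It fails without one for $c_k>-1/2$. This is harmless.
\end{itemize}
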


We formally prove this result in \Cref{sec:lower_equiv_fidelity_qubits}. Combined with our adaptive upper bound for equivalence testing, this indeed demonstrates the power of adaptivity in equivalence testing with respect to fidelity.

\color{black}

\color{blue}

\color{black}

Finally, we also revisit some results from Markov chain testing, where we can use our subroutine for equivalence testing. These results follow a line similar to the work of \cite{DBLP:journals/tit/GaoY25}, who studied the same problems under multi-copy measurements. The main idea is that for certification and equivalence testing, it is sufficient to test marginals of the Markov chain (see \Cref{lemma:cmi_decomposition}).

\begin{restatable}[Markov Chain Testing]{theorem}{lemmarkov}\label{lem:markov}

Let $\rho_{ABC}$ and $\sigma_{ABC}$ be two quantum Markov chains over $H_A \otimes H_B \otimes H_C$. In order to distinguish whether  
\begin{equation}
\rho_{ABC}=\sigma_{ABC}
\quad \text{or}\quad
F(\rho_{ABC},\sigma_{ABC})\leq 1- \eps,
\end{equation}
with probability at least $0.99$, we have the following results:
\begin{enumerate}
    \item[(i)] If the Markov chain $\sigma_{ABC}$ is known and we have sample access to $\rho_{ABC}$, then 
\begin{equation}
\widetilde{O}\left(\max\left\{\frac{d_A^{3/2} d_B^{3/2}}{\eps}, \frac{d_B^{3/2} d_C^{3/2}}{\eps}\right\}\right)
\end{equation}
samples of $\rho_{ABC}$ are sufficient for certification.    
    
    \item[(ii)] If both the Markov chains $\rho_{ABC}$ and $\sigma_{ABC}$ are unknown and we have sample access to them, then 
    \begin{equation}  \widetilde{O}\left(\max\left\{\min\left\{\frac{(d_Ad_B)^{3/2}}{\varepsilon^2},\frac{(d_Ad_B)^{9/4}}{\varepsilon}\right\},\min\left\{\frac{(d_Bd_C)^{3/2}}{\varepsilon^2},\frac{(d_Bd_C)^{9/4}}{\varepsilon}\right\}\right\}\right)
    \end{equation}
    samples are sufficient for equivalence testing.
\end{enumerate}
\end{restatable}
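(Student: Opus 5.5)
The plan is to reduce testing of the tripartite Markov chains to testing of their bipartite marginals, and then to invoke \Cref{lemma:fid_id_testing} and \Cref{lemma:fid_eq_testing} on those marginals.

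\textbf{Step 1: reduction to marginals.} The key structural input is the decomposition \Cref{lemma:cmi_decomposition}. I expect it to say the following. For quantum Markov chains $\rho_{ABC}$ and $\sigma_{ABC}$, the fidelity $F(\rho_{ABC},\sigma_{ABC})$ is bounded below by a function of $F(\rho_{AB},\sigma_{AB})$ and $F(\rho_{BC},\sigma_{BC})$. The mechanism is that a Markov chain is recovered from its marginals by the Petz map, $\rho_{ABC}=\mathcal{R}_{B\to BC}(\rho_{AB})$. The form I would aim for is
\begin{align}
1-F(\rho_{ABC},\sigma_{ABC})\le O\big(1-F(\rho_{AB},\sigma_{AB})\big)+O\big(1-F(\rho_{BC},\sigma_{BC})\big),
\end{align}
possibly after passing through a relative-entropy or Hellinger-type quantity and using the chain rule. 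The contrapositive is what we need. If $F(\rho_{ABC},\sigma_{ABC})\le 1-\eps$, then at least one of the marginal pairs has fidelity at most $1-c\eps$ for a constant $c$. Conversely, if $\rho_{ABC}=\sigma_{ABC}$, then trivially $\rho_{AB}=\sigma_{AB}$ and $\rho_{BC}=\sigma_{BC}$.

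\textbf{Step 2: algorithm.} Samples of $\rho_{AB}$ and $\rho_{BC}$ are obtained by tracing out $C$ or $A$ from copies of $\rho_{ABC}$, and likewise for $\sigma$. We then run two tests, each with parameter $c\eps$ and success probability amplified to $0.995$ at only a constant-factor cost, and we accept only if both accept.

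For part (i), we use the known marginals $\sigma_{AB}$ and $\sigma_{BC}$ and apply \Cref{lemma:fid_id_testing} with the trivial rank bound $r\le d_Ad_B$, respectively $r\le d_Bd_C$. This costs $\widetilde{O}((d_Ad_B)^{3/2}/\eps)$ and $\widetilde{O}((d_Bd_C)^{3/2}/\eps)$ samples.

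For part (ii), we apply \Cref{lemma:fid_eq_testing} to each marginal pair, in dimensions $d_Ad_B$ and $d_Bd_C$. This gives the stated $\min$ expressions.

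A union bound then gives overall success probability $0.99$. Since constants in $\eps$ are absorbed, the total cost is the sum of the two test costs, which is within a factor $2$ of the maximum. That matches the claimed bounds.

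\textbf{Main obstacle.} The difficulty is entirely in Step 1, namely establishing a \emph{linear} (not square-root) relation between the tripartite infidelity and the marginal infidelities. If \Cref{lemma:cmi_decomposition} is phrased in terms of the relative entropy chain rule, for example
\begin{align}
D(\rho_{ABC}\|\sigma_{ABC})=D(\rho_{AB}\|\sigma_{AB})+\ldots
\end{align}
for Markov chains, then it does not transfer directly to fidelity. In that case I would instead use Hellinger or Bhattacharyya-type divergences together with monotonicity under the Petz recovery channels. The first thing I would try is writing $\sigma_{ABC}=\mathcal{R}^{\sigma}(\sigma_{AB})$ and $\rho_{ABC}=\mathcal{R}^{\rho}(\rho_{AB})$, and then bounding the discrepancy between the two recovery maps using the $BC$ marginals. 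The goal would be a triangle-type inequality for the Bures angle, which yields the linear relation in $1-F$ up to constants.
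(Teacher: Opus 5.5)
Your proposal is correct and follows the paper's proof. The paper likewise reduces to testing the $AB$ and $BC$ marginals via \Cref{lemma:cmi_decomposition}, whose left-hand side is exactly $F(\rho_{ABC},\sigma_{ABC})$ for Markov chains, and then runs \Cref{lemma:fid_id_testing} or \Cref{lemma:fid_eq_testing} on those marginals. The linear relation you flagged as the main obstacle comes directly from that lemma: $2(\sqrt{\eps_1}+\sqrt{\eps_2}+\sqrt{\eps_3})^2\le 6(\eps_1+\eps_2+\eps_3)$, and the extra $B$-marginal term satisfies $\eps_3\le\eps_1$ by monotonicity of fidelity under partial trace, so no separate Petz-map or Bures-angle argument is needed.
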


\subsection{Discussion}
In this work we explored how the sample complexities of testing problems change if we switch from decision gaps expressed in trace distance to guarantees in fidelity. The main conclusion is that, unlike for trace distance where certification and equivalence testing can be solved optimally with the same non-adaptive algorithm, differences between them emerge for testing in fidelity. Comparing unknown quantum states appears to be conceptually harder than certification: the sample optimal certification algorithm we provide for testing in fidelity heavily relies on knowledge of $\sigma$, and our equivalence testing algorithm is costlier because it needs to extract some information before it can test. Nevertheless, our equivalence tester can do better than reusing the trace distance algorithm or applying (adaptive) tomography in some regime, and our algorithm for independence testing improves on the testing-by-learning approach studied in the literature \cite{Flammia2024quantumchisquared} for the single-copy setting.

It is not surprising that, as for trace distance, certification does not need adaptivity. For equivalence testing, however, we prove that it can be helpful, which is a separation from the setting in trace distance. Our testing algorithms also provide a natural intuition: our approach for testing by reduction to $\ell_2$ distance requires certain knowledge of the state, and adaptivity allows us to first learn this information, and then test efficiently. One of the main technical challenges of this work was to handle the impact of learning this needed information imperfectly, which required additional care. Surprisingly, while the learning itself is costly, the additional overhead we have to pay to account for the limited precision is only logarithmic.

We see the fact that our approach for independence testing is optimal for the case where $d_A$ and $d_C$ are balanced as evidence that this approach might be optimal more generally, up to logarithmic factors.

Importantly, general lower bounds for non-adaptive algorithms including scaling in the dimension $d$ remain open. It seems natural to conjecture that, analogous to tomography, non-adaptive protocols for equivalence testing in fidelity may not have an advantage over trace distance, such that $\widetilde{\Omega}(d^{3/2}/\eps^2)$ samples are required, and analogously for independence testing. It is also not clear whether our equivalence tester is optimal, or even whether adaptive equivalence testing in fidelity is fundamentally harder than certification. 

While we study certification with respect to a guarantee on the rank of the known state, the sample complexity can be much more fine-grained: the question of instance-optimality has received significant attention in the literature \cite{chen_toward_2021,chen2022tightboundsquantumstate}. We expect that similar approaches might be applied for testing in fidelity as well.

We therefore state the following open questions:
\begin{itemize}
    \item If adaptivity is not allowed, is there an asymptotic advantage in equivalence testing or independence testing in fidelity over trace distance?
    \item Are the presented sample complexities for equivalence testing in fidelity and independence testing optimal, and if not, is (adaptive) equivalence testing generally harder than certification?
    \item Are instance-optimal bounds for certification in fidelity analogous to the existing results for trace distance?
\end{itemize}

\section{Overview of our Results}\label{sec:overview}
In this section, we aim to give an overview of our results and techniques. 
The top-level framework is inspired by an approach from distribution testing, which we refer to as the `\emph{testing in $\ell_2$}'-framework. The main idea is to use an equivalence tester in $\ell_2$ distance as a fundamental subroutine, that is, an algorithm which takes samples from two distributions $P$ and $Q$ and tests whether $P=Q$ or $\|P-Q\|_2\geq\eps$, for some parameter $\eps \in (0,1)$. This testing algorithm can be modified to only compare a subset $S$ of the support of $P$ and $Q$. Various testing problems in distribution testing can be reduced to instances of equivalence testing in $\ell_2$, for example, certification, equivalence testing, independence testing, and even conditional independence testing. This reduction can also handle different distance measures, such as the variation distance or the Hellinger distance, and the approach gives optimal sample complexities (up to poly-logarithmic factors), while simultaneously giving elegant explanations of the subtleties in these testing problems, notably in the various case distinctions which appear %
naturally when trying to bound distance measures such as the variation distance or the squared Hellinger distance tightly by the $\ell_2$ distance.

In the quantum setting, similar ideas have been present in the literature \cite{chen_toward_2021}. Known certification and equivalence testers in the trace distance use an equivalence tester in $\ell_2$-distance as a subroutine, and the notion of splitting the problem into several pieces (a process we refer to as \emph{bucketing}) has been used for instance-optimal state certification in the single-copy setting\ssnotes{cite}. Perhaps surprisingly, certification and equivalence testing in the single-copy setting have not been studied before for guarantees in fidelity. Our goals are twofold:
\begin{itemize}
    \item We aim to understand whether this testing framework, which has proven to be very versatile in distribution testing, is able to solve testing problems for quantum states with single-copy measurements in a sample-optimal way, or if there are more fundamental testing subroutines. In this work, we present, to the best of our knowledge, the first testing algorithms for certification, equivalence testing, and mutual information (i.e., independence) testing with respect to the fidelity.
    
    \item Learning in fidelity is known to benefit from adaptivity~\cite{chen_when_2023}. An impressive line of work~\cite{bubeck_entanglement_2020,chen_toward_2021, chen2022tightboundsquantumstate} has settled that adaptivity does not help for certification in trace distance, and hence also not for equivalence testing, since the optimal certification algorithm does not use the knowledge of $\sigma$. Here we demonstrate that adaptivity does indeed help for certain testing problems in fidelity, by proving a separation between the non-adaptive and adaptive equivalence testing in fidelity.
\end{itemize}

We now discuss our different results. The algorithms and underlying ideas naturally increase in complexity, as we add new building blocks in each step.

\subsection{State Certification in Fidelity}
As mentioned before, we use an $\ell_2$ tester as a fundamental building block. This tester takes samples from $\rho$, and a description of $\sigma$, as well as a projection $\Pi$ and distinguishes between the cases: 
\begin{align}
    \|\Pi(\rho-\sigma)\Pi\|_2^2\geq \eps\quad\text{or}\quad \Pi\rho\Pi=\Pi\sigma\Pi.
\end{align}
The main approach for a reduction from fidelity testing to $\ell_2$ distance is a `bucketing' via the measured $\chi^2$-distance in the following sense: we define a set of $K=O(\log(d/\eps))$ (not necessarily disjoint) projections $\{\Pi_i\}_{i \in [K]}$, and then bound the (in-)fidelity between $\rho$ and $\sigma$ by a sum of $\ell_2$-norms of the distance restricted to $\Pi_i$, weighted by the smallest eigenvalue of $\Pi_i\sigma\Pi_i$, expressed as follows:
\begin{align}
    1-F(\rho,\sigma)\leq D_{\chi^2}(\rho\|\sigma)\leq\sum_{i,j\in [K]}\frac{\|\Pi_{ij}(\rho-\sigma)\Pi_{ij}\|_2^2}{\sigma_{i,\min}+\sigma_{j,\min}},
\end{align}
where $\Pi_{ij}$ projects to the joint support of $\Pi_i$ and $\Pi_j$. When $F(\rho,\sigma)\leq 1-\eps$, using the pigeonhole principle, we know that at least one invocation of our subroutine would output reject with high probability. Thus using our testing subroutine on each term to precision $O(\eps(\sigma_{i,\min}+\sigma_{j,\min})/K^2)$ is sufficient for our purpose. Since $\sigma$ is known, when the rank $r$ of $\sigma$ is small (say $r \ll d$), this approach leads to a sample complexity of $\widetilde{\Theta}(r^{3/2}/\eps)$, which is completely independent of $d$.
The optimality of our result directly follows from the lower bound result on trace distance, and adaptivity is not needed to achieve an optimal sample complexity, up to polylogarithmic factors.

\subsection{Equivalence Testing in Fidelity}
\label{sec:overview_equiv_fidelity}
Unlike certification, in equivalence testing, we are given sample access to two unknown states $\rho$ and $\sigma$ and are interested in distinguishing whether $\rho=\sigma$, or $F(\rho,\sigma)\leq 1-\varepsilon$ with high probability. Ideally, we would like to reuse our approach for certification, and reduce the problem to comparing the states `piecewise' in $\ell_2$ distance. The additional challenge compared to before is that we no longer know how to perform the bucketing, since both $\rho$ and $\sigma$ are unknown. One potential approach could be to first learn the state $\sigma$ and then follow this route. However, learning the full state would incur the cost of tomography, which is higher. As a result, we settle for a trade-off, which is also used in distribution testing: we first learn the state ``approximately'' such that we can bucket all eigenvalues down to a precision $\eta$ for some suitable threshold $\eta$, thereby allowing for a bucketing of all large eigenvalues above the threshold $\eta$. The eigenvalues smaller than $\eta$ will be tested together in one (different) bucket. However, this will make the testing part more expensive: the sample complexity of the $\ell_2$ testing algorithm scales with the maximal eigenvalue $\sigma_{\max}$, of a bucket, the precision to which we need to test with the minimal eigenvalue $\sigma_{\min}$. For certification, these were of the same order, so we did not need to perform additional steps. However, here we need to balance the trade-off between the costs of these two tasks (approximately learning and testing), which will define our threshold $\eta$ and the final sample complexity. The learning subroutine comes from the tomography in fidelity result by \cite{chen_when_2023}, and provides a guarantee in operator norm. This is exactly the kind of precision we require to form our buckets. However, a suitable definition and a careful analysis of the buckets are needed to show that we do not mistakenly incorporate too much of the wrong subspaces in our buckets. The bulk of the proof is used to rigorously capture this.

It is crucial to note that the algorithm for equivalence testing in trace distance follows directly from the (non-instance-optimal) certification algorithm, since this algorithm doesn't use any knowledge of $\sigma$! In the case of fidelity, knowledge of $\sigma$ is essential for the optimal algorithm. Without this knowledge, the algorithmic challenges become considerably more involved, and it might be unavoidable that performing equivalence testing in fidelity is more expensive than certification. The best known prior sample complexity was 
\begin{align}
    O\left(\min\left\{\frac{d^{3/2}}{\eps^2},\frac{d^3}{\eps}\right\}\right),
\end{align}
i.e., either performing the trace distance algorithm, or performing (adaptive) full state tomography. Our algorithm also splits into two regimes, and essentially reduces the second term by a factor of $d^{3/4}$. We want to stress again that our case distinction naturally arises from our approach, and is not the result of combining two different algorithms.
The iterative process uses a logarithmic number of adaptive phases. It is natural to ask whether, similar to the tomography task, adaptivity is necessary here, and whether it is crucial to beating equivalence testing in trace distance. As pointed out earlier, the answer to the first question is affirmative. %

\subsection{Non-adaptive Lower Bounds for Equivalence Testing in Qubits}
Here we show that even for qubits, equivalence testing without adaptivity is weaker compared to adaptivity. We generalize the standard approach of comparing two sets of quantum states \cite{bubeck_entanglement_2020}, and show that distinguishing them is impossible unless the number of samples $N$ has a certain size. %

The hard instances are natural generalizations of hard instances in distribution testing: the eigenvalues for states in one set are $[1-\eps,\eps]$, and $[1-2\eps,2\eps]$ for the other. We then pick a random unitary to make the basis random. For distributions, seeing samples from the `light' area allows us to separate the two cases: seeing such a sample takes roughly $\widetilde{O}(1/\eps)$ samples. If we have quantum states and non-adaptive measurements, an additional challenge arises: a measurement basis which is poorly aligned with the eigenstates of the state will have an overlap between both eigenvalues, and smear out the heavy eigenvalue, which means that most measurements reveal only very little information. For example, we expect only an $\widetilde{O}(\eps)$ fraction of the measurements to be well-aligned with the unknown eigenbasis, which intuitively explains the sample complexity of $\widetilde{\Omega}(1/\eps^2)$.

\ssnotes{please update this part if possible}

\subsection{Mutual Information Testing}
As a natural application of our equivalence testing algorithm, we study the problem of mutual information testing. Due to the aforementioned reduction \cite{Flammia2024quantumchisquared}, this problem reduces to independence testing. On a technical level, we perform independence testing with a guarantee in fidelity as follows: 
\begin{align}
    F(\rho_{AC},\rho_A\otimes \rho_C)\leq 1-\widetilde{\Omega}(\eps)
    \quad\text{or}\quad
    \rho_{AC}=\rho_A\otimes \rho_C,
\end{align}
and then we relate this to mutual information testing by incurring only logarithmic factors. The primary idea is to leverage the approach from \cite{pmlr-v291-seyfried25a}, which solved the problem for distribution testing. In the classical setting, as a first step, the independence testing problem can be reduced to equivalence testing, which is direct if we assume that we are able to process two samples of $\rho_{AC}$ into the product state $\rho_{A}\otimes\rho_{C}$ using partial traces: independence testing then reduces to an instance of equivalence testing between $\rho_{AC}$ and $\rho_{A}\otimes\rho_{C}$. However, we can go further and ask whether we can benefit from the additional structure in our problem, since we know that one of our states has product structure. The answer to this question is yes indeed, as used by \cite{pmlr-v291-seyfried25a}: when learning the buckets approximately, we learn them individually for $\rho_A$ and $\rho_C$, and compose them to obtain the bucketing for $\rho_A\otimes\rho_C$. This proves to be more efficient than simply bucketing $\rho_A\otimes\rho_C$ as a whole. This leads to the final sample complexity of $\widetilde{O}(\min\{(d_Ad_C)^{3/2}/\varepsilon^2,d_A^{9/4}d_C^{3/4}/\varepsilon\})$.

\section{Preliminaries}
\label{sec:prelim}

For an integer $n$, $[n]$ denotes the set $\{1, \ldots, n\}$, and $[n]_0$ denotes the set $\{0,1, \ldots, n\}$. For integers $i,j$, $\delta_{ij}$ denotes the Kronecker delta function and $\delta_{ij}=1$ if and only if $i=j$ and $0$ otherwise. For a matrix $X$, we denote $\|X\|_{\opn}$ as its operator norm. For concise expressions and readability, we use the asymptotic notations of $\widetilde{O}(\cdot), \widetilde{\Omega}(\cdot)$, and $\widetilde{\Theta}(\cdot)$, where we hide polylogarithmic dependencies on the parameters. We assume that the reader has some familiarity with quantum computing and refer interested readers to the textbook \cite{nielsen_chuang} for further background.

Let us start by formalizing the notion of measurement.

\begin{definition}[Positive operator valued measurement (POVM)]
A POVM $\mathcal{M}$ is a (possibly continuous) set of positive operators $\mathcal{M} = \{M_z\}$ satisfying $\sum_z M_z = \mathbf{1}_d$. Measuring a quantum state $\rho$ using $\mathcal{M}$ means sampling from the distribution over the labels, where we observe $z$ with probability $\tr[\rho M_z]$.
\end{definition}

Now let us define various distance and divergence measures which will be used in this work.

\begin{definition}[Various distance and divergence measures]
Let $\rho, \sigma \in \mathbb{C}^{d \times d}$ be two quantum states. Then we define the following distances between $\rho$ and $\sigma$:
\begin{itemize}
    \item[(i)] Trace distance: %
    $\|\rho-\sigma\|_{\tr}:=\frac{1}{2}\|\rho-\sigma\|_1$.

    \item[(ii)] Hilbert-Schmidt distance: %
    $\|\rho-\sigma\|_{\textnormal{HS}}:=\|\rho-\sigma\|_2$.

    \item[(iii)] Fidelity: %
    $F(\rho,\sigma) :=\|\sqrt{\rho}\sqrt{\sigma}\|_1$.

    \item[(iv)] Measured $\chi^2$-Divergence: %
    $D_{\chi^2}(\rho \|\sigma) :={\displaystyle\sum_{i,j}\frac{|\bra{\sigma_i}\rho\ket{\sigma_j}-\delta_{ij}\sigma_{i}|^2}{\sigma_i+\sigma_j}}$ where $\sigma= \sum_i \sigma_i \dyad{\sigma_i}$.

    \item[(v)] Bures distance: %
    $D_B(\rho, \sigma):= \sqrt{2 (1- F(\rho, \sigma))}$.

\end{itemize}

\end{definition}

\color{blue}

\color{black}

\begin{lemma}[Fuchs-van de Graaf Inequalities~\cite{fuchs1999cryptographic}]\label{lem:fucsvandegraff}
For two quantum states $\rho, \sigma \in \mathbb{C}^{d \times d}$, the following relations hold:
$$1-F(\rho,\sigma)\leq \frac{1}{2}\|\rho-\sigma\|_1\leq \sqrt{1-F(\rho,\sigma)^2}.$$    
\end{lemma}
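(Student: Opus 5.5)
The statement to prove is the Fuchs--van de Graaf inequality (\Cref{lem:fucsvandegraff}), with $F(\rho,\sigma)=\|\sqrt\rho\sqrt\sigma\|_1$. Both bounds come from reducing to the pure-state case or to a classical problem, using standard monotonicity facts.

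\emph{Lower bound.} By Helstrom--Holevo, there is a two-outcome projective measurement $\{P,\mathds{1}-P\}$ with
\[
\tfrac12\|\rho-\sigma\|_1=\tr[P(\rho-\sigma)].
\]
I would sharpen this to a measurement attaining the fidelity. Fuchs--Caves shows that $F(\rho,\sigma)=\min_{\{M_z\}}\sum_z\sqrt{p_zq_z}$, minimizing over POVMs, where $p_z=\tr[\rho M_z]$ and $q_z=\tr[\sigma M_z]$. The minimizer can be taken as the eigenbasis of the operator geometric mean $\sigma^{-1/2}(\sigma^{1/2}\rho\sigma^{1/2})^{1/2}\sigma^{-1/2}$; I would first treat invertible $\sigma$ and then pass to the general case by continuity. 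Trace distance does not increase under measurement, so $\frac12\|p-q\|_1\le\frac12\|\rho-\sigma\|_1$. The classical inequality $1-\sum_z\sqrt{p_zq_z}\le\frac12\sum_z|p_z-q_z|$ follows from $(\sqrt{p}-\sqrt{q})^2\le|\sqrt p-\sqrt q|(\sqrt p+\sqrt q)=|p-q|$, summed over $z$ and halved. Combining these gives the lower bound.

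\emph{Upper bound.} By Uhlmann's theorem there are purifications $\ket{\psi},\ket{\phi}$ of $\rho$ and $\sigma$ with $|\braket{\psi}{\phi}|=F(\rho,\sigma)$. Taking the partial trace is contractive in trace norm, so $\|\rho-\sigma\|_1\le\|\dyad\psi-\dyad\phi\|_1$. For pure states, $\dyad\psi-\dyad\phi$ has rank two, trace zero, and eigenvalues $\pm\sqrt{1-|\braket\psi\phi|^2}$. One can check this from $\tr[(\cdot)^2]=2-2|\braket\psi\phi|^2$. Hence $\frac12\|\rho-\sigma\|_1\le\sqrt{1-F^2}$.

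\emph{Main obstacle.} The delicate part is justifying the two cited tools in the form needed. For Uhlmann, the purification inner product must equal $\|\sqrt\rho\sqrt\sigma\|_1$; this follows from the polar decomposition of $\sqrt\rho\sqrt\sigma$. For Fuchs--Caves, singular $\sigma$ must be handled, either by a limiting argument or by restricting to the supports. Since \Cref{lem:fucsvandegraff} is classical, in the paper I would simply cite \cite{fuchs1999cryptographic} and include the sketch above.
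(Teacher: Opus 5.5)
Your proof is correct. The paper gives no argument of its own for this lemma and only cites \cite{fuchs1999cryptographic}, and your sketch is the standard textbook proof: the fidelity-attaining Fuchs--Caves measurement, the inequality $1-\sum_z\sqrt{p_zq_z}\le\frac12\sum_z|p_z-q_z|$ and monotonicity of trace distance under measurement give the lower bound, while Uhlmann purifications, contractivity of the partial trace and the pure-state identity give the upper bound; the continuity step for singular $\sigma$ goes through (e.g.\ perturb to $(1-\delta)\sigma+\delta\mathds{1}_d/d$).

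The only superfluous piece is your opening Helstrom--Holevo measurement, which your lower-bound argument never uses. It would instead start an alternative, purely measurement-based proof of the upper bound, via $\frac12\|p-q\|_1\le\sqrt{1-(\sum_z\sqrt{p_zq_z})^2}$ (Cauchy--Schwarz) and $\sum_z\sqrt{p_zq_z}\ge F(\rho,\sigma)$ for every measurement.
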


We will use several crucial relations between these distance measures.

\begin{lemma}[{\cite[Prop.\ 2.31]{Flammia2024quantumchisquared}}]
\label{lemma:fid_chi2_bound}
Let $\rho, \sigma \in \mathbb{C}^{d\times d}$ be two quantum states. Then the following holds:
    \begin{align}
        1-F(\rho,\sigma)\leq D_{\chi^2}(\rho\|\sigma).
    \end{align}
\end{lemma}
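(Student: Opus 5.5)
The plan is to reduce the claim to its classical analogue. The reduction has three ingredients: the Fuchs--Caves characterization of fidelity as a minimum over measurements, the data-processing inequality for the measured $\chi^2$-divergence, and an elementary classical inequality between the Hellinger affinity and the Pearson $\chi^2$.

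\emph{Classical step.} For distributions $p,q$ on a finite alphabet,
\begin{align}
1-\sum_z\sqrt{p_zq_z}=\frac12\sum_z(\sqrt{p_z}-\sqrt{q_z})^2=\frac12\sum_z\frac{(p_z-q_z)^2}{(\sqrt{p_z}+\sqrt{q_z})^2}\leq\frac12\sum_z\frac{(p_z-q_z)^2}{q_z}.
\end{align}
So $1-F(p,q)\leq\frac12\chi^2(p\|q)$. If $q_z=0<p_z$ for some $z$, the right-hand side is $+\infty$ and there is nothing to prove.

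\emph{Quantum step.} Write $\sigma=\sum_i\sigma_i\dyad{\sigma_i}$. Observe that $2D_{\chi^2}(\rho\|\sigma)=\sum_{i,j}\frac{2|(\rho-\sigma)_{ij}|^2}{\sigma_i+\sigma_j}$, with matrix entries taken in the eigenbasis of $\sigma$. This is exactly the quantum $\chi^2$-divergence associated with the operator-monotone function $f(x)=(1+x)/2$, namely $\tr[(\rho-\sigma)\,\Omega_\sigma^{-1}(\rho-\sigma)]$ with $\Omega_\sigma(X)=\frac12(\sigma X+X\sigma)$. By the result of Temme et al.\ (equivalently, Petz's classification of monotone metrics), every such $\chi^2$-divergence is contractive under CPTP maps. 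The main point to check is that this monotonicity applies to measurement channels $X\mapsto\sum_z\tr[M_zX]\dyad{z}$. On commuting (classical) outputs the quantity reduces to the Pearson $\chi^2$ for every $f$. Therefore, for any POVM $\{M_z\}$ with $p_z=\tr[\rho M_z]$ and $q_z=\tr[\sigma M_z]$,
\begin{align}
\chi^2(p\|q)\leq 2D_{\chi^2}(\rho\|\sigma).
\end{align}
If one wants to avoid citing monotonicity, I would prove this particular case directly. Set $H=\Omega_\sigma^{-1}(\rho-\sigma)$, so that $H_{ij}=\frac{2(\rho-\sigma)_{ij}}{\sigma_i+\sigma_j}$ and $(\rho-\sigma)=\frac12(\sigma H+H\sigma)$. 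Then $p_z-q_z=\mathrm{Re}\,\tr[M_z\sigma H]$. Cauchy--Schwarz in the inner product $\langle A,B\rangle=\tr[A^\dagger B]$, applied to $M_z^{1/2}\sigma^{1/2}$ and $M_z^{1/2}H\sigma^{1/2}$, gives $(p_z-q_z)^2\leq q_z\tr[M_zH\sigma H]$. Summing over $z$ yields $\chi^2(p\|q)\leq\tr[\sigma H^2]=\frac12\tr[(\rho-\sigma)H]=D_{\chi^2}(\rho\|\sigma)$. This is in fact even stronger than needed.

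\emph{Combining.} By Fuchs--Caves there is a POVM (the eigenbasis of $\sigma^{-1/2}\sqrt{\sigma^{1/2}\rho\sigma^{1/2}}\sigma^{-1/2}$, suitably completed off the support) achieving $F(p,q)=F(\rho,\sigma)$. Then
\begin{align}
1-F(\rho,\sigma)=1-F(p,q)\leq\tfrac12\chi^2(p\|q)\leq D_{\chi^2}(\rho\|\sigma),
\end{align}
using the direct Cauchy--Schwarz bound above; the monotonicity route gives the same final inequality, since its factor $2$ cancels the $\tfrac12$.

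The main obstacle is support issues. If $\rho$ has weight outside $\mathrm{supp}\,\sigma$, there are terms with $\sigma_i+\sigma_j=0$ but nonzero numerator. In that case we adopt the convention $D_{\chi^2}=+\infty$ and the claim is trivial. Otherwise we restrict all operators to $\mathrm{supp}\,\sigma$ and use the pseudo-inverse there. The Cauchy--Schwarz step must then be verified with $\sigma^{1/2}$ having a kernel, which is harmless because $\rho-\sigma$ lives on the support.
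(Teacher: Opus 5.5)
\textbf{The proof is correct, apart from one factor-of-2 slip in the direct Cauchy--Schwarz verification.}

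From $\rho-\sigma=\frac12(\sigma H+H\sigma)$ you get
$\tr[(\rho-\sigma)H]=\frac12\tr[\sigma H^2]+\frac12\tr[H\sigma H]=\tr[\sigma H^2]$.
On the other hand,
$\tr[(\rho-\sigma)H]=\sum_{i,j}\frac{2|(\rho-\sigma)_{ij}|^2}{\sigma_i+\sigma_j}=2D_{\chi^2}(\rho\|\sigma)$.
So Cauchy--Schwarz yields $\chi^2(p\|q)\leq\tr[\sigma H^2]=2D_{\chi^2}(\rho\|\sigma)$. This is exactly what the monotonicity route gives, not something stronger.

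The bound $\chi^2(p\|q)\leq D_{\chi^2}(\rho\|\sigma)$ you state is false. For commuting $\rho,\sigma$ measured in their common eigenbasis, $D_{\chi^2}(\rho\|\sigma)=\frac12\chi^2(p\|q)$, so that bound would force $p=q$. In fact there is no slack at all. Measuring in the eigenbasis of the Hermitian operator $H$ makes your Cauchy--Schwarz step tight, since then $p_z-q_z=h_zq_z$. Hence $\max_{\{M_z\}}\chi^2(p\|q)=2D_{\chi^2}(\rho\|\sigma)$.

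Your combining chain only needs $\frac12\chi^2(p\|q)\leq D_{\chi^2}(\rho\|\sigma)$. So the conclusion stands once you correct the intermediate identity and drop the ``even stronger than needed'' remark.

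\textbf{Comparison with the paper.} The paper gives no proof of this lemma; it cites Flammia and O'Donnell (Prop.~2.31) and uses the inequality as a black box. Your argument is a complete, self-contained replacement. It combines three pieces: the Fuchs--Caves optimal POVM, the classical bound $1-\sum_z\sqrt{p_zq_z}\leq\frac12\chi^2(p\|q)$, and the measured bound $\chi^2(p\|q)\leq 2D_{\chi^2}(\rho\|\sigma)$. The corrected Cauchy--Schwarz version needs only linear algebra, not Petz's classification or the monotonicity theorem of Temme et al.

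Your route also makes the normalization transparent, which the citation does not. The paper's $D_{\chi^2}$ is half the SLD (Bures) $\chi^2$-divergence, i.e.\ half the largest Pearson $\chi^2$ obtainable by measuring; this is what justifies the name ``measured''. That factor $\frac12$ is exactly the one produced by the Hellinger identity, so the constant $1$ in the lemma is precisely what your chain gives.

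Your treatment of supports is sound. Finiteness of $D_{\chi^2}(\rho\|\sigma)$ forces $\Pi_0\rho\Pi_0=0$ for the projector $\Pi_0$ onto $\ker\sigma$, so $\rho$ lives on $\mathrm{supp}\,\sigma$. Then $q_z=0$ gives $M_z^{1/2}\sigma^{1/2}=0$, and hence $p_z=0$.
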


\begin{remark}
We note that the Schatten norms satisfy the Hölder inequality, i.e.,
\begin{equation}
    \forall p,q,r\in [1,\infty]: \frac{1}{p}+\frac{1}{q}=\frac{1}{r}
    \quad\implies\quad 
    \forall A,B\in \mathbb{C}^{d \times d} : \|AB\|_r\leq \|A\|_p\|B\|_q.
\end{equation}    
\end{remark}

\begin{definition}
The {\em squared Hellinger} distance between two distributions $P$ and $Q$ over a set $\mathcal{D}$ is defined as:
\begin{equation}
    D_H^2(P,Q)= \frac{1}{2} \sum_{x \in \mathcal{D}} \left(\sqrt{P(x)}- \sqrt{Q(x)}\right)^2.
\end{equation}    
\end{definition}

\begin{lemma}
    \label{lemma:l2_proj_sum}
        Let $\sigma \in \mathbb{C}^{d \times d}$ be a quantum state of dimension $d$ with eigenvectors $\{\ket{\sigma_i}\}_{i=1}^d$ and eigenvalues $\{\sigma_i\}_{i=1}^d$, and $S, R\subseteq[d]$. Define $\Pi_S=\sum_{i\in S}\dyad{\sigma_i}$ and $\Pi_R$ analogously. Then the following holds for any quantum state $\rho \in \mathbb{C}^{d \times d}$: 
        \begin{equation}
            \|\Pi_S(\rho-\sigma)\Pi_R\|_2^2=\sum_{i\in S, j \in R}|\bra{\sigma_i}\rho\ket{\sigma_j}-\delta_{ij}\sigma_{i}|^2.
        \end{equation}
    \end{lemma}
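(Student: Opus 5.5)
This identity is a direct expansion of the Hilbert--Schmidt norm in the eigenbasis of $\sigma$. Since $\{\ket{\sigma_i}\}_{i=1}^d$ is an orthonormal basis, I would express the operator $X:=\Pi_S(\rho-\sigma)\Pi_R$ through its matrix entries in this basis and then use that the squared Schatten-$2$ norm equals the sum of squared moduli of all matrix entries in any orthonormal basis: $\|X\|_2^2=\tr[X^\dagger X]=\sum_{i,j}|\bra{\sigma_i}X\ket{\sigma_j}|^2$.

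The key steps, in order, are as follows.

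\begin{enumerate}
\item Compute the entries of $X$. Because $\Pi_S\ket{\sigma_i}=\mathbb{1}[i\in S]\ket{\sigma_i}$ and $\Pi_R$ acts the same way, we get $\bra{\sigma_i}X\ket{\sigma_j}=\mathbb{1}[i\in S]\,\mathbb{1}[j\in R]\,\bra{\sigma_i}(\rho-\sigma)\ket{\sigma_j}$. Here we use that both projectors are Hermitian and diagonal in this basis.
\item Evaluate the $\sigma$ term. Writing $\sigma=\sum_k\sigma_k\dyad{\sigma_k}$ gives $\bra{\sigma_i}\sigma\ket{\sigma_j}=\delta_{ij}\sigma_i$. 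Hence, for $i\in S$ and $j\in R$, the entry equals $\bra{\sigma_i}\rho\ket{\sigma_j}-\delta_{ij}\sigma_i$, and it vanishes otherwise.
\item Sum the squared moduli of these entries. This yields exactly the claimed sum over $i\in S$, $j\in R$.
\end{enumerate}

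There is no real obstacle in this argument. The only point needing care is basis-independence of the Frobenius norm, namely the identity $\tr[X^\dagger X]=\sum_{i,j}|X_{ij}|^2$ for any orthonormal basis, which follows by inserting the resolution of identity $\sum_k\dyad{\sigma_k}=\mathbb{1}$. One should also note that $S$ and $R$ need not be disjoint, but the $\delta_{ij}$ term handles the overlap automatically.
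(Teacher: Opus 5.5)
Your proposal is correct and essentially matches the paper's argument. The paper writes $\|\Pi_S(\rho-\sigma)\Pi_R\|_2^2=\tr[\Pi_S(\rho-\sigma)\Pi_R(\rho-\sigma)\Pi_S]$, expands it in the eigenbasis of $\sigma$, and handles the $i\neq j$ and $i=j$ terms separately before recombining them with $\delta_{ij}$; your entrywise computation using $\bra{\sigma_i}\sigma\ket{\sigma_j}=\delta_{ij}\sigma_i$ produces the Kronecker delta directly instead of through that case split.
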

    \begin{proof}
        We calculate the following:
        \begin{align}\|\Pi_S(\rho-\sigma)\Pi_R\|_2^2&=\tr\left[\Pi_S(\rho-\sigma)\Pi_R(\rho-\sigma)\Pi_S\right]
            \\
            &=\sum_{i\in S, j \in R}\bra{\sigma_i}(\rho-\sigma)\dyad{\sigma_j}(\rho-\sigma)\ket{\sigma_i}
            \\
            &=\sum_{i\in S, j \in R, i\neq j}\bra{\sigma_i}\rho\dyad{\sigma_j}\rho\ket{\sigma_i}+\sum_{i \in S\cap R}\bra{\sigma_i}(\rho-\sigma)\dyad{\sigma_i}(\rho-\sigma)\ket{\sigma_i}
            \\
            &=\sum_{i\in S, j \in R, i\neq j}|\bra{\sigma_i}\rho\ket{\sigma_j}|^2+\sum_{i \in S\cap R}(\bra{\sigma_i}\rho\ket{\sigma_i}-\sigma_i)^2
            \\
            &=\sum_{i\in S,j\in R}|\bra{\sigma_i}\rho\ket{\sigma_j}-\delta_{ij}\sigma_{i}|^2.\qedhere
        \end{align}
    \end{proof}

\begin{remark}[Minimal eigenvalues]
\label{rem:min_ev}
    We can reduce certification (identity testing) in fidelity (parameter $\varepsilon$) to fidelity testing with eigenvalues in $\widetilde{\Omega}(\varepsilon/d)$ with respect to parameter $\varepsilon/2$.

\end{remark}

\begin{proof}
    This is a standard technique, we follow the argument used in \cite[Corollary 6.17]{buadescu2019quantum}. There, the Bures distance $D_B(\rho,\sigma)=\sqrt{2(1-F(\rho,\sigma))}$, a metric which also satisfies $D_B^2(\rho,\sigma)\leq 2\|\rho-\sigma\|_1$, is used with a depolarization channel. The guarantee $1-\varepsilon\geq F(\rho,\sigma)$ translates to $2\varepsilon\leq D_B^2(\rho,\sigma)$ such that (let $\tau=\mathbf{1}_d/d$)
    \begin{align}
        \sqrt{2\varepsilon}\leq D_B(\rho,\sigma)&\leq D_B((1-\nu)\rho +\nu \tau,(1-\nu)\sigma+\nu \tau)+D_B((1-\nu)\sigma+\nu \tau,\sigma)+D_B((1-\nu)\rho +\nu \tau,\rho)
        \\
        &\leq D_B((1-\nu)\rho +\nu \tau,(1-\nu)\sigma+\nu \tau)+\sqrt{2\nu\|\tau-\sigma\|_1}+\sqrt{2\nu\|\tau-\rho\|_1}
        \\
        &\leq D_B((1-\nu)\rho +\nu \tau,(1-\nu)\sigma+\nu \tau)+4\sqrt{\nu}.
    \end{align}
    Thus, as long as $((\sqrt{2}-1)/4)^2\varepsilon\geq \nu$, we still have 
    \begin{align}
        F((1-\nu)\rho +\nu \tau,(1-\nu)\sigma+\nu \tau)\leq 1-\varepsilon/2,
    \end{align}
    with minimal eigenvalues $\nu\tau_{\min}$ (which for the depolarization channel means $\nu/d$).
\end{proof}

An integral part of our proof is the following result.
\begin{lemma}[{\cite[Lemma 6.4]{chen_toward_2021}}]
\label{prel:lemma:exp_z_to_l2}
    For any Hermitian $M\in \mathbb{C}^{d\times d}$ and Haar-random $U\in U(d)$, we have 
    \begin{equation}
        \mathbb{E}\left[\tr\left[\sum_{i=1}^d\left(U_i^{\dagger}MU_i\right)^2\right]\right]=\frac{\tr[M]^2+\|M\|_2^2}{d+1}.
    \end{equation}
    In addition, if $\tr[M]=0$, then 
    \begin{equation}
        \label{eq_lemma_weingarten_moment_two}
        \mathbb{E}\left[\tr\left[\sum_{i=1}^d\left(U_i^{\dagger}MU_i\right)^2\right]^2\right]\leq\frac{(1+o(1))\|M\|_2^4}{d^2},
    \end{equation}
    where the $o(1)$ decays with $O(1/d)$.
\end{lemma}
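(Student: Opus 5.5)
The plan is to write $U_i$ for the $i$-th column of $U$. Then $U_i^\dagger M U_i$ is the scalar $a_i:=\bra{u_i}M\ket{u_i}$, and the trace is trivial. Each $u_i$ is marginally a Haar-random unit vector in $\mathbb{C}^d$. Both claims then reduce to Haar moments of quadratic forms. I would use the standard formula
\begin{align}
\mathbb{E}\big[(u u^\dagger)^{\otimes k}\big]=\frac{\sum_{\pi\in S_k}P_\pi}{d(d+1)\cdots(d+k-1)},
\end{align}
where $P_\pi$ permutes the tensor factors. Taking $k=2$ gives
\begin{align}
\mathbb{E}[a_i^2]=\tr[(M\otimes M)\,\mathbb{E}(uu^\dagger)^{\otimes 2}]=\frac{\tr[M]^2+\tr[M^2]}{d(d+1)}.
\end{align}
Summing over the $d$ columns yields the first identity immediately, since $\tr[M^2]=\|M\|_2^2$ for Hermitian $M$.

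For the second moment I assume $\tr[M]=0$ and expand
\begin{align}
\Big(\sum_i a_i^2\Big)^2=\sum_i a_i^4+\sum_{i\neq j}a_i^2a_j^2.
\end{align}

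\emph{Diagonal terms.} These use $k=4$. With $\tr M=0$, only permutations without fixed points contribute. These are $3$ products of two transpositions, each giving $(\tr M^2)^2$, and $6$ four-cycles, each giving $\tr M^4\le\|M\|_2^4$. Hence $\mathbb{E}[a_i^4]\le 9\|M\|_2^4/d^4$, and the diagonal contributes $O(\|M\|_2^4/d^3)$. This is a lower-order term.

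\emph{Off-diagonal terms.} This is the main step, because columns of $U$ are not independent. I would handle the dependence by conditioning on $u_1$. Given $u_1$, the vector $u_2$ is Haar-random on the $(d-1)$-dimensional complement with projector $P=\mathbf{1}-u_1u_1^\dagger$. Applying the $k=2$ formula in dimension $d-1$ to $PMP$ gives
\begin{align}
\mathbb{E}[a_2^2\mid u_1]=\frac{\tr[PMP]^2+\|PMP\|_2^2}{(d-1)d}.
\end{align}
Since $\tr M=0$, we have $\tr[PMP]=-a_1$. A direct computation gives
\begin{align}
\|PMP\|_2^2=\|M\|_2^2-2\|Mu_1\|^2+a_1^2.
\end{align}
Dropping the nonpositive term $-2\|Mu_1\|^2$ yields
\begin{align}
\mathbb{E}[a_1^2a_2^2]\le\frac{\|M\|_2^2\,\mathbb{E}[a_1^2]+2\,\mathbb{E}[a_1^4]}{d(d-1)}\le\frac{\|M\|_2^4}{d^2(d+1)(d-1)}+O\!\left(\frac{\|M\|_2^4}{d^6}\right),
\end{align}
where the second inequality reuses the two moment bounds above.

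\emph{Conclusion.} Multiplying the off-diagonal bound by the $d(d-1)$ ordered pairs gives $\frac{\|M\|_2^4}{d(d+1)}\,(1+O(1/d))$. Adding the $O(\|M\|_2^4/d^3)$ diagonal contribution gives the claimed $(1+o(1))\|M\|_2^4/d^2$ with $o(1)=O(1/d)$. The only delicate point is the conditional reduction for dependent columns. The restriction of Haar measure on $U(d)$ to the second column, given the first, is Haar on the complement's unit sphere, which justifies using the $(d-1)$-dimensional formula. The rest is bookkeeping of permutation sums.
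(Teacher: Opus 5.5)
Your proof is correct. For the off-diagonal terms it takes a genuinely different route from the one the paper relies on.

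\emph{What the paper does.} The paper does not reprove the lemma; it cites Chen et al., and the remark after the lemma shows their structure. It uses the same split into $d$ diagonal terms (the fourth moment of a single Haar column) and $d^2-d$ off-diagonal terms. However, it computes $\mathbb{E}[a_i^2a_j^2]$ for $i\neq j$ by Weingarten calculus over $U(d)$. The leading coefficient there is the $S_4$ Weingarten function $\frac{d^4-8d^2+6}{d^2(d^2-1)(d^2-4)(d^2-9)}$, plus unspecified $O(\|M\|_2^4/d^5)$ corrections.

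\emph{What you do differently.} You avoid Weingarten functions entirely. Conditioning on $u_1$ turns the dependence between columns into a Haar vector on a $(d-1)$-dimensional subspace. After that, only the $k=2$ rank-one moment formula is needed, together with the identities $\tr[PMP]=-a_1$ and $\|PMP\|_2^2=\|M\|_2^2-2\|Mu_1\|^2+a_1^2$.

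\emph{What each approach buys.} Your argument is self-contained and has explicit constants:
$\mathbb{E}[(\sum_i a_i^2)^2]\le\frac{\|M\|_2^4}{d(d+1)}+\frac{9\|M\|_2^4}{d^3}+O(\|M\|_2^4/d^4)$.
This directly exhibits the $o(1)=O(1/d)$ rate, which the paper instead has to read off from Chen et al.'s expression. The price is that you obtain only an upper bound, because you discard $-2\|Mu_1\|^2$; keeping it via the $k=3$ formula would give the exact value. The Weingarten computation yields a closed form and extends mechanically to higher moments or more columns. For the lemma and its use in \Cref{lemma:helper_variance_z}, your bound suffices.

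One small point to state explicitly: replacing every ordered pair $(i,j)$ by $(1,2)$ uses exchangeability of the columns. This follows from invariance of the Haar measure under right multiplication by permutation matrices.
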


\begin{remark}
The factor $o(1)$ in \Cref{prel:lemma:exp_z_to_l2} can be bounded by $O(1/d)$. This follows directly from the proof in \cite[Lemma 6.4]{chen_toward_2021}, where, $Z:=\sum_{i=1}^d\left(U_i^{\dagger}MU_i\right)^2$ is bounded by
    \begin{align}
\mathbb{E}\left[\tr\left[Z\right]^2\right]&\leq d\frac{O(\|M\|_{2}^4)}{d(d+1)(d+2)(d+3)}+(d^2-d)\left(\frac{(d^4-8d^2+6)\|M\|_2^4}{d^8-14d^6+49d^4-36d^2}+\frac{O(\|M\|_2^4)}{d^5}\right)
        \\
        &\leq \frac{\|M\|_2^4}{d^2}+\frac{O(\|M\|_2^4)}{d^3}.
    \end{align}
\end{remark}

We will also need the following result from classical equivalence testing, which allows to compare two unknown probability distributions on a given subset $S$. Given a distribution $P$ over $[d]$ and a set $S \subseteq [d]$, $\subD{P}{S}{}$ denotes the (pseudo)-distribution from $P$ where, $\subD{P}{S}{} (i) = P(i)$, for $i\in S$ and $\subD{P}{S}{} (i) =0$ otherwise.
\begin{lemma}[{\cite[Lemma 5.6]{pmlr-v291-seyfried25a}}]
\label{prelim:lemma:equivalence_l2}
Let $P$ and $Q$ be two unknown distributions over $[d]$, and $S\subseteq [d]$. Moreover, let us assume that $\|\subD{Q}{S}{}\|_2 \leq b$ for some $b\in \mathbb{R}$. Given sample access to $P$ and $Q$, in order to distinguish if $\subD{P}{S}{}=\subD{Q}{S}{}$
or $\|\subD{P}{S}{}-\subD{Q}{S}{}\|_2 \geq \eps$ with probability at least $0.999$, $\cHSdistr\max\{\frac{b}{\eps^2}, \frac{1}{\eps},\sqrt{d}\}$ samples are sufficient, where $P^S$ and $Q^S$ denote the (pseudo)-{distribution} induced on the set $S$ and $\cHSdistr$ is a fixed suitable constant.
\end{lemma}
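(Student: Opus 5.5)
The plan is the standard Poissonized collision-style $\ell_2$ tester, restricted to coordinates in $S$, analysed with Chebyshev's inequality. First, draw $\mathrm{Poi}(m)$ samples from each of $P$ and $Q$, where $m$ is a parameter to be fixed. The counts $X_i$ (of symbol $i$ under $P$) and $Y_i$ (under $Q$) are then mutually independent, with $X_i\sim\mathrm{Poi}(mP(i))$ and $Y_i\sim\mathrm{Poi}(mQ(i))$. Since $\mathrm{Poi}(m)\le 2m$ with overwhelming probability for $m$ larger than a constant, this costs only a constant factor in the number of real samples. The statistic is
\begin{equation}
Z=\sum_{i\in S}\left[(X_i-Y_i)^2-X_i-Y_i\right].
\end{equation}
The tester accepts iff $Z\le m^2\eps^2/2$.

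Next, compute the first two moments of $Z$ using Poisson moments. Writing $\delta:=\|\subD{P}{S}{}-\subD{Q}{S}{}\|_2$, the mean is $\E[Z]=m^2\delta^2$. The variance is
\begin{equation}
\mathrm{Var}[Z]=\sum_{i\in S}\left[4m^3(P(i)-Q(i))^2(P(i)+Q(i))+2m^2(P(i)+Q(i))^2\right].
\end{equation}
To bound this, use $\|\subD{P}{S}{}\|_2\le b+\delta$, so that $\sum_{i\in S}(P(i)+Q(i))^2\le (2b+\delta)^2$. For the cubic term, apply Cauchy--Schwarz together with $\|\cdot\|_4\le\|\cdot\|_2$:
\begin{equation}
\sum_{i\in S}(P(i)-Q(i))^2(P(i)+Q(i))\le \delta^2(2b+\delta).
\end{equation}

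Then treat the two cases by Chebyshev. In the null case $\delta=0$, so $\mathrm{Var}[Z]\le 8m^2b^2$. The probability that $Z$ exceeds $m^2\eps^2/2$ is then at most $32b^2/(m^2\eps^4)$, which is below $10^{-3}$ once $m\ge C b/\eps^2$.

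In the far case $\delta\ge\eps$, it suffices that $\mathrm{Var}[Z]\le 10^{-3}(m^2\delta^2/2)^2$. This requires two conditions:
\begin{itemize}
\item $m^3\delta^2(b+\delta)\lesssim m^4\delta^4$, i.e.\ $m\gtrsim b/\delta^2+1/\delta$, which holds if $m\gtrsim b/\eps^2+1/\eps$;
\item $m^2(b+\delta)^2\lesssim m^4\delta^4$, which is again implied by $m\gtrsim b/\delta^2+1/\delta$.
\end{itemize}
So $m=\cHSdistr\max\{b/\eps^2,1/\eps\}$ suffices for both cases. The additional $\sqrt d$ term in the statement only weakens the claim; if needed, it can absorb the overhead of de-Poissonization and the requirement that $m$ be at least a constant.

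The main obstacle is the far case. There the a priori bound is only on $\|\subD{Q}{S}{}\|_2$, not on $\|\subD{P}{S}{}\|_2$, so the variance must be controlled in terms of $\delta$ itself. The triangle-inequality bound $\|\subD{P}{S}{}\|_2\le b+\delta$ is exactly what produces the $1/\eps$ term, and the Cauchy--Schwarz step on the cubic term has to be done carefully so that no dependence on $|S|$ enters.
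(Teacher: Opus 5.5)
Your proof is correct and follows the standard route. The ingredients are the Poissonized collision statistic summed over $S$, its exact mean and variance, Chebyshev in both cases, and the triangle-inequality bound $\|P^S\|_2\le b+\delta$, which is exactly what produces the $1/\eps$ term; the paper does not prove this lemma itself but imports it from \cite{pmlr-v291-seyfried25a}, and this is presumably the argument behind it. Your bound also does not need the $\sqrt{d}$ term, which is harmless because the statement only claims sufficiency.
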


\subsection{Subroutine: Testing Equivalence on Subspaces in $\ell_2$-Distance}
\label{sec:prelim_test_ltwo}
We begin our discussion by generalizing the identity tester in $\ell_2$ distance discussed in \cite{chen_toward_2021} to compare states reduced to a subspace. They do this implicitly in the proof of their Lemma 6.10.

\begin{lemma}
\label{idenl2:lemma:equiv_proj}
Let $\rho, \sigma \in \mathbb{C}^{d \times d}$ be two unknown quantum states, $\Pi$ be a known projection mapping to a subspace of dimension $d_{\Pi}$ such that $\| \Pi\sigma\Pi \|_{\opn} \leq \sigma_{\Pi}^{\max}$ and $\eps \in (0,1)$ be a parameter. Given sample access to $\rho$ and $\sigma$, in order to distinguish between $\Pi\rho\Pi=\Pi\sigma\Pi$ and $\|\Pi(\rho-\sigma)\Pi\|_2^2\geq \varepsilon$ with probability at least $0.99$,  \begin{equation}   N=\cHSquantum\max\left\{\frac{\sqrt{d_{\Pi}}\tr[\Pi\sigma\Pi]}{\varepsilon},\sqrt{\frac{d_{\Pi}}{\varepsilon}}\right\}\leq \cHSquantum\max\left\{\frac{\sqrt{d_{\Pi}}\min\{d_{\Pi}\sigma_{\Pi}^{\max},1\}}{\varepsilon},\sqrt{\frac{d_{\Pi}}{\varepsilon}}\right\}
    \end{equation}
samples of $\rho$ and $\sigma$ each, where $\cHSquantum$ is a suitable universal constant. %

\end{lemma}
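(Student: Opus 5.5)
We follow the random-basis approach of \cite{chen_toward_2021}, restricted to the subspace $\Pi$. Pick a Haar-random unitary $U$ on the $d_\Pi$-dimensional range of $\Pi$, with columns $U_1,\dots,U_{d_\Pi}$. On every copy of $\rho$ and of $\sigma$ we measure the POVM $\{U_i U_i^\dagger\}_{i\in[d_\Pi]}\cup\{\mathbf{1}_d-\Pi\}$; this is non-adaptive once $U$ is fixed. Conditioned on $U$, the outcomes are i.i.d.\ samples from distributions $P_U$ and $Q_U$ on $[d_\Pi]\cup\{\perp\}$, with $P_U(i)=U_i^\dagger\Pi\rho\Pi U_i$ and $Q_U(i)=U_i^\dagger \Pi\sigma\Pi U_i$. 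Take $S=[d_\Pi]$, discarding $\perp$, and run the classical tester of \Cref{prelim:lemma:equivalence_l2} on $P_U^S, Q_U^S$. If $\Pi\rho\Pi=\Pi\sigma\Pi$, then $P_U^S=Q_U^S$ for every $U$, so completeness is immediate.

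For soundness we set $M=\Pi(\rho-\sigma)\Pi$, viewed as a Hermitian operator on the range of $\Pi$. Then $\|P_U^S-Q_U^S\|_2^2=\tr[\sum_i (U_i^\dagger M U_i)^2]$. By \Cref{prel:lemma:exp_z_to_l2}, the expectation is $\frac{\tr[M]^2+\|M\|_2^2}{d_\Pi+1}\ge \frac{\varepsilon}{d_\Pi+1}$. The trace of $M$ need not vanish, because $\tr[\Pi\rho\Pi]\neq\tr[\Pi\sigma\Pi]$ in general. In that case we write $M=M_0+\frac{\tr M}{d_\Pi}\Pi$ with $\tr M_0=0$. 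The identity part adds the same amount $\tr M/d_\Pi$ to every entry. Expanding $\sum_i(a_i+c)^2$ with $\sum_i a_i=0$ gives $\|P_U^S-Q_U^S\|_2^2\ge \tr[\sum_i(U_i^\dagger M_0U_i)^2]$, so it suffices to control the traceless part. (Alternatively, the term $(\tr M)^2/d_\Pi$ alone already certifies a difference when $\|M_0\|_2^2$ is small relative to $\|M\|_2^2$.) The second-moment bound \eqref{eq_lemma_weingarten_moment_two} gives $\E[Z^2]\le(1+O(1/d_\Pi))\E[Z]^2$, where $Z=\tr[\sum_i(U_i^\dagger M_0U_i)^2]$. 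Paley--Zygmund then gives $Z\ge \|M_0\|_2^2/(2d_\Pi)$ with constant probability. The main obstacle is that this constant is small, roughly $1/4$, rather than $0.99$. We would first handle the traceless reduction carefully, ensuring $\|M_0\|_2^2\gtrsim\varepsilon$ or otherwise using the trace term. We would then boost the success probability by repeating with $O(1)$ independent unitaries and rejecting if any run rejects. The classical tester is run at error $0.001/O(1)$, so completeness survives a union bound.

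It remains to count samples. The classical tester is run with threshold $\varepsilon' = \sqrt{\varepsilon/(2d_\Pi)}$ in $\ell_2$ and bound $b\ge\|Q_U^S\|_2$. This costs $\max\{b/\varepsilon'^2,1/\varepsilon',\sqrt{d_\Pi}\}$ samples. The bound $b$ should hold with good probability over $U$. Using the Haar average $\E\|Q_U^S\|_2^2=(\tr[\Pi\sigma\Pi]^2+\|\Pi\sigma\Pi\|_2^2)/(d_\Pi+1)\le 2\tr[\Pi\sigma\Pi]^2/d_\Pi$ and Markov, we may take $b=O(\tr[\Pi\sigma\Pi]/\sqrt{d_\Pi})$. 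This gives $b/\varepsilon'^2=O(\sqrt{d_\Pi}\tr[\Pi\sigma\Pi]/\varepsilon)$ and $1/\varepsilon'=O(\sqrt{d_\Pi/\varepsilon})$. The term $\sqrt{d_\Pi}$ is dominated by $\sqrt{d_\Pi/\varepsilon}$ since $\varepsilon<1$. The final inequality follows from $\tr[\Pi\sigma\Pi]\le\min\{d_\Pi\sigma_\Pi^{\max},1\}$.
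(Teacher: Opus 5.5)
Your proof is correct. It shares the paper's skeleton: the random-basis POVM on the range of $\Pi$, the classical subset tester of \Cref{prelim:lemma:equivalence_l2}, the Weingarten moments of \Cref{prel:lemma:exp_z_to_l2}, and Markov's inequality for $\|Q_U^S\|_2$. Where you differ is in how you get the high-probability lower bound on $Z=\|P_U^S-Q_U^S\|_2^2$.

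The paper (\Cref{lemma:helper_variance_z}) observes that $Z$ differs from its traceless counterpart only by the deterministic shift $\tr[M]^2/d_\Pi$. Hence $\textnormal{Var}[Z]=\textnormal{Var}[\tilde Z]\le O(1/d_\Pi)\|\tilde M\|_2^4/d_\Pi^2\le O(1/d_\Pi)\|M\|_2^4/d_\Pi^2$, while $\E[Z]\ge \|M\|_2^2/(d_\Pi+1)$. Chebyshev then gives $Z\ge\E[Z]/2$ with probability $1-O(1/d_\Pi)$ from a single basis, with no case split on the trace. Small $d_\Pi$ is handled by padding with a maximally mixed ancilla (\Cref{remark:min_dim}).

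So the ``main obstacle'' you identify is not really there. The bound $\E[Z_0^2]\le(1+O(1/d_\Pi))\E[Z_0]^2$ that you quote already says $\textnormal{Var}[Z_0]=O(1/d_\Pi)\,\E[Z_0]^2$. The basic Paley--Zygmund bound with $\theta=1/2$ simply does not exploit this.

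Your alternative is also valid at constant overhead. You split according to whether $\|M_0\|_2^2$ or $\tr[M]^2/d_\Pi$ carries half of $\|M\|_2^2=\|M_0\|_2^2+\tr[M]^2/d_\Pi$, get constant success probability from Paley--Zygmund, and boost with $O(1)$ independent bases under an OR rule with amplified classical tests. This has a small advantage: it works uniformly in $d_\Pi$ without the ancilla trick. For that you should note that for small $d_\Pi$, where the asymptotic $(1+o(1))$ in \Cref{prel:lemma:exp_z_to_l2} says nothing, the trivial bound $Z_0\le\|M_0\|_2^2$ gives $\E[Z_0^2]/\E[Z_0]^2\le(d_\Pi+1)^2$. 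The paper's route buys a single measurement basis and cleaner handling of the trace term.

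One minor point: in the traceless case you only get $Z\ge\varepsilon/(4(d_\Pi+1))$, so the $\ell_2$ threshold must be adjusted by a constant. This does not affect the final bound.
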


Note that using a majority vote over $O(\log(1/\delta))$ repetitions amplifies the probability of success to $1-0.01\delta$.

\begin{proof}
First, note that we may without loss of generality assume that $d_{\Pi}\geq d_0$, for $d_0$ a suitably chosen constant, which will be necessary to show that certain random variables are close to their expectation value. For a discussion, see \Cref{remark:min_dim} below.
    
We now define a POVM by selecting a random unitary $U$ on the support of $\Pi$, and obtain the POVM from the eigenvectors of $U$, $\{\ket{U}_i\}_{i=1}^d$, via $\{\dyad{U_1}, \ldots,\dyad{U_t}, \mathds{1}-\Pi\}$, where $t:=d_{\Pi}$. We can now measure $\rho$ and $\sigma$ using this POVM to obtain a classical probability distribution $P$ over $[d_{\Pi}+1]$: for $i\in S:=[d_{\Pi}]$, $P(i):=\bra{U_i}\rho\ket{U_i}=\bra{U_i}\Pi\rho\Pi\ket{U_i}$ and $P(d_{\Pi}+1)=\mathds{1}-\tr[\Pi \rho]$. Analogously, we obtain another classical probability distribution $Q$ from $\sigma$.
    
We now want to test for equivalence on the subset $S$ defined by $\{U_i\}_{i=1}^d$. Let us define a random variable $Z$ as follows:
$$Z=\tr[\sum_i(U_i^{\dagger}MU_i)^2]=\|\subD{P}{S}{}-\subD{Q}{S}{}\|_2^2.$$ 

Applying \Cref{prel:lemma:exp_z_to_l2} to $M=\Pi(\rho-\sigma)\Pi$, gives us
    \begin{equation}
    \label{eq:equiv_l2_distr_dist}
        \E[Z]=\E[\|\subD{P}{S}{}-\subD{Q}{S}{}\|_2^2]=\frac{\tr[\Pi(\rho-\sigma)]^2+\|\Pi(\rho-\sigma)\Pi\|_2^2}{d_{\Pi}+1}.
    \end{equation}

Note that when $\Pi\rho\Pi=\Pi\sigma\Pi$, then the above \Cref{eq:equiv_l2_distr_dist} is zero. On the other hand, when $\|\Pi(\rho-\sigma)\Pi\|_2^2\geq \varepsilon$, \Cref{eq:equiv_l2_distr_dist} is lower bounded by $\varepsilon/(d_{\Pi}+1)$.

It is easy to show that this holds not only in expectation, but with high probability (see \Cref{lemma:helper_variance_z}), up to a constant factor.
    Further, with the choice $M=\Pi\sigma\Pi$, from \Cref{prel:lemma:exp_z_to_l2}, we also get that
    \begin{align}
        \E[\|\subD{Q}{S}{}\|_2^2]&= \frac{\tr[\Pi\sigma\Pi]^2+\|\Pi\sigma\Pi\|_2^2}{d_{\Pi}+1}\leq \frac{2\tr[\Pi\sigma\Pi]^2}{d_{\Pi}+1}.
    \end{align}
    Using Markov's inequality, we have $\|\subD{Q}{S}{}\|_2^2
        \leq 2000 \tr[\Pi\sigma\Pi]^2/d_{\Pi}$ with probability $0.999$.

Now we will apply the equivalence tester (\Cref{prelim:lemma:equivalence_l2}), which decides whether $P^S=Q^S$ or $\|P^S-Q^S\|_2^2\geq \eta$ using $\cHSdistr\max\{\|Q^S\|_2/\eta, \sqrt{1/\eta},\sqrt{d_{\Pi}}\}$ samples. For $\cHSquantum$ a suitably chosen universal constant, %
we get the following sample complexity (note that $\eta=\varepsilon/(d_{\Pi}+1)$ by \eqref{eq:equiv_l2_distr_dist}):
    \begin{align}
        N=\cHSdistr\max\left\{\frac{\|Q^S\|_2}{\eta}, \sqrt{\frac{1}{\eta}},\sqrt{d_{\Pi}}\right\}
        &\leq
        \cHSquantum\max\left\{\frac{\sqrt{d_{\Pi}}\tr[\Pi\sigma\Pi]}{\varepsilon},\sqrt{\frac{d_{\Pi}}{\varepsilon}}\right\} 
        \\
        &\leq
        \cHSquantum\max\left\{\frac{\sqrt{d_{\Pi}}\min\{d_{\Pi}\sigma_{\Pi}^{\max},1\}}{\varepsilon},\sqrt{\frac{d_{\Pi}}{\varepsilon}}\right\}.
    \end{align}
Using a union bound, our bound on $\|Q^S\|_2$ and the comparison of $P^S$ and $Q^S$ succeed with probability at least $0.99$.
\end{proof}

\begin{remark}[Minimal dimension]
\label{remark:min_dim}
    If $d_{\Pi}<d_0$, we perform a preprocessing in which we add a helper system $X$ of size $|X|=\lceil d_0/d_{\Pi}\rceil$ and consider states $\tilde\rho=\rho\otimes 1_X/d_X$, $\tilde\sigma=\sigma\otimes 1_X/d_X$. We let $\tilde\Pi=\Pi\otimes 1_X$. Clearly, $\|\tilde\Pi(\tilde\rho-\tilde\sigma)\tilde\Pi\|_2^2= \|\Pi(\rho-\sigma)\Pi\|_2^2/d_X$ and $d_{\tilde\Pi}=d_0$, such that we only need to increase the precision by a constant factor of $d_X$. We then pick the random unitaries over the joint system. For practical use, we do not need an additional system, since the POVM $\{\tr_X[\dyad{U_i}]/d_X\}$ follows the same statistics, e.g., $\bra{U_i}(\tilde\rho-\tilde\sigma)\ket{U_i}=\tr\left[(\rho-\sigma)\tr_X[\dyad{U_i}/d_X]\right]$.
\end{remark}

\begin{lemma}
\label{lemma:helper_variance_z}
Let $U$ be a random unitary on a subspace characterized by a projection $\Pi$ of dimension $d_{\Pi}\geq d_0$, and denote the eigenvectors of $U$ by $\{\ket{U_i}\}_{i=1}^d$. Further, suppose that $\rho$ and $\sigma \in \mathbb{C}^{d \times d}$ are two quantum states such that $\|\Pi(\rho-\sigma)\Pi\|_2^2\geq\varepsilon$. Then, with probability at least $0.999$,
\begin{align}
\label{eq:l2test_whp_bound}
\tr[\sum_i(U_i^{\dagger}(\rho-\sigma) U_i)^2]\geq \frac{\varepsilon}{2(d_{\Pi}+1)}.
\end{align}
\end{lemma}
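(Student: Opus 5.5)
The plan is a second-moment (Paley–Zygmund / Chebyshev) argument applied to the random variable $Z:=\tr[\sum_i(U_i^{\dagger}MU_i)^2]$ with $M=\Pi(\rho-\sigma)\Pi$. Since $U$ acts only on the support of $\Pi$ and $\ket{U_i}$ lies in that support, we have $U_i^{\dagger}(\rho-\sigma)U_i=U_i^{\dagger}MU_i$. The left-hand side of \eqref{eq:l2test_whp_bound} is therefore exactly $Z$, and we may regard $M$ as a Hermitian matrix on a $d_{\Pi}$-dimensional space and apply \Cref{prel:lemma:exp_z_to_l2} with $d$ replaced by $d_{\Pi}$.

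\textbf{Reduction to the traceless case.} $M$ need not be traceless, since $\tr[\Pi(\rho-\sigma)]$ can be nonzero. Write $M=M_0+\frac{t}{d_{\Pi}}\Pi$ with $t=\tr[M]$ and $\tr[M_0]=0$. Then $U_i^{\dagger}MU_i=U_i^{\dagger}M_0U_i+t/d_{\Pi}$, so
\begin{align}
Z=Z_0+\frac{2t}{d_{\Pi}}\sum_i U_i^{\dagger}M_0U_i+\frac{t^2}{d_{\Pi}} .
\end{align}
The middle term equals $\frac{2t}{d_\Pi}\tr[M_0]=0$, because the $U_i$ form an orthonormal basis of the support. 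Hence $Z=Z_0+t^2/d_{\Pi}\geq Z_0$. Moreover $\|M_0\|_2^2=\|M\|_2^2-t^2/d_{\Pi}$. It therefore suffices to handle two cases.

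\textbf{Case analysis.} If $t^2/d_{\Pi}\geq \|M\|_2^2/2$, then $Z\geq t^2/d_{\Pi}\geq \varepsilon/(2d_{\Pi})$ deterministically, which is more than enough. Otherwise $\|M_0\|_2^2\geq\|M\|_2^2/2$, and \Cref{prel:lemma:exp_z_to_l2} gives $\E[Z_0]=\|M_0\|_2^2/(d_{\Pi}+1)$ and $\E[Z_0^2]\leq(1+O(1/d_{\Pi}))\|M_0\|_2^4/d_{\Pi}^2$. Consequently
\begin{align}
\mathrm{Var}(Z_0)\leq \|M_0\|_2^4\left(\frac{1+O(1/d_\Pi)}{d_\Pi^2}-\frac{1}{(d_\Pi+1)^2}\right)=O\!\left(\frac{\|M_0\|_2^4}{d_\Pi^3}\right)=O\!\left(\frac{\E[Z_0]^2}{d_{\Pi}}\right).
\end{align}
By Chebyshev, $\Pr[Z_0<\E[Z_0]/2]\leq 4\mathrm{Var}(Z_0)/\E[Z_0]^2=O(1/d_{\Pi})$, which is at most $0.001$ once $d_{\Pi}\geq d_0$ for a suitable constant $d_0$; this is exactly where the assumption $d_\Pi\geq d_0$ enters. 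On this event,
\begin{align}
Z\geq Z_0\geq \frac{\|M_0\|_2^2}{2(d_\Pi+1)}\geq\frac{\|M\|_2^2}{4(d_\Pi+1)}\geq \frac{\varepsilon}{4(d_\Pi+1)} .
\end{align}

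\textbf{Main obstacle: the constant.} The case split above loses a factor of $2$ and yields $\varepsilon/(4(d_\Pi+1))$ rather than the claimed $\varepsilon/(2(d_\Pi+1))$. The fix is to not split at all. Since $Z\geq t^2/d_{\Pi}+Z_0$ and $Z_0\geq(1-\delta)\|M_0\|_2^2/(d_\Pi+1)$ with high probability for $\delta=O(1/\sqrt{d_\Pi})$, we obtain
\begin{align}
Z\geq (1-\delta)\frac{t^2/d_\Pi+\|M_0\|_2^2}{d_\Pi+1}=(1-\delta)\frac{\|M\|_2^2}{d_\Pi+1}.
\end{align}
Choosing $d_0$ large enough that $\delta\leq 1/2$ at failure probability $0.001$ then gives the stated bound $\varepsilon/(2(d_\Pi+1))$.
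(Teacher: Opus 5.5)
Your final (no-split) argument is correct and is essentially the paper's proof. The paper likewise centers $M$ to the traceless $\tilde M=M-\mathds{1}_{\Pi}\tr[M]/d_{\Pi}$ and notes that $Z$ and $\tilde Z$ differ by the deterministic constant $\tr[M]^2/d_{\Pi}$. It then uses the second-moment bound of \Cref{prel:lemma:exp_z_to_l2} to get $\mathrm{Var}[Z]=\mathrm{Var}[\tilde Z]=O(\|M\|_2^4/d_{\Pi}^3)$, and applies Chebyshev to $Z$ directly using $\E[Z]\geq\|M\|_2^2/(d_{\Pi}+1)$; this is equivalent to your applying Chebyshev to $Z_0$ and adding back the shift. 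Your preliminary case split is unnecessary (as you noticed, it loses a factor of $2$), but the version you end with yields exactly the stated constant $\varepsilon/(2(d_{\Pi}+1))$.
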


\begin{proof}
    We are interested in computing the variance of $\sum_i\bra{U_i}\tilde M\ket{U_i}$. However, we cannot use \eqref{eq_lemma_weingarten_moment_two} in \Cref{prel:lemma:exp_z_to_l2} directly, since the trace might not be zero. Instead, we define 
    \begin{align}
        \tilde M:=\Pi(\rho-\sigma)\Pi-\mathds{1}_{\Pi}\frac{\tr[\Pi(\rho-\sigma)\Pi]}{d_{\Pi}}=M-\mathds{1}_{\Pi}\frac{\tr[M]}{d_{\Pi}}.
    \end{align} Then $\tr[\tilde M]=0$. Now let us define the random variable $\tilde Z$ as follows:
    \begin{align}
        \tilde Z:=\tr[\sum_i(U_i^{\dagger}\tilde MU_i)^2]=\sum_i\bra{U_i}\tilde M\ket{U_i}^2.
    \end{align}
Following \Cref{prel:lemma:exp_z_to_l2}, we can say that
\begin{align}   
\textnormal{Var}[\tilde Z]=\E\left[\left(\sum_i\bra{U_i}\tilde M\ket{U_i}^2\right)^2\right]-\E\left[\left(\sum_i\bra{U_i}\tilde M\ket{U_i}^2\right)\right]^2\leq \frac{(1+o(1))\|\tilde M\|_2^4}{d_{\Pi}^2}-\frac{\|\tilde M\|_2^4}{(d_{\Pi}+1)^2}.
\end{align}

Now we have:
$$Z=\tr[\sum_i(U_i^{\dagger}MU_i)^2]=\|\subD{P}{S}{}-\subD{Q}{S}{}\|_2^2.$$ 
Note that $Z$ and $\tilde Z$ are simply shifted by a constant independent of $\{U_i\}$, 
\begin{equation}
\tilde Z=\tr[\sum_i(U_i^{\dagger}\tilde MU_i)^2]=\tr[\sum_i(U_i^{\dagger} MU_i)^2]-\frac{\tr[M]^2}{d_{\Pi}}=Z-\frac{\tr[M]^2}{d_{\Pi}}.
\end{equation}

Thus 
\begin{align}
\textnormal{Var}[Z]=\textnormal{Var}[\tilde Z]\leq o(1)\frac{\|\tilde M\|_2^4}{d_{\Pi}^2}&=o(1)\frac{\left(\tr[(M-1_{\Pi}\tr[M]/d_{\Pi})^2]\right)^2}{d_{\Pi}^2}
\\
&=o(1)\frac{(\|M\|_2^2-\tr[M]^2/d_{\Pi})^2}{d_{\Pi}^2}\leq o(1)\frac{\|M\|_2^4}{d_{\Pi}^2},
\end{align}
where we used Jensen's inequality in the last step ($d_{\Pi}\|M\|_2^2\geq \tr[M]^2$). By applying Chebyshev's inequality, we can say that
\begin{align}
\Pr[|Z-\E[Z]|\geq \E[Z]/2]\leq \frac{\textnormal{Var}[Z]}{(1/2)^2\E[Z]^2}\leq 4O\left(\frac{1}{d_{\Pi}}\right)\left(\frac{d_{\Pi}+1}{d_{\Pi}}\right)^2\leq O\left(\frac{1}{d_0}\right).
\end{align}
The lemma follows assuming $d_0$ is sufficiently larger than the implicit constant.
\end{proof}

\section{Certification in Fidelity}
\label{sec:id_testing}

In this section, we prove our main result of certification in fidelity:

\fididtesting*

\color{black}

We will prove the above theorem in several steps. In a first step (\Cref{lemma:fid_id_testing_general}), we give an algorithm which does not use knowledge of the rank of $\sigma$, i.e., assumes full-rank. We next use this as a subroutine in an algorithm that has a guarantee on the rank of $\sigma$ in \Cref{lemma:certification_rank_dependence}. Finally, in \Cref{lemma:certification_lower_bounds}, we discuss the lower bound.

Let $k$ be a constant to be defined. Then we partition the eigenvalues of $\sigma$ into $k+1$ buckets $\{S_i\}_{i=1}^{k+1}$, such that
    \begin{equation}
    \label{eq:id_buckets}
        0\leq i<k:~ S_i:=\{j |2^{-i-1}< \sigma_j\leq 2^{-i}\},\qquad S_k:=\{j| \sigma_j\leq 2^{-k}\}.
    \end{equation}
    For readability, we introduce $\bv_i^{\min}$ and $\bv_i^{\max}$ as the minimal and maximal values in $S_i$ (which, for all buckets except the last bucket, gives $\bv_i^{\min}=2^{-i-1}$ and $\bv_i^{\max}=2^{-i}$) ~\footnote{While here they are the same up to a factor of two, this will no longer hold true in the case of equivalence testing, which is why we emphasize the distinction between $\bv_i^{\min}$ and $\bv_i^{\max}$ here.}. We then define projectors  projecting to the corresponding subspaces as follows:
    \begin{equation}
        \Pi_{S_i}:=\sum_{j\in S_i}\dyad{\sigma_j}.
    \end{equation}

Let us start with our dimension dependent upper bound for fidelity certification.
    
\begin{lemma}[Certification, Dimensional Dependence]
    \label{lemma:fid_id_testing_general}
Let $\rho, \sigma \in \mathbb{C}^{d \times d}$ be quantum states and $\eps \in (0,1)$ be a parameter. We assume that $\sigma$ is known while $\rho$ is accessible via samples. Then, using non-adaptive single-copy measurements, deciding whether
\begin{align}
        \rho=\sigma
        \quad\text{or}\quad
        F(\rho,\sigma)\leq 1-\varepsilon
\end{align}
with probability at least $0.99$, can be done using $\widetilde{O}(d^{3/2}/\varepsilon)$ samples of $\rho$.

\end{lemma}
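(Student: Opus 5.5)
We first apply Remark~\ref{rem:min_ev}: replacing $\rho,\sigma$ by their depolarized versions with $\nu=\Theta(\varepsilon)$, we may assume every eigenvalue of $\sigma$ is at least $\nu/d=\Omega(\varepsilon/d)$. The completeness case is preserved. In the soundness case we still have $F\le 1-\varepsilon/2$. Depolarized samples of $\rho$ are obtained by replacing each copy with $\tau$ with probability $\nu$.

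With this, choose $k=\lceil\log_2(d/\nu)\rceil=O(\log(d/\varepsilon))$. Then the last bucket $S_k$ in \eqref{eq:id_buckets} is empty or contains only eigenvalues within a factor $2$ of $2^{-k}$. So every nonempty bucket has $\bv_i^{\max}\le 2\bv_i^{\min}$. For each pair $(i,j)$ let $\Pi_{ij}:=\Pi_{S_i}+\Pi_{S_j}$ (just $\Pi_{S_i}$ if $i=j$).

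The structural step bounds the infidelity by the measured $\chi^2$-divergence (Lemma~\ref{lemma:fid_chi2_bound}) and splits the double sum over eigenindices by bucket pairs:
\begin{align}
1-F(\rho,\sigma)\le \sum_{i,j\le k}\sum_{a\in S_i,b\in S_j}\frac{|\bra{\sigma_a}\rho\ket{\sigma_b}-\delta_{ab}\sigma_a|^2}{\sigma_a+\sigma_b}\le \sum_{i,j}\frac{\|\Pi_{S_i}(\rho-\sigma)\Pi_{S_j}\|_2^2}{\bv_i^{\min}+\bv_j^{\min}}.
\end{align}
The second inequality uses Lemma~\ref{lemma:l2_proj_sum}. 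Also $\|\Pi_{S_i}(\rho-\sigma)\Pi_{S_j}\|_2^2\le\|\Pi_{ij}(\rho-\sigma)\Pi_{ij}\|_2^2$. Hence if $F\le1-\varepsilon/2$, pigeonhole over the $(k+1)^2$ pairs gives some $(i,j)$ with
\begin{align}
\|\Pi_{ij}(\rho-\sigma)\Pi_{ij}\|_2^2\ge \varepsilon_{ij}:=\frac{\varepsilon(\bv_i^{\min}+\bv_j^{\min})}{2(k+1)^2}.
\end{align}
If $\rho=\sigma$, every $\Pi_{ij}\rho\Pi_{ij}=\Pi_{ij}\sigma\Pi_{ij}$.

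The algorithm runs the tester of Lemma~\ref{idenl2:lemma:equiv_proj} on each pair with threshold $\varepsilon_{ij}$. Since $\sigma$ is known, its samples are simulated classically, or the certification version of the tester is used. Each run is amplified to failure probability $0.01/(k+1)^2$ with $O(\log k)$ repetitions. The algorithm rejects iff some run rejects, and a union bound gives overall success $0.99$. All measurements (random bases on the fixed subspaces $\Pi_{ij}$) are determined by $\sigma$ in advance, so the procedure is non-adaptive.

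The remaining step is the sample count per pair. Let $d_{ij}=|S_i|+|S_j|\le d$. Then $\tr[\Pi_{ij}\sigma\Pi_{ij}]\le |S_i|\bv_i^{\max}+|S_j|\bv_j^{\max}\le 2d(\bv_i^{\min}+\bv_j^{\min})$. The first term of Lemma~\ref{idenl2:lemma:equiv_proj} is therefore
\begin{align}
\frac{\sqrt{d}\cdot 2d(\bv_i^{\min}+\bv_j^{\min})}{\varepsilon_{ij}}=O\!\left(\frac{d^{3/2}k^2}{\varepsilon}\right).
\end{align}
The second term is $\sqrt{d/\varepsilon_{ij}}$. Using $\bv^{\min}\ge\Omega(\varepsilon/d)$, this is $O(dk/\varepsilon)\le O(d^{3/2}k/\varepsilon)$.

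Summing over the $(k+1)^2$ pairs and the amplification factor gives $O(d^{3/2}\,\mathrm{polylog}(d/\varepsilon)/\varepsilon)=\widetilde O(d^{3/2}/\varepsilon)$.

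The main obstacle is this second term. Without the eigenvalue floor from the depolarization step, $\sqrt{d/\varepsilon_{ij}}$ would blow up for tiny eigenvalues. The point to verify carefully is that the floor makes both $k$ logarithmic and the second term dominated. The trace term itself is harmless because the bucket ratio $\bv^{\max}/\bv^{\min}$ is bounded by $2$.
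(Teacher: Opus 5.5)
Your proposal is correct and follows essentially the same route as the paper. You depolarize to obtain an $\Omega(\eps/d)$ eigenvalue floor, bound $1-F$ by the measured $\chi^2$-divergence split over bucket pairs, pigeonhole to a pair $S_i\cup S_j$ with large $\ell_2$ deviation, and run the subspace tester of \Cref{idenl2:lemma:equiv_proj} on every pair with $O(\log k)$-fold amplification and a union bound; your accounting of both terms of that tester's sample complexity matches the paper's, and is slightly more careful about the factor $2$ lost to depolarization and about the last bucket.
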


\begin{proof}

    In the following we use the aforementioned definitions for $S_i$ (see \eqref{eq:id_buckets}) and let $k=2\lceil\log(d/\varepsilon)\rceil$. Recall that by \Cref{rem:min_ev}, we may assume that the eigenvalues of $\rho$ and $\sigma$ are of order $\widetilde{\Omega}(\eps/d)$. We may assume that for all $S_u$, $\bv_u^{\max}\leq 2\bv_u^{\min}$. Note that for $u\neq v$, $\Pi_{S_u\cup S_v}=\Pi_{S_u}+\Pi_{S_v}$. Define 
    \begin{equation}
        \mathcal{S}:=\{S_{uv}:=S_u\cup S_v\}_{u,v},
    \end{equation}
    where $|\mathcal{S}|\leq (k+1)^2$. We may then rewrite, using \Cref{lemma:fid_chi2_bound} and \Cref{lemma:l2_proj_sum} (where $R=S$)
\begin{align}
    \label{eq:bound_f_by_l2}
        1-F(\rho,\sigma)%
        \leq D_{\chi^2}(\rho\|\sigma)&=\sum_{i,j\in [d]}\frac{|\bra{\sigma_i}\rho\ket{\sigma_j}-\delta_{ij}\sigma_{i}|^2}{\sigma_i+\sigma_j}
        \\
        &\leq \sum_{u,v}\sum_{(i,j)\in S_u\times S_v}\frac{|\bra{\sigma_i}\rho\ket{\sigma_j}-\delta_{ij}\sigma_{i}|^2}{\sigma_i+\sigma_j}
        \\
        &\leq \sum_{u,v}\frac{\sum_{(i,j)\in S_u\times S_v}|\bra{\sigma_i}\rho\ket{\sigma_j}-\delta_{ij}\sigma_{i}|^2}{\bv_u^{\min}+\bv_v^{\min}}
        \\
        &\leq \sum_{u,v}\frac{\sum_{(i,j)\in S_{u}\cup S_{v}\times S_{u}\cup S_{v}}|\bra{\sigma_i}\rho\ket{\sigma_j}-\delta_{ij}\sigma_{i}|^2}{\bv_u^{\min}+\bv_v^{\min}}
        \\
        \textnormal{(\Cref{lemma:l2_proj_sum})}\quad &\leq \sum_{u,v}\frac{\|\Pi_{S_{u}\cup S_{v}}(\rho-\sigma)\Pi_{S_{u}\cup S_{v}}\|_2^2}{\bv_u^{\min}+\bv_v^{\min}}. \label{eq:fidelitycertification}
    \end{align}

Thus, from \Cref{eq:fidelitycertification}, we can say that  if $1-F(\rho,\sigma)\geq \varepsilon$, then there exists an $S_{uv}\in\mathcal{S}$ for which we have the following:
    \begin{equation}
    \label{eq:identity:sc_formula}
        \|\Pi_{S_{uv}}(\rho-\sigma)\Pi_{S_{uv}}\|_2^2\geq \frac{\varepsilon(\bv_u^{\min}+\bv_v^{\min})}{|\mathcal{S}|}=:\nu_{uv}.
    \end{equation}
    Using \Cref{idenl2:lemma:equiv_proj} we can test each $\|\Pi_{S_{uv}}(\rho-\sigma)\Pi_{S_{uv}}\|_2^2$ using %
    \begin{align}
       N_{S_{uv}}&=\cHSquantum\max\left\{\frac{\sqrt{|S_{uv}|}\min\{|S_{uv}|(\bv_{u}^{\max}+\bv_{v}^{\max}),1\}}{\nu_{uv}},\sqrt{\frac{|S_{uv}|}{\nu_{uv}}}\right\}
       \\
       &\leq 2\cHSquantum k^2\max\left\{\frac{|S_{uv}|^{3/2}}{\varepsilon},\sqrt{\frac{|S_{uv}|}{\varepsilon (\bv_{u}^{\min}+\bv_{v}^{\min})}}\right\}%
    \end{align}
    samples, using $\bv_u^{\max}+\bv_v^{\max}\leq 2(\bv_u^{\min}+\bv_v^{\min})$ for all buckets. Note that as we know $\sigma$ and the measurement, we can simply simulate sampling from $\sigma$. 
    
    The probability of success in the above is $0.99$. To ensure a high probability of success overall, we repeat each test $c_{\textnormal{rep}}\log(k^2)$ times and perform a majority vote, which boosts the success probability to at least $1-0.001/|\mathcal{S}|$. Using the fact that $\sigma_{\Pi_S}^{\min}\geq \varepsilon/d$, the total sample complexity is:
    \begin{align}
        N&=c_{\textnormal{rep}}\log(k^2)\sum_{S_{uv}\in\mathcal{S}}2\cHSquantum k^2\max\left\{\frac{|S_{uv}|^{3/2}}{\varepsilon},\sqrt{\frac{|S_{uv}|}{\varepsilon (\bv_{u}^{\min}+\bv_{v}^{\min})}}\right\}
        \\
        &\leq 2\cHSquantum c_{\textnormal{rep}}\log(k^2)k^4\frac{d^{3/2}}{\varepsilon}%
        \leq \widetilde{O}\left(\frac{d^{3/2}}{\varepsilon}\right).
    \end{align}
    
We will perform this for all $O(\log^2(d/\varepsilon))$ buckets, which allows to test whether $F(\rho,\sigma)=1$ or $\leq 1-\widetilde{O}(\varepsilon)$, with a probability of success which is bounded by a union bound to $1-|\mathcal{S}|\cdot 0.01/|\mathcal{S}|=0.99$.
\end{proof}

\subsection{Certification in Fidelity with Rank Dependence}

Now we move on to proving our rank dependent certification result:

\begin{lemma}[Certification with Rank Dependence]
\label{lemma:certification_rank_dependence}
Let $\rho, \sigma \in \mathbb{C}^{d \times d}$ be quantum states and $\eps \in (0,1)$ be a parameter. We assume that $\sigma$ is known and of rank $r$ while $\rho$ is accessible via samples. Then, using non-adaptive single-copy measurements, deciding whether
\begin{align}
        \rho=\sigma
        \quad\text{or}\quad
        F(\rho,\sigma)\leq 1-\varepsilon
\end{align}
with probability at least $0.99$, $\widetilde{O}(r^{3/2}/\varepsilon)$ samples of $\rho$ are sufficient, where the polylogarithmic factors have no dimensional dependence.

\end{lemma}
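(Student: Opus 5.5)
The plan is to reduce to \Cref{lemma:fid_id_testing_general}, applied on the support of $\sigma$, and to handle the weight of $\rho$ outside that support with a separate cheap test. Let $\Pi$ be the projector onto $\mathrm{supp}(\sigma)$, so $d_\Pi=r$, and let $\Pi^\perp=\mathds{1}-\Pi$. Since $\sigma$ is known, $\Pi$ is known. I would run two non-adaptive tests. The first is a \emph{leakage test}: measure $\{\Pi,\Pi^\perp\}$ and reject if the outcome $\Pi^\perp$ ever occurs. The second is a \emph{fidelity test on the support}: test the post-measurement state against $\sigma$ with the tester of \Cref{lemma:fid_id_testing_general} in ambient dimension $r$, which costs $\widetilde{O}(r^{3/2}/\varepsilon)$ samples.

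\textbf{Completeness.} If $\rho=\sigma$, the leakage test never fires. Conditioned on outcome $\Pi$, the post-measurement state equals $\sigma$, so the second test accepts with probability $0.99$.

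\textbf{Soundness.} Write $p=\tr[\Pi^\perp\rho]$ and $\rho'=\Pi\rho\Pi/(1-p)$. Since $\sigma=\Pi\sigma\Pi$, we have $\sqrt{\sigma}=\Pi\sqrt{\sigma}$, and so
\begin{align}
F(\rho,\sigma)=\|\sqrt{\rho}\,\Pi\sqrt{\sigma}\|_1=\|\sqrt{\Pi\rho\Pi}\sqrt{\sigma}\|_1=\sqrt{1-p}\,F(\rho',\sigma),
\end{align}
using $\|A B\|_1=\tr\sqrt{B^\dagger A^\dagger A B}$ and $\Pi\rho\Pi=(\sqrt{\rho}\Pi)^\dagger(\sqrt{\rho}\Pi)$. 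Hence $F(\rho,\sigma)\le 1-\varepsilon$ forces either $p\ge\varepsilon/2$ or $F(\rho',\sigma)\le 1-\varepsilon/2$. In the first case, $O(1/\varepsilon)$ samples produce a $\Pi^\perp$ outcome with probability $0.999$. In the second case, the samples with outcome $\Pi$ are exactly copies of $\rho'$, viewed as an $r$-dimensional state. Provided $p<\varepsilon/2$, at most a constant factor of samples is lost, and the tester of \Cref{lemma:fid_id_testing_general} with parameter $\varepsilon/2$ rejects. A union bound over the two tests gives success probability $0.99$ after boosting each test by a constant number of repetitions.

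The step needing care is that every logarithm in the sample count depends on $r$ and $\varepsilon$ only, not on $d$. Inside \Cref{lemma:fid_id_testing_general}, the depolarization of \Cref{rem:min_ev} must use $\tau=\Pi/r$ on the support rather than $\mathds{1}/d$, so that eigenvalues are $\Omega(\varepsilon/r)$. The number of buckets is then $k=O(\log(r/\varepsilon))$, and the $\ell_2$ subroutine \Cref{idenl2:lemma:equiv_proj} is invoked on projections inside $\Pi$, whose cost depends only on $d_{\Pi_{S_{uv}}}\le r$. The extra POVM element $\mathds{1}-\Pi_{S_{uv}}$ in that subroutine absorbs both $\Pi^\perp$ and the rest of the support, so no $d$ enters.

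For the non-adaptive claim, the simplest implementation is not to condition on the leakage outcome. Instead, each POVM $\{\dyad{U_1},\dots,\dyad{U_t},\mathds{1}-\Pi_{S_{uv}}\}$ is applied directly to $\rho$ with $\varepsilon$ rescaled, which handles the factor $1-p$ since $\Pi\rho\Pi=(1-p)\rho'$ differs from $\Pi\sigma\Pi=\sigma$ only through that factor. Alternatively, the $\Pi/\Pi^\perp$ refinement can be merged into these fixed measurements, keeping all measurements fixed in advance. The main obstacle I expect is checking that this unnormalized comparison is legitimate: the $\chi^2$ bound must be applied to the normalized $\rho'$ while the tester sees $\Pi\rho\Pi$, which changes the $\ell_2$ distances by at most the factor $(1-p)$ and an additive trace term of size $O(p\,\|\sigma\|_2)$. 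Both are controlled once $p<\varepsilon/2$.
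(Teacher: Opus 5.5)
Your proposal is correct, and it takes a different and noticeably simpler route than the paper. The paper keeps the off-support part of $\rho$ inside the $\chi^2$ analysis. \Cref{alg:identity_rank} measures one POVM made of the buckets of $\sigma$ together with $\mathds{1}-\Pi_+$, and additionally rejects if some bucket is hit too often. The soundness proof then introduces the auxiliary state $\tilde\sigma=(1-\tr[\rho_0])\sigma\oplus\rho_0$, splits $1-F(\rho,\sigma)$ with the Bures triangle inequality, adds a bucket $L=\mathds{1}-\Pi_+$ to the $\chi^2$ expansion, and controls the cross terms $\Pi_L(\rho-\sigma)\Pi_{S_x}$ via H\"older. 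That is why the paper must certify leakage down to $\tr[\Pi_L\rho]\lesssim\varepsilon/r^{3/2}$ and also bound the bucket weights $\tr[\Pi_{S_x}\rho]$.

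Your identity $F(\rho,\sigma)=\sqrt{1-p}\,F(\rho',\sigma)$ is valid, since fidelity against a state supported on $\Pi$ only sees $\Pi\rho\Pi$. It removes all cross terms at once. As a result, leakage only has to be detected at scale $\varepsilon$, the bucket-weight check becomes unnecessary, and \Cref{lemma:fid_id_testing_general} can be used as a black box in dimension $r$. This also makes the absence of $d$ from the logarithms transparent. What the paper's route buys is uniformity with the bucketed $\chi^2$/$\ell_2$ bookkeeping it reuses for equivalence testing, where the support of the unknown state is not known and no such exact identity is available.

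One caveat concerns your ``simplest'' unconditioned variant. There the additive error $p\,\|\Pi_S\sigma\Pi_S\|_2\le p\sqrt{2(s_u^{\min}+s_v^{\min})}$ must be small compared with $\sqrt{\nu_{uv}}\approx\sqrt{\varepsilon(s_u^{\min}+s_v^{\min})}/k$. That requires $p\lesssim\sqrt{\varepsilon}/k$, which $p<\varepsilon/2$ does not guarantee unless $\varepsilon\lesssim 1/k^2$. So the bound does not close as you state it. Lowering the leakage threshold to $\varepsilon/O(k)$, at a cost of $O(k/\varepsilon)$ samples, repairs it.

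The L\"uders-conditioned version avoids the issue entirely. Equivalently, use the fixed POVMs $\{\Pi M_z\Pi\}_z\cup\{\Pi^\perp\}$ and reject on any $\Pi^\perp$. Conditioned on never seeing $\Pi^\perp$, the outcomes are distributed exactly as the tester's outcomes on $(\rho')^{\otimes N}$. Hence the acceptance probability is at most $(1-p)^N\Pr_{\rho'}[\text{accept}]$, which is small in both of your cases, with no perturbation argument needed.
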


The main idea to prove the above theorem is to perform a preselection. We then measure all samples using the POVM $\{\Pi_+,\mathds{1}-\Pi_+\}$, using $O(r^{3/2}/\varepsilon)$ samples, where $\Pi_+= \textnormal{supp}(\sigma)$. If there is any hit for $\mathds{1}-\Pi_+$, we reject immediately, as it implies that $\textnormal{supp}(\rho)\neq \textnormal{supp}(\sigma)$. Otherwise, we test for equivalence between $\rho$ and $\sigma$ on $\Pi_+$ using our algorithm from before. This will cost $\widetilde{O}(r^{3/2}/\varepsilon)$ by \Cref{lemma:fid_id_testing_general}. 
We argue that this is sufficient to ensure that the states are equal.

\begin{algorithm}[H]
\LinesNumbered
\DontPrintSemicolon
\setcounter{AlgoLine}{0}
\caption{Certification of $\rho$ with rank dependence}
\label{alg:identity_rank}
\KwIn{Sample access to $N=\widetilde{O}(r^{3/2}/\eps)$ samples of $\rho$, description of $\sigma$, parameter $1>\varepsilon> 0$,}
\KwOut{Accept if $\rho=\sigma$, reject if $F(\rho,\sigma)\leq 1-\varepsilon$.}

$\Pi_+\gets \textnormal{supp}(\sigma)$

$\{\Pi_{S_0},...,\Pi_{S_k}\} \gets$ bucketing of $\Pi_+$ for $(1-\eps/100)\sigma+\eps\mathbf{1}_r/(100r)$

Measure $N/2$ samples of $\rho$ using the POVM $\{\Pi_{S_0},...,\Pi_{S_{k}},\mathds{1}-\Pi_+\}$. Reject if $\mathds{1}-\Pi_+$ appears or $\exists i:\Pi_{S_i}$ appears more than $ 10\log(k)N\max\{\tr[\Pi_{S_i}\sigma],\eps/r^{3/2}\}$ times.

$\forall x,y:$ certify $\rho=\sigma$ on $\Pi_{S_x\cup S_y}$ in $\ell_2$ to precision $\alpha^2\eps(\bv_x^{\min}+\bv_y^{\min})$

\Return reject if any test failed. \ 
\end{algorithm}
We note that the testing in line 4 needs to be performed with amplified probability of success to $1-O(1/k^2)$ to ensure a high probability of success over all $O(k^2)$ tests. We now prove the correctness:

\begin{proof}[Proof of \Cref{lemma:certification_rank_dependence}]
First, note that if $\rho=\sigma$, then line 3 in the algorithm does not reject, with high probability (which follows from standard concentration inequalities and a union bound) To prove correctness, assume that we have performed the certification on the support of $\sigma$ successfully, which in particular implies,
\begin{align}
\label{eq:cert_rank_guarantee}
    \forall x,y\in [k+1]_0: %
    \|\Pi_{S_x\cup S_y}(\rho-\sigma)\Pi_{S_x\cup S_y}\|_2\leq  \alpha\sqrt{(\bv_x^{\max}+\bv_y^{\max})\varepsilon},
\end{align}
for $\alpha:= 1/(100c_0(k+2)^2)^2$. We now need to treat the minimal eigenvalues explicitly. Measuring $\rho$ using the POVM $\{\Pi_+,\mathds{1}-\Pi_+\}$ with $r^{3/2}/(\alpha \varepsilon)$ samples of $\rho$ has also not revealed any sample in $\mathds{1}-\Pi_+$. For a suitably large constant $c_0$, this implies that $\tr[(\mathds{1}-\Pi_+)\rho]=\tr[L\rho]\leq \alpha c_0\varepsilon/r^{3/2}$ with probability $0.999$ ~\footnote{Recall that testing whether a coin always shows heads compared to showing head in at least $\nu$ fraction of times requires $O(1/\nu)$ samples.}. We now write
\begin{align}
\sigma=\begin{bmatrix}
    \sigma_+ & 0 \\ 0 & 0
\end{bmatrix},\quad 
\rho=\begin{bmatrix}
    \rho_+ & \rho_{\sim}^* \\ \rho_{\sim} & \rho_0
\end{bmatrix},
\end{align}
where we split the $d\times d$ states in block matrices after $r$ rows and columns, respectively (i.e., $\sigma_+\in \mathbb{C}^{r \times r}$ and $\rho_0\in \mathbb{C}^{d-r \times d-r}$), such that $\text{supp}(\sigma_+)=\Pi_+$, and define the state
\begin{align}
    \tilde\sigma:=\begin{bmatrix}
    (1-\tr[\rho_0])\sigma_+ & 0 \\ 0 & \rho_0
\end{bmatrix}=(1-\tr[\rho_0])\sigma + 
\begin{bmatrix}
    0 & 0 \\ 0 & \rho_0
\end{bmatrix}
.
\end{align}
Our goal is to show that $F(\rho,\sigma)> 1-\eps$, which by assumption implies $\rho=\sigma$. For this, we use that the Bures distance satisfies the triangle inequality, such that we can bound
\begin{align}
\label{eq:cert_r_f_decomp}
    \frac{1-F(\rho,\sigma)}{4}\leq \frac{1-F(\rho,\tilde\sigma)}{2}+\frac{1-F(\tilde\sigma,\sigma)}{2}.
\end{align}
We want to show that both terms are small. First,
\begin{align}
    \frac{1-F(\tilde\sigma,\sigma)}{2}\leq \tr[(\sqrt{\tilde \sigma}-\sqrt{\sigma})^2]=\tr\left[(\sqrt{1-\tr[\rho_0]}-1)^2\sigma\right]+\tr[\rho_0]\leq 2\tr[\rho_0]\leq\frac{\varepsilon}{r^{3/2}}.
\end{align}
To bound $(1-F(\rho,\tilde\sigma))/2$, let $\Delta$ be arbitrarily small. For the mathematical analysis, we apply the following modified depolarization channel: the support of $\sigma$ is depolarized with parameter $\eps/100-r\Delta$, and the entire state is depolarized with parameter $d\Delta$, resulting in $\rho_{\Delta}$, i.e.,
\begin{align}
    \rho\mapsto \rho_{\Delta}=\left(1-\frac{\eps}{100}-\Delta\right)\rho+ \frac{\eps}{100r} 
    \begin{bmatrix}
        \mathbf{1}_r & 0 \\ 0 & 0
    \end{bmatrix}
    + \frac{\Delta}{d}\mathbf{1}_d
\end{align}
with minimal eigenvalues $\Omega(\varepsilon/r+\Delta/d)$ on the support of $\Pi_+$, and minimal eigenvalues $\Delta/d$ otherwise. We proceed analogous for $\sigma$, resulting in $\sigma_{\Delta}$. The bucketing of $\sigma$ on the support of $\sigma$ is the same as before, and the guarantees we tested carry over. There is then only one additional bucket with support $1-\textnormal{supp}(\sigma)$. We denote by $S$ the buckets used in testing equivalence on the support of $\Pi_+$, and we include one last bucket $L:=1-\Pi_+$. Note that we can assume that the buckets corresponding to $\Pi_+$ have minimal eigenvalues in $\widetilde{\Omega}(\eps/r)$. If $\eps\leq (1-F(\rho,\tilde \sigma))/4$, then, by \Cref{rem:min_ev} we can bound
\begin{align}
\label{eq:f_chi2_bound_rank}
    \eps& \leq \frac{1-F(\rho_{\Delta},\tilde \sigma_{\Delta})}{2}
    \\
    &\leq \sum_{\substack{x,y\in S\\ (x,y)\neq (L,L)}}\frac{\|\Pi_{S_x\cup S_{y}}(\tilde \sigma_{\Delta}-\rho_{\Delta})\Pi_{S_x\cup S_{y}}\|_2^2}{\bv_{x}^{\min}+\bv_{y}^{\min}}+\frac{\|\Pi_L(\tilde \sigma_{\Delta}-\rho_{\Delta})\Pi_L\|_2^2}{2\Delta/d}
    \\
    &\leq 2\sum_{\substack{x,y\in S\\ (x,y)\neq (L,L)}}\left(\frac{\|\Pi_{S_x\cup S_{y}}((1-\tr[\rho_0])\sigma-\rho)\Pi_{S_x\cup S_{y}}\|_2^2}{\bv_{x}^{\min}+\bv_{y}^{\min}}+\frac{\|\Pi_{S_x\cup S_{y}}\rho_0\Pi_{S_x\cup S_{y}}\|_2^2}{\bv_{x}^{\min}+\bv_{y}^{\min}}\right)
    \\
    &\leq 4\sum_{\substack{x,y\in S\\ (x,y)\neq (L,L)}}\left(\frac{\|\Pi_{S_x\cup S_{y}}(\sigma-\rho)\Pi_{S_x\cup S_{y}}\|_2^2}{\bv_{x}^{\min}+\bv_{y}^{\min}}+\frac{\tr[\rho_0]^2+\|\Pi_{S_x\cup S_{y}}\rho_0\Pi_{S_x\cup S_{y}}\|_2^2}{\bv_{x}^{\min}+\bv_{y}^{\min}}\right)
    \\
    &\leq 4\sum_{x,y\in S\setminus L}\frac{\|\Pi_{S_x\cup S_{y}}(\sigma-\rho)\Pi_{S_x\cup S_{y}}\|_2^2}{\bv_{x}^{\min}+\bv_{y}^{\min}}
    \\
    &\quad +4\sum_{\substack{x,y\in S \\ x= L \oplus y= L}}\left(\frac{\|\Pi_{S_x\cup S_{y}}(\sigma-\rho)\Pi_{S_x\cup S_{y}}\|_2^2}{\bv_{x}^{\min}+\bv_{y}^{\min}}\right)+\max_{(x,y)\neq (L,L)}\frac{8(k+1)^2\tr[\rho_0]^2}{\bv_{x}^{\min}+\bv_{y}^{\min}},
\end{align}
where $\oplus$ denotes the XOR condition. Note that we can discard the term corresponding to $(x,y)=(L,L)$ since $\Pi_L(\tilde\sigma_{\Delta}-\rho_{\Delta})\Pi_L=0$ by construction. This is because by construction of $\tilde\sigma$, it coincides with $\rho$ on $\text{supp}(\Pi_L)$. We can bound the first term directly using the guarantees we get from \eqref{eq:cert_rank_guarantee},
\begin{align}
    4\sum_{x,y\in S\setminus L}\frac{\|\Pi_{S_x\cup S_{y}}(\sigma-\rho)\Pi_{S_x\cup S_{y}}\|_2^2}{\bv_{x}^{\min}+\bv_{y}^{\min}}\leq 4k^2\alpha^2\eps\leq \frac{\eps}{8}.
\end{align}
For the remaining terms, we note that
\begin{align}
    \|\Pi_{S_x\cup L}(\rho-\sigma)\Pi_{S_x\cup L}\|_2&\leq \|\Pi_{S_x}(\rho-\sigma)\Pi_{S_x}\|_2+2\|\Pi_{L}(\rho-\sigma)\Pi_{S_x}\|_2+\|\Pi_{L}(\rho-\sigma)\Pi_{L}\|_2
        \\
        &\leq \alpha\sqrt{\bv_x^{\max}\varepsilon} + 2\|\Pi_{L}\rho\Pi_{S_x}\|_2+ 2\|\Pi_{L}\sigma\Pi_{S_x}\|_2+\|\Pi_{L}\rho\Pi_{L}\|_2
        \\
        &\leq \alpha\sqrt{\bv_x^{\max}\varepsilon} + 2\|\Pi_{L}\sqrt{\rho}\|_{\opn}\|\sqrt{\rho}\Pi_{S_x}\|_{2}+\|\Pi_{L}\rho\Pi_{L}\|_2
        \\
        &\leq \alpha\sqrt{\bv_x^{\max}\varepsilon} + 2\sqrt{\|\Pi_{L}\rho\Pi_{L}\|_{\opn}\tr[\Pi_{S_x}\rho\Pi_{S_x}]}+\tr[\Pi_{L}\rho\Pi_{L}].
\end{align}
Further,
\begin{align}
    \tr[\Pi_{S_x}\rho\Pi_{S_x}]&\leq 10\log(k)(\tr[\Pi_{S_x}\sigma\Pi_{S_x}]+2\eps/r^{3/2})\leq 30\log(k)r\bv_x^{\max}
\end{align}
and 
\begin{align}
    \|\Pi_{L}\rho\Pi_{L}\|_{\opn}\leq \tr[\Pi_{L}\rho\Pi_{L}]\leq \frac{\alpha c_0\varepsilon}{r^{3/2}}.
\end{align}
Finally,
\begin{align}
    \|\Pi_{S_x\cup S_{y}}\rho_0\Pi_{S_x\cup S_{y}}\|_2\leq \tr[\Pi_{S_x\cup S_{y}}\rho_0\Pi_{S_x\cup S_{y}}]\leq \tr[\rho_0]
\end{align}
Then, since $\bv_x^{\max}\geq \varepsilon/r$, 
\begin{align}
    \|\Pi_{S_x\cup L}(\rho-\sigma)\Pi_{S_x\cup L}\|_2&\leq 2\alpha\sqrt{\bv_x^{\max}\eps} + 20\sqrt{\frac{\alpha c_0\eps}{r^{3/2}}\log(k)r\bv_x^{\max}}+\frac{\alpha c_0\eps}{r^{3/2}}
    \leq 30c_0\sqrt{\alpha\log(k) \bv_x^{\max}\eps},
\end{align}
such that, using $\alpha= 1/(100c_0(k+2)^2)^2$ and $\bv_x^{\max}/\bv_x^{\min}\leq 2$,
\begin{align}
    4\sum_{\substack{x\in S\\ x\neq L}}\frac{\|\Pi_{S_x\cup L}(\rho-\sigma)\Pi_{S_x\cup L}\|_2^2}{\bv_x^{\min}+\Delta/d
    }
    \leq 
    4\sum_{\substack{x\in S\\ x\neq L}}30^2\log(k)\alpha c_0^2\varepsilon
    \leq 
    \sum_{\substack{x\in S\\ x\neq L }}\frac{\varepsilon}{8(k+2)^2}\leq \frac{\eps}{8},
\end{align}
as required, and analogously for the case with switched roles, where $y\notin L$ and $x\in L$. Finally, we use again that $\forall x\neq L: s_x^{\min}\geq \Omega(\eps/r)$ to bound
\begin{align}
    \max_{(x,y)\neq (L,L)}\frac{2(k+1)^2\tr[\rho_0]^2}{\bv_{x}^{\min}+\bv_{y}^{\min}}\leq 
    \max_{(x,y)\neq (L,L)}\frac{2(k+1)^2\alpha^2c_0^2\eps^2}{r^3(\bv_{x}^{\min}+\bv_{y}^{\min})}
    \leq \frac{\eps}{8}.
\end{align}
Then \eqref{eq:cert_r_f_decomp} is bounded by $\varepsilon/2$ in case all tests accept.
\end{proof}

\subsection{Optimality of our Certification Bound}
We conclude our discussion on state certification with guarantees in fidelity by arguing that our algorithms are optimal up to polylogarithmic factors. This follows from the lower bounds in \cite{chen_toward_2021}, as any improvement would contradict their lower bounds on state certification with respect to trace distance.

\begin{lemma}
\label{lemma:certification_lower_bounds}
Let $\rho, \sigma \in \mathbb{C}^{d \times d}$ be unknown and known quantum states, respectively, and $\eps \in (0,1)$ be a parameter. Moreover, $\sigma$ is of rank $r$. Then there exists $\sigma$ such that, using single-copy adaptive measurements, deciding whether
\begin{align}
\rho=\sigma
\quad\text{or}\quad
F(\rho,\sigma)\leq 1-\varepsilon
\end{align}
with probability at least $0.99$, takes $\widetilde{\Omega}(r^{3/2}/\varepsilon)$ samples of $\rho$. Moreover, the lower bound without an assumption on the rank is $\Omega(d^{3/2}/\varepsilon)$.

\end{lemma}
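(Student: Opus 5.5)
The plan is to reduce from the known lower bound for mixedness testing (certification against $\mathbf{1}_d/d$) in trace distance with single-copy adaptive measurements. Chen, Li, Huang and Liu, building on \cite{chen_toward_2021}, show this requires $\Omega(d^{3/2}/\eta^2)$ samples to decide $\rho=\mathbf{1}_d/d$ versus $\|\rho-\mathbf{1}_d/d\|_{\tr}\geq\eta$. The key observation is that for the \emph{specific} hard instances of that lower bound, infidelity and squared trace distance are comparable. Then a fidelity gap of $\eps$ corresponds to a trace-distance gap of $\eta\approx\sqrt{\eps}$, and $d^{3/2}/\eta^2=d^{3/2}/\eps$.

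First, I recall the hard instances. The alternatives are states $\rho_U=(\mathbf{1}_d+\eta' U D U^\dagger)/d$, with $D$ diagonal with entries $\pm 1$ (half each) and $U$ Haar random; or Gaussian perturbations of $\mathbf{1}/d$ of comparable size. They satisfy $\|\rho_U-\mathbf{1}/d\|_{\tr}=\Theta(\eta')$. Any tester deciding $\rho=\mathbf{1}/d$ versus $\rho=\rho_U$ needs $\Omega(d^{3/2}/\eta'^2)$ adaptive single-copy samples, since the lower bound argument only distinguishes the null from this ensemble.

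Second, I compute the fidelity for these instances. $\rho_U$ commutes with $\mathbf{1}/d$, so $F(\rho_U,\mathbf{1}/d)=\sum_i \sqrt{\lambda_i/d}=\frac12(\sqrt{1+\eta'}+\sqrt{1-\eta'})=1-\Theta(\eta'^2)$. Choosing $\eta'=c\sqrt{\eps}$ gives $F\leq 1-\eps$. Hence any fidelity certifier with $\sigma=\mathbf{1}_d/d$ solves this distinguishing task, and so needs $\Omega(d^{3/2}/\eps)$ samples. This gives the full-rank statement. The only subtlety is that the general inequality $1-F\leq\|\cdot\|_{\tr}$ goes the wrong way; the explicit spectrum computation on the hard family replaces it. If I instead use a Gaussian ensemble, I would add a concentration step showing that the eigenvalue deviations are $O(\eta'/d)$ whp, so that $1-F=\Theta(\eta'^2)$ up to log factors. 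This is the source of the $\widetilde{\Omega}$.

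Third, for rank $r$, I take $\sigma=\Pi_r/r$, the maximally mixed state on an $r$-dimensional subspace, and embed the hard family inside that subspace. An adaptive protocol on $\mathbb{C}^d$ induces one on $\mathbb{C}^r$: compress each POVM element to the subspace, which is valid because all states are supported there. So a tester for rank-$r$ $\sigma$ yields a mixedness tester in dimension $r$, and the bound is $\widetilde{\Omega}(r^{3/2}/\eps)$. The main obstacle is confirming that the cited adaptive lower bound is proved against a family with the flat $\pm\eta'$ spectral structure, or one concentrating to it, so that the fidelity computation in the second step applies.
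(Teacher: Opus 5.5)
Your argument is correct but routed differently from the paper's, and one of its premises is mistaken.

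\textbf{The fidelity-to-trace-distance step.} The paper never looks at hard instances. It uses the other half of the Fuchs--van de Graaf inequality,
$$\|\rho-\sigma\|_{\tr}\le\sqrt{1-F(\rho,\sigma)^2}\le\sqrt{2(1-F(\rho,\sigma))}.$$
This holds for every pair of states and goes in exactly the direction you need: any $\rho$ with $\|\rho-\sigma\|_{\tr}\ge\sqrt{2\eps}$ has $F(\rho,\sigma)\le 1-\eps$. So a fidelity certifier at parameter $\eps$ is automatically a trace-distance certifier at precision $\sqrt{2\eps}$ for the same $\sigma$. The adaptive trace-distance lower bound then transfers as a black box. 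Your ``main obstacle'' (checking that the hard ensemble has a flat $\pm\eta'$ spectrum, or concentrates to one) therefore disappears. Your explicit computation $F=\tfrac12(\sqrt{1+\eta'}+\sqrt{1-\eta'})$ is valid but redundant.

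\textbf{The rank-$r$ claim.} The paper invokes the instance-optimal adaptive bound of Chen, Li, Huang and Liu at precision $\sqrt{\eps}$,
$$\widetilde\Omega\bigl(d\sqrt{\mathrm{rank}(\hat\sigma^{**})}\,F(\hat\sigma^{**},\mathbf{1}_d/d)^2/\eps\bigr),$$
evaluated at $\sigma=\Pi_r/r$. There $F(\sigma,\mathbf{1}_d/d)^2=r/d$, which yields $r^{3/2}/\eps$.

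Your compression argument restricts each adaptive POVM to the support of $\sigma$ to get an $r$-dimensional mixedness tester. It reaches the same bound more elementarily: it needs only the mixedness-testing lower bound and no control of the $\hat\sigma^{**}$ preprocessing. The paper's route, in exchange, is a one-line specialization of a theorem that would give fidelity lower bounds for arbitrary $\sigma$, not just maximally mixed ones.
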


\begin{proof}
Following the Fuchs-van de Graaf inequalities (\Cref{lem:fucsvandegraff}), it follows that our algorithm for certification in fidelity to precision $\varepsilon$ also certifies with respect to the trace distance to precision $\sqrt{\varepsilon}$. By \cite[Thm.\ 2.1]{chen2022tightboundsquantumstate} (using notation as in \cite[Thm.\ 5.1]{chen_toward_2021}), testing to precision $\sqrt{\varepsilon}$ will cost
 \begin{align}            
    \widetilde{\Omega}\left(\frac{d\sqrt{\textnormal{rank}(\hat\sigma^{**})}F(\hat\sigma^{**},\mathds{1}_d/d)^2}{\varepsilon}\right),
\end{align}
    where $\hat\sigma^{**}$ is a processed version of $\sigma$ in which small eigenvalues of at most $7\varepsilon$ have been zeroed out (and then renormalized), see \cite[Def.\ 5.2]{chen_toward_2021}. 
    Thus our claim follows if we choose $\sigma$ to be maximally mixed on either the full dimension $d$ (thereby obtaining the lower bound of $\Omega(d^{3/2}/\eps)$), or maximally mixed over a subspace of dimension $r$ (thereby obtaining a lower bound of $\Omega(r^{3/2}/\eps)$). It is easy to see that for our choice of $\sigma$, the processing from $\sigma$ to $\hat\sigma^{**}$ leaves $F(\hat\sigma^{**},\mathds{1}_d/d)$ at the order of $F(\sigma,\mathds{1}_d/d)$.
\end{proof}

\section{Equivalence Testing in Fidelity}
\label{sec:equivalence}

In this section, we prove the result about equivalence testing with respect to fidelity:

\fideqtesting*

As described in \Cref{sec:overview_equiv_fidelity}, the approach is to first approximately learn buckets, and then perform a piecewise comparison, similar to certification. For the approximate learning of one of the states $\rho$, we follow the approach of \cite{chen_when_2023} to show how we can perform a bucketing down to a certain threshold $\eta$, see \Cref{sec:equiv:learning}. For the testing, discussed in \Cref{sec:equiv:testing}, we will use \Cref{idenl2:lemma:equiv_proj}. Note that \Cref{idenl2:lemma:equiv_proj} does not need to know either $\sigma$ or $\rho$, but only an upper bound on the maximal eigenvalue in the respective bucket suffices. Finally, in \Cref{sec:equiv:balancing}, we balance the tasks of learning and testing, which are expressed as functions of $\eta$, to determine the optimal threshold with respect to the sample complexity. We will throughout assume that we have minimal eigenvalues $\widetilde{\Omega}(\eps/d)$, as outlined in \Cref{rem:min_ev}.

\begin{lemma}[{\cite[Lemma 5.7]{chen_when_2023}}]
\label{lemma:learn_buckets_chl}
Let $\rho \in \mathbb{C}^{d \times d}$ be an unknown quantum state. Given sample access to $\rho$ and a projection $\Pi$ to a subspace. Let $\mathcal{M}$ be the uniform POVM over that subspace. Then, we measure $\rho$ in the POVM $\mathcal{M}\cup \{\mathds{1}-\Pi\}$, and record all outcomes in $\mathcal{M}$ in a set $M$. Construct $G(\rho,\Pi,N)=\frac{1}{N}\sum_{v_i\in M}((d_{\Pi}+1)\dyad{v_i}- \mathbb{I}_{d_{\Pi}})$. Then
    \begin{equation}
        \|G(\rho,\Pi,N)-\Pi\rho\Pi\|_{\opn} \leq \nu= \cOpLearn\max\left\{\frac{d+\log(1/\delta)}{N},\sqrt{\tr[\Pi\rho\Pi]\frac{d+\log(1/\delta)}{N}}\right\},%
    \end{equation}
    with probability at least $1-\delta$.
\end{lemma}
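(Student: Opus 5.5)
We would prove \Cref{lemma:learn_buckets_chl} by reducing it to a matrix concentration statement for a sum of i.i.d.\ random operators on the subspace $\operatorname{supp}(\Pi)$; the main obstacle is the Bernstein variance bound. Each of the $N$ samples yields either the outcome $\mathds{1}-\Pi$ or a unit vector $v\in\operatorname{supp}(\Pi)$. For the $t$-th sample define
\begin{align}
X_t:=\mathbb{1}[\text{outcome in }\mathcal{M}]\left((d_\Pi+1)\dyad{v_t}-\Pi\right),
\end{align}
so that $G(\rho,\Pi,N)=\frac1N\sum_t X_t$, reading $\mathbb{I}_{d_\Pi}$ as $\Pi$. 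The first step is to compute $\E[X_t]$. The uniform POVM on the subspace has elements $d_\Pi\dyad{v}\,dv$ with $v$ Haar-random, so
\begin{align}
\E[X_t]=\int d_\Pi\bra{v}\rho\ket{v}\left((d_\Pi+1)\dyad{v}-\Pi\right)dv .
\end{align}
The standard second-moment identity $\int\dyad{v}^{\otimes2}dv=(\mathbb{1}+\mathrm{SWAP})/(d_\Pi(d_\Pi+1))$ on $\operatorname{supp}(\Pi)$ gives $\int d_\Pi(d_\Pi+1)\bra{v}\rho\ket{v}\dyad{v}\,dv=\Pi\rho\Pi+\tr[\Pi\rho]\Pi$. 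The subtracted term contributes $\tr[\Pi\rho]\Pi$, so $\E[X_t]=\Pi\rho\Pi$ and $G$ is an unbiased estimator of $\Pi\rho\Pi$.

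Next we apply the matrix Bernstein inequality to the centered summands $Y_t:=X_t-\Pi\rho\Pi$, which act on a space of dimension $d_\Pi\le d$. For the uniform bound, $\|X_t\|_{\opn}\le d_\Pi+1$, hence $\|Y_t\|_{\opn}\le O(d)$. For the variance, note that $((d_\Pi+1)\dyad{v}-\Pi)^2=(d_\Pi^2-1)\dyad{v}+\Pi$. This gives
\begin{align}
\E[X_t^2]=\int d_\Pi\bra{v}\rho\ket{v}\left((d_\Pi^2-1)\dyad{v}+\Pi\right)dv \preceq d_\Pi\left(\Pi\rho\Pi+\tr[\Pi\rho]\Pi\right)+\tr[\Pi\rho]\Pi,
\end{align}
using the same moment identity. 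Therefore $\|\sum_t\E[Y_t^2]\|_{\opn}\le N\cdot O(d\,\tr[\Pi\rho\Pi])$, where we use $\|\Pi\rho\Pi\|_{\opn}\le\tr[\Pi\rho\Pi]$.

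Matrix Bernstein then yields, with probability $1-\delta$,
\begin{align}
\Big\|\tfrac1N\textstyle\sum_tY_t\Big\|_{\opn}\lesssim\sqrt{\frac{d\,\tr[\Pi\rho\Pi]\log(d/\delta)}{N}}+\frac{d\log(d/\delta)}{N}.
\end{align}
This matches the claimed form, except that it has a $\log(d/\delta)$ factor where the lemma has $d+\log(1/\delta)$.

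This mismatch is the main obstacle. To obtain the sharper $d+\log(1/\delta)$ form as in \cite{chen_when_2023}, we would replace matrix Bernstein with an $\epsilon$-net argument. The first step is to fix a unit vector $u$ and bound the scalar sum $\sum_t\bra{u}Y_t\ket{u}$. Each term is bounded by $O(d)$, and its variance is controlled by $d\,\bra{u}\rho\ket{u}+\tr[\Pi\rho]\le O(d\tr[\Pi\rho])$. The key point is that the fourth moment $\int\bra{v}\rho\ket{v}|\braket{u}{v}|^4dv$ is of order $\tr[\rho]/d^2$, and this keeps the variance at the stated level. Scalar Bernstein then gives a tail $\exp(-\Omega(N\nu^2/(d\tr+d\nu)))$. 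Finally, a union bound over a $1/4$-net of the unit sphere of size $9^{2d_\Pi}$ costs an additive $O(d)$ in the exponent, which produces exactly the term $d+\log(1/\delta)$. The standard net lemma for Hermitian operators, $\|A\|_{\opn}\le2\max_{u\in\text{net}}|\bra{u}A\ket{u}|$, concludes the proof. Should the fourth-moment control fail to give the right scaling, we would fall back to the logarithmic-loss version above, which suffices for all later uses up to polylog factors.
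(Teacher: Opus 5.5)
\textbf{Verdict: genuine gap.} Your net argument, as set up, does not produce the $d+\log(1/\delta)$ term, and your fallback only proves a weaker lemma. The paper gives no proof of its own here; it imports the sharp form from Chen et al.

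Your unbiasedness computation, the identity $((d_\Pi+1)\dyad{v}-\Pi)^2=(d_\Pi^2-1)\dyad{v}+\Pi$, and the matrix Bernstein step are all correct. However, matrix Bernstein yields $d\log(d/\delta)$ where the lemma has $d+\log(1/\delta)$. That is a $\log d$ loss for constant $\delta$, and up to a factor $d$ loss once $\log(1/\delta)\gg d$. The fallback would be harmless for the paper's $\widetilde{O}$ applications, where $\delta$ is at worst inverse-polylogarithmic, but it is not the stated lemma.

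The net argument fails because you reuse matrix-level quantities in the scalar bound. With per-direction variance $O(d\,\tr[\Pi\rho])$ and range $O(d)$, your tail $\exp(-\Omega(N\nu^2/(d\,\tr[\Pi\rho]+d\nu)))$ must beat $\delta\,9^{-2d_\Pi}$. That forces
\begin{align}
\nu\gtrsim\sqrt{\frac{d\,(d+\log(1/\delta))\,\tr[\Pi\rho]}{N}}+\frac{d\,(d+\log(1/\delta))}{N},
\end{align}
which is off by factors $\sqrt{d}$ and $d$, and is even worse than matrix Bernstein. The whole gain of the net approach is that $\|\sum_t\E[Y_t^2]\|_{\opn}=\Theta(Nd\,\tr[\Pi\rho])$, yet a fixed direction $u$ sees much less. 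Two corrections are needed.

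First, your moment is overestimated by a factor $d$. From $\int\dyad{v}^{\otimes 3}dv=\Pi_{\mathrm{sym}}/\binom{d_\Pi+2}{3}$ one gets
\begin{align}
\int\bra{v}\rho\ket{v}\,|\braket{u}{v}|^4\,dv=\frac{2\tr[\Pi\rho]+4\bra{u}\rho\ket{u}}{d_\Pi(d_\Pi+1)(d_\Pi+2)},
\end{align}
so $\E[\bra{u}X_t\ket{u}^2]\le 3\tr[\Pi\rho]$, not $O(d\,\tr[\Pi\rho])$.

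Second, the worst-case bound $|\bra{u}X_t\ket{u}|\le d_\Pi$ must not enter as the Bernstein range; use the moment form of Bernstein instead. The same symmetric-subspace computation with $\int\dyad{v}^{\otimes(p+1)}dv$ shows the following. Conditioned on an outcome in $\mathcal{M}$, $\E[((d_\Pi+1)|\braket{u}{v}|^2)^p]\le (p+1)!$. Hence $\E|\bra{u}Y_t\ket{u}|^p\le \tr[\Pi\rho]\,C^p p!$ for an absolute constant $C$.

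Bernstein then gives the tail $\exp(-\Omega(N\nu^2/(\tr[\Pi\rho]+\nu)))$. The union bound over the $1/4$-net, together with $\|A\|_{\opn}\le 2\max_u|\bra{u}A\ket{u}|$, then yields exactly $\nu\asymp\sqrt{\tr[\Pi\rho](d+\log(1/\delta))/N}+(d+\log(1/\delta))/N$.
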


We note that $G(\rho,\Pi,N)$ is not a state - since we will only use it to determine the bucketing, this technicality does \emph{not} affect us. %

Let us introduce some notation. Consider a set of pairwise orthogonal projections $\{Q_i\}$, $i\in [v]_0$, and let
\begin{align}
    Q_i^+:=\sum_{j=i}^vQ_j,\qquad Q_i^-:=\mathds{1}-Q_i^+=\sum_{j=0}^{i-1}Q_j.
\end{align}
In particular, note that for $j\geq i$, $Q_jQ_i^+=Q_i^+Q_j=Q_j$. We can define $\hat{\Pi}_j^+$ by only knowing $\{\Pi_l\}_{l=0}^{j-1}$, and not $\{\Pi_l\}_{l=j}^{\nu}$. 

Let $k:=\lceil\log(1/\eta)\rceil$. For readability, we will use (see \eqref{eq:id_buckets})
\begin{align}
    0\leq i<k:\bv_i^{\min}:=2^{-i-1}, \bv_i^{\max}:=2^{-i},\quad\text{and} \quad \bv_k^{\max}=2^{-k}=\Theta(\eta), \bv_k^{\min}=\Omega(\eps/d),
\end{align}
where $\bv_k^{\min}$ is given by the minimal eigenvalues due to polarization. These now correspond to the \emph{target values} we aim to have in the normal buckets (since the buckets are only approximate, there will be deviations). Crucially, unlike for certification, for the last bucket, we will now \emph{not} have that $\bv_i^{\max}/\bv_i^{\min}\leq O(1)$.

\subsection{Approximately Learning of a State via Bucketing}
\label{sec:equiv:learning}

The algorithm to determine the buckets is analogous to the subroutine in \cite{chen_when_2023}. Specifically, the idea is to learn buckets iteratively. In every iteration, we project away the earlier buckets we already defined. Concretely, this means that at iteration $i$, we will use \Cref{lemma:learn_buckets_chl} with $\Pi=\mathds{1}_d-\sum_{j=0}^{i-1}\hat\Pi_j=\hat\Pi_i^+$. We present the algorithm in \Cref{alg:equiv_bucketing}. The overall number of samples required, $N_{\text{learn}}$, will be determined by the analysis in \Cref{lemma:bucketing_guarantee}.

\begin{algorithm}[H]
\LinesNumbered
\DontPrintSemicolon
\setcounter{AlgoLine}{0}
\caption{Determine bucketing}
\label{alg:equiv_bucketing}
\KwIn{$N_{\text{learn}}$ samples of the unknown state $\sigma \in \mathbb{C}^{d \times d}$, parameter $\eta \in (0, 1)$.}
\KwOut{A bucketing of $\sigma$ down to $\eta$: $\{\hat\Pi_0, \hat\Pi_1, \ldots, \hat\Pi_{k} \}$.}

$\hat\Pi_0^+ \gets \mathds{1}_d$, %

$k \gets c\log(1/\eta)$ for a suitable constant $c$.

\For{$i\in [k-1]_0$}{
    $H(\sigma,\hat\Pi_i^+)\gets G(\sigma,\hat\Pi_i^+, N_{\text{learn}}/k)$

    $\{\ket{h_j}\}$, $\{h_j\}$ $\gets$ eigenvectors and eigenvalues ($> 0$) of $H(\sigma,\hat\Pi_i^+)$, respectively
    
    $\hat\Pi_{i}\gets$ $\sum_{j:h_j\geq \bv_i^{\min}}\dyad{h_j}$%
}

$\hat\Pi_{k} \gets$ $\mathds{1}_d-\sum_{j=0}^{k-1}\hat\Pi_{j}$.

\Return $\{\hat\Pi_i\}_{i=0}^k$. \

\end{algorithm}
\Cref{lemma:learn_buckets_chl} gives us the following necessary guarantee on the precision of the buckets.

\begin{lemma}
\label{lemma:bucketing_guarantee}
    \Cref{alg:equiv_bucketing} can learn an empirical bucketing $\{\hat\Pi_i\}_{i=0}^k$ with state approximations $\{H(\sigma,\hat\Pi_i)\}$, such that, with probability at least $0.99$ overall, with $\bv_{j}^{\max}/\bv_{j+1}^{\max}\leq 2$,
    \begin{equation}
        \forall j\in[k-1]_0:~ %
        \left\|H(\sigma,\hat\Pi_j^+)-\hat{\Pi}_j^+\sigma\hat{\Pi}_j^+\right\|_{\opn}\leq \alpha\sqrt{\eta \bv_j^{\max}},
    \end{equation}
    where $\alpha<1/2$ is a free parameter, as well as 
        \begin{align}
    \label{eq:op_bound}
        \left\|\hat{\Pi}_j^+\sigma\hat{\Pi}_j^+\right\|_{\opn}\leq 2\bv_j^{\max}, 
    \end{align}
    using 
    \begin{align}
        N_{\textnormal{learn}}=\frac{\cOpG}{\alpha^2}k \frac{\min\{d\eta ,1\}d}{\eta^2}.   
    \end{align}
samples of $\sigma$, where $\cOpG$ is a suitably chosen constant.
\end{lemma}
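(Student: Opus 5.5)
The plan is to prove both guarantees together by induction over the iterations $j=0,1,\dots,k-1$ of \Cref{alg:equiv_bucketing}. At every step we apply \Cref{lemma:learn_buckets_chl} with $\Pi=\hat\Pi_j^+$ and $N=N_{\textnormal{learn}}/k$ fresh samples. The failure probability is set to $\delta=0.01/k$, so that a union bound over all $k$ iterations leaves overall success probability at least $0.99$. The $\log(1/\delta)=O(\log k)$ term is absorbed into $d$ up to the constant $\cOpG$, since $d\geq\log(100k)$ in the regime of interest; otherwise we keep it as a polylog factor. Along the induction we carry the hypothesis $\|\hat\Pi_j^+\sigma\hat\Pi_j^+\|_{\opn}\leq 2\bv_j^{\max}$. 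For $j=0$ it holds trivially, because $\hat\Pi_0^+=\mathds{1}_d$ and $\bv_0^{\max}=1$.

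\emph{Precision step.} Under the inductive hypothesis we have $\tr[\hat\Pi_j^+\sigma\hat\Pi_j^+]\leq \min\{1,2d\bv_j^{\max}\}$. \Cref{lemma:learn_buckets_chl} then gives an error
\begin{align}
\nu_j\leq \cOpLearn\max\left\{\frac{kd}{N_{\textnormal{learn}}},\sqrt{\min\{1,2d\bv_j^{\max}\}\frac{kd}{N_{\textnormal{learn}}}}\right\}.
\end{align}
Substituting $N_{\textnormal{learn}}=\cOpG k\min\{d\eta,1\}d/(\alpha^2\eta^2)$, the first term becomes $\alpha^2\eta^2/(\cOpG\min\{d\eta,1\})$. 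This is at most $\alpha\sqrt{\eta\bv_j^{\max}}$, because $\bv_j^{\max}\geq\eta$ for $j<k$ and $\eta^2/\min\{d\eta,1\}\leq\eta$ whenever $\min\{d\eta,1\}\geq\eta$. The second term squared equals $\alpha^2\eta^2\min\{1,2d\bv_j^{\max}\}/(\cOpG\min\{d\eta,1\})$. To bound it by $\alpha^2\eta\bv_j^{\max}$ we need $\eta\min\{1,2d\bv_j^{\max}\}\lesssim \bv_j^{\max}\min\{d\eta,1\}$. We check this by cases. If $d\eta\geq 1$, the left side is at most $\eta\leq\bv_j^{\max}$. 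If $d\eta<1$, the left side is at most $2d\eta\bv_j^{\max}$, which matches the right side. A large enough $\cOpG$ absorbs the constants, and this gives the first guarantee.

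\emph{Operator norm step.} This is the main obstacle: we must show $\|\hat\Pi_{j+1}^+\sigma\hat\Pi_{j+1}^+\|_{\opn}\leq 2\bv_{j+1}^{\max}=\bv_j^{\max}$. By construction, $\hat\Pi_{j+1}^+=\hat\Pi_j^+-\hat\Pi_j$ projects onto the eigenvectors of $H_j:=H(\sigma,\hat\Pi_j^+)$ inside $\hat\Pi_j^+$ with eigenvalue below $\bv_j^{\min}$ (including the eigenvalue-zero directions of the range). Since $\hat\Pi_{j+1}^+$ is a sub-projection of $\hat\Pi_j^+$ and commutes with $H_j$, we get
\begin{align}
\|\hat\Pi_{j+1}^+\sigma\hat\Pi_{j+1}^+\|_{\opn}\leq \|\hat\Pi_{j+1}^+H_j\hat\Pi_{j+1}^+\|_{\opn}+\|\hat\Pi_{j+1}^+(\hat\Pi_j^+\sigma\hat\Pi_j^+-H_j)\hat\Pi_{j+1}^+\|_{\opn}\leq \bv_j^{\min}+\alpha\sqrt{\eta\bv_j^{\max}}.
\end{align}
Using $\eta\leq\bv_j^{\max}$ and $\alpha<1/2$, this is at most $\bv_j^{\max}/2+\bv_j^{\max}/2=\bv_j^{\max}$, which closes the induction. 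The delicate part is handling the case where $H_j$ has negative eigenvalues, or is not supported exactly on $\hat\Pi_j^+$. The remedy is to note that $G$ lives in the range of $\Pi$, and that the negative part only makes the compression smaller, so the upper bound still holds. The last bucket $\hat\Pi_k$ needs no separate treatment, since the statement only concerns $j\leq k-1$. Its operator norm bound $2\bv_k^{\max}$ nevertheless follows from the same step applied at $j=k-1$, and we record it for later use.
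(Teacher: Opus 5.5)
Your proposal is correct and is essentially the paper's proof. Both argue by induction over the iterations of \Cref{alg:equiv_bucketing}: the bound $\|\hat\Pi_j^+\sigma\hat\Pi_j^+\|_{\opn}\leq 2\bv_j^{\max}$ gives $\tr[\hat\Pi_j^+\sigma]\leq\min\{2d\bv_j^{\max},1\}$ and hence the precision $\alpha\sqrt{\eta\bv_j^{\max}}$ via \Cref{lemma:learn_buckets_chl}, and the spectral cutoff at $\bv_j^{\min}$ together with that precision gives the operator-norm bound for $\hat\Pi_{j+1}^+$. The extras in your write-up are details the paper glosses over rather than a different route: the case split on whether $d\eta\geq 1$ (instead of the paper's $\min\{a,1\}/\min\{b,1\}\leq a/b$), the check of the $d/N$ term, the union bound with $\delta=0.01/k$, and the remark on negative eigenvalues of $H_j$ (these are in fact bounded below by $-\alpha\sqrt{\eta\bv_j^{\max}}>-\bv_j^{\min}$, so even the operator-norm form of that step holds).
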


\begin{proof}
    We proceed by induction. For the induction basis, first note that $\|\hat{\Pi}_0^+\sigma\hat{\Pi}_0^+\|_{\opn}\leq 2$ is trivial, and 
    \begin{align}
        \left\|H(\sigma,\hat\Pi_0^+)-\hat{\Pi}_0^+\sigma\hat{\Pi}_0^+\right\|_{\opn}
        \leq \max\left\{\frac{dk}{N_{\textnormal{learn}}}, \sqrt{\frac{\tr[\sigma]dk}{N_{\textnormal{learn}}}}\right\}\leq \alpha\sqrt{\frac{\eta^2}{\min\{d\eta,1\}}}= \alpha\max\{\eta,\sqrt{\eta/d}\}%
    \end{align}
    follows directly from \Cref{lemma:learn_buckets_chl}. Note that here and in the following, a bucket might also be empty. We now assume that for any $j$,
    \begin{align}
         \left\|H(\sigma,\hat\Pi_j^+)-\hat{\Pi}_j^+\sigma\hat{\Pi}_j^+\right\|_{\opn}\leq \alpha\sqrt{\eta \bv_j^{\max}}\quad\text{and}\quad \left\|\hat{\Pi}_j^+\sigma\hat{\Pi}_j^+\right\|_{\opn}\leq 2\bv_j^{\max}.
    \end{align}
    and step to $j+1$. Note that we can construct $\hat{\Pi}_{j+1}^+$ from $\{\hat{\Pi}_{0},...,\hat{\Pi}_{j}\}$ alone.
    \begin{enumerate}
        \item Using the Hölder inequality and $\hat{\Pi}_j^+\hat{\Pi}_{j+1}^+=\hat{\Pi}_{j+1}^+$,
        \begin{align}
            \alpha\sqrt{\eta \bv_j^{\max}}
            \geq 
            \left\|H(\sigma,\hat\Pi_j^+)-\hat{\Pi}_j^+\sigma\hat{\Pi}_j^+\right\|_{\opn}
            \geq
            \left\|\hat{\Pi}_{j+1}^+(H(\sigma,\hat\Pi_j^+)-\sigma)\hat{\Pi}_{j+1}^+\right\|_{\opn},
        \end{align}
        such that 
        \begin{align}
            \left\|\hat{\Pi}_{j+1}^+\sigma\hat{\Pi}_{j+1}^+\right\|_{\opn}
            \leq 
            \alpha\sqrt{\eta \bv_j^{\max}}+\|\hat{\Pi}_{j+1}^+H(\sigma,\hat\Pi_j^+)\hat{\Pi}_{j+1}^+\|_{\opn}
            \leq 
            \alpha\sqrt{\eta \bv_j^{\max}}+\bv_{j+1}^{\max}\leq 2\bv_{j+1}^{\max}.
        \end{align}
        The last inequality holds since $\hat\Pi_j$ is constructed to contain all eigenvalues of $H(\sigma,\hat\Pi_j^+)$ which are at least $\bv_j^{\min}$, leaving only $\bv_{j+1}^{\max}<\bv_j^{\min}$ in the support of $\hat\Pi_{j+1}^+$.
        \item To prove the bound on the distance between $H(\sigma,\hat\Pi_j^+)$ and $\hat{\Pi}_j^+\sigma\hat{\Pi}_j^+$, we simply note that the precision we get from \Cref{lemma:learn_buckets_chl} can be bounded by using our bound on $\|\hat{\Pi}_{j+1}^+\sigma\hat{\Pi}_{j+1}^+\|_{\opn}$, such that
        \begin{align}
            \tr \big[\hat{\Pi}_{j+1}^+\sigma\big]\leq \min\{d\|\hat{\Pi}_{j+1}^+\sigma\hat{\Pi}_{j+1}^+\|_{\opn},1\}\leq \min\{2d\bv_{j+1}^{\max},1\},
    \end{align}
    such that, using \Cref{lemma:learn_buckets_chl}, the precision is bounded by 
    \begin{align}
        \left\|H(\sigma,\hat\Pi_{j+1}^+)-\hat{\Pi}_{j+1}^+\sigma\hat{\Pi}_{j+1}^+\right\|_{\opn}&\leq \cOpLearn\max\left\{\frac{d}{N_{\text{learn}}^{(j+1)}},\sqrt{\frac{d\tr \big[\hat{\Pi}_{j+1}^+\sigma\big]}{N_{\text{learn}}^{(j+1)}}}\right\}
        \\
        &\leq
        \cOpLearn\max\left\{\frac{d}{N_{\text{learn}}^{(j+1)}},\sqrt{\frac{d\min\{2d\bv_{j+1}^{\max},1\}}{N_{\text{learn}}^{(j+1)}}}\right\}
        \\
        &=
        \cOpLearn\sqrt{\frac{d}{N_{\text{learn}}^{(j+1)}}}\max\left\{\sqrt{\frac{d}{N_{\text{learn}}^{(j+1)}}},\sqrt{\min\{2d\bv_{j+1}^{\max},1\}}\right\}
        \\
        &=\cOpLearn\sqrt{\frac{d}{N_{\text{learn}}^{(j+1)}}\min\{2d\bv_{j+1}^{\max},1\}}
        \\
        &=\cOpLearn\sqrt{\frac{\alpha^2}{\cOpG}\frac{\eta^2\min\{2d\bv_{j+1}^{\max},1\}}{\min\{d\eta,1\}}}
        \\
        &\leq \cOpLearn\sqrt{\frac{\alpha^2}{\cOpG}\frac{\eta^22d\bv_{j+1}^{\max}}{d\eta}}\leq \alpha\sqrt{\eta \bv_{j+1}^{\max}},
    \end{align}
    as long as $\cOpG\geq 4(\cOpLearn)^2$. In the last inequality, we used that for $0<b\leq a$, $\frac{\min\{a,1\}}{\min\{b,1\}}\leq \frac{a}{b}$. %
    \end{enumerate}
    \end{proof}

Next, we will show a few helper lemmas which can be derived from the primary guarantee in \Cref{lemma:bucketing_guarantee}. They will bound the overlap between the empirical buckets $\hat\Pi_k$ and the correct buckets $\Pi_i$ in the case where $i$ and $k$ are gapped away from each other. We will use the following folklore facts:

\begin{fact}\label{lemma:op_prop_1}
 Let $\{R_i\}$ be a set of pairwise orthogonal projections, $Q$ some arbitrary projection, and $\{x_i\}$ and $\{y_i\}$ numbers such that $\forall i:0\leq x_i\leq y_i$. Then
    \begin{equation}
        \left\|Q\sum_ix_iR_iQ\right\|_{\opn}\leq \left\|Q\sum_iy_iR_iQ\right\|_{\opn}.
    \end{equation}
\end{fact}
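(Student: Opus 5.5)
The plan is to reduce the claim to the scalar case by diagonalizing the positive combination. Set $A:=\sum_i x_iR_i$ and $B:=\sum_i y_iR_i$. Since the $R_i$ are pairwise orthogonal projections, both are Hermitian and commute, and
\begin{align}
    B-A=\sum_i (y_i-x_i)R_i\succeq 0,
\end{align}
because each coefficient $y_i-x_i\geq 0$ and each $R_i\succeq 0$. We also have $A\succeq 0$, since $x_i\geq 0$.

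The next step is to conjugate by $Q$. Congruence preserves the Loewner order, so $QAQ\succeq 0$ and
\begin{align}
    Q(B-A)Q\succeq 0, \qquad\text{that is,}\qquad 0\preceq QAQ\preceq QBQ.
\end{align}

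Finally, for positive semidefinite operators the operator norm equals the largest eigenvalue, which is $\sup_{\|v\|=1}\bra{v}\cdot\ket{v}$. The ordering $QAQ\preceq QBQ$ gives $\bra{v}QAQ\ket{v}\leq\bra{v}QBQ\ket{v}$ for every unit vector $v$. Taking suprema on both sides yields $\|QAQ\|_{\opn}\leq\|QBQ\|_{\opn}$, which is the claim.

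There is no real obstacle here. The only point that needs care is why the nonnegativity $x_i\geq 0$ is required: it makes $QAQ$ positive semidefinite, so its operator norm is its top eigenvalue rather than the maximum of $|\lambda_{\max}|$ and $|\lambda_{\min}|$, which is what justifies the final supremum comparison.
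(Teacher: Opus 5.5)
Your argument is correct and complete. From $B-A=\sum_i(y_i-x_i)R_i\succeq 0$ and $A\succeq 0$, congruence by $Q$ gives $0\preceq QAQ\preceq QBQ$, and comparing the suprema of the quadratic forms then yields $\|QAQ\|_{\opn}\leq\|QBQ\|_{\opn}$.

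The paper gives no proof of this; it is quoted as a folklore fact. Your Loewner-order argument is the standard justification, and it does not even use the pairwise orthogonality of the $R_i$: it only needs $R_i\succeq 0$.
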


\begin{fact}
\label{fact:bound_overlap}
    Let $A$ be a $d$-dimensional Hermitian matrix with eigenvalues $\lambda_i$ and eigenvectors $\{\ket{\lambda_i}\}$, and $S\subseteq[d]$ such that $\forall i\in S:\lambda_i\in [x,y]$, for some fixed $x,y\in\mathbb{R}^+$. Then, for $\Pi:=\sum_{i\in S}\dyad{\lambda_i}$ and any arbitrary projection $Q$, 
    \begin{align}
        x\|\Pi Q\|_{\opn}\leq \|\Pi A Q\|_{\opn}\leq y\|\Pi Q\|_{\opn}.
    \end{align}
\end{fact}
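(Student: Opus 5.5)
The plan is to reduce both inequalities to submultiplicativity of the operator norm, after observing that $\Pi$ commutes with $A$ and that $\Pi A$ acts on $\operatorname{supp}(\Pi)$ as a diagonal operator with entries in $[x,y]$.

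\textbf{Step 1 (restriction of $A$).} Since $A$ is Hermitian with eigendecomposition $A=\sum_i\lambda_i\dyad{\lambda_i}$ and $\Pi=\sum_{i\in S}\dyad{\lambda_i}$, orthonormality gives
\begin{align}
    \Pi A = A\Pi=\Pi A\Pi = D, \qquad D:=\sum_{i\in S}\lambda_i\dyad{\lambda_i}.
\end{align}
Because $D\Pi = D$, we can write $\Pi A Q = D\,\Pi Q$.

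\textbf{Step 2 (upper bound).} Submultiplicativity (the $r=p=q=\infty$ case of the Hölder inequality in the preliminaries) gives
\begin{align}
    \|\Pi A Q\|_{\opn}=\|D\,\Pi Q\|_{\opn}\leq \|D\|_{\opn}\|\Pi Q\|_{\opn}.
\end{align}
Here $\|D\|_{\opn}=\max_{i\in S}|\lambda_i|\leq y$, since every $\lambda_i$ with $i\in S$ lies in $[x,y]\subseteq\mathbb{R}^+$.

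\textbf{Step 3 (lower bound).} If $x=0$ the lower bound is trivial. Otherwise every $\lambda_i$ with $i\in S$ is strictly positive, so we define the inverse of $D$ on $\operatorname{supp}(\Pi)$:
\begin{align}
    D^{+}:=\sum_{i\in S}\lambda_i^{-1}\dyad{\lambda_i}.
\end{align}
It satisfies $D^{+}D=\Pi$ and $\|D^{+}\|_{\opn}=\max_{i\in S}\lambda_i^{-1}\leq 1/x$. Then $\Pi Q = D^{+}D\,\Pi Q = D^{+}\,\Pi A Q$, and submultiplicativity yields
\begin{align}
    \|\Pi Q\|_{\opn}\leq \frac{1}{x}\|\Pi A Q\|_{\opn}.
\end{align}
Rearranging gives the claimed lower bound.

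There is no real obstacle in this argument. The only points needing care are that $\Pi$ must be the spectral projection of $A$ itself, so that it commutes with $A$, and the degenerate case $x=0$, where the pseudo-inverse is not defined but the lower bound holds trivially. No property of $Q$ beyond boundedness is used, so the statement holds for arbitrary $Q$ and not only for projections.
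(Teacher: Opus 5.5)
Your proof is correct and complete: the factorizations $\Pi A Q = D\,\Pi Q$ and $\Pi Q = D^{+}\,\Pi A Q$, combined with submultiplicativity of the operator norm, give both bounds, and you correctly handle the degenerate case $x=0$ (the case $S=\emptyset$ is trivial as well). The paper gives no proof here, quoting the statement as a folklore fact, so yours is the standard justification; note only that where the paper uses the lower bound (e.g.\ in the proof of \Cref{lemma:first_bucket}), the spectral projection sits on the right, i.e.\ it needs $x\|Q\Pi\|_{\opn}\leq\|QA\Pi\|_{\opn}$, which follows from your version by taking adjoints, since $A$, $\Pi$, $Q$ are Hermitian and $X\mapsto X^{\dagger}$ preserves the operator norm.
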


We first prove the overlap between the first bucket and any bucket with $i\geq 2$.

\begin{lemma}
\label{lemma:first_bucket}
    For the first bucket, if $\|H(\rho,\hat\Pi_0^+)-\rho\|_{\opn}\leq c\beta$ holds, and $2\bv_i^{\max}\leq \bv_0^{\min}$, then
    \begin{align}
    \label{eq:first_bucket_guarantee}
        \forall i\geq 2: \|\Pi_i\hat{\Pi}_0\|_{\opn}\leq 2c\frac{\beta}{\bv_0^{\min}},\quad \|\Pi_i^+\hat{\Pi}_0\|_{\opn}\leq 2c\frac{\beta}{\bv_0^{\min}}.%
    \end{align}
\end{lemma}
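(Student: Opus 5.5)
The idea is to sandwich the operator $\Pi_i H \hat\Pi_0$, where $H:=H(\rho,\hat\Pi_0^+)$, between two bounds, using Fact~\ref{fact:bound_overlap} twice. On one side, $\Pi_i$ is a true eigenprojector of $\rho$ with eigenvalues at most $\bv_i^{\max}$. On the other, $\hat\Pi_0$ is a spectral projector of $H$ with eigenvalues at least $\bv_0^{\min}$. Throughout, $\Pi_i$ denotes the eigenprojector of $\rho$ onto the true bucket $S_i$, and $\hat\Pi_0^+=\mathds{1}_d$, so the hypothesis reads $\|H-\rho\|_{\opn}\le c\beta$.

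\textbf{Step 1 (lower bound).} By construction in \Cref{alg:equiv_bucketing}, $\hat\Pi_0=\sum_{j:h_j\ge \bv_0^{\min}}\dyad{h_j}$, which commutes with $H$ and satisfies $H\hat\Pi_0=\hat\Pi_0 H\hat\Pi_0$. Since all eigenvalues of $H$ on this range are at least $\bv_0^{\min}$, Fact~\ref{fact:bound_overlap} (applied to $A=H$ with $\Pi=\hat\Pi_0$ and $Q=\Pi_i$, then taking adjoints) gives
\[
\bv_0^{\min}\|\Pi_i\hat\Pi_0\|_{\opn}\le \|\Pi_i H\hat\Pi_0\|_{\opn}.
\]
The upper end of that interval is irrelevant here, and one may take $y=\|H\|_{\opn}$.

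\textbf{Step 2 (upper bound).} Split $H=\rho+(H-\rho)$. This gives
\[
\|\Pi_i H\hat\Pi_0\|_{\opn}\le \|\Pi_i\rho\hat\Pi_0\|_{\opn}+\|\Pi_i(H-\rho)\hat\Pi_0\|_{\opn}\le \bv_i^{\max}\|\Pi_i\hat\Pi_0\|_{\opn}+c\beta.
\]
For the first term, use Fact~\ref{fact:bound_overlap} with $A=\rho$ and $\Pi=\Pi_i$. For the second term, use submultiplicativity, since projections have operator norm at most one.

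\textbf{Step 3 (combine).} Putting the two bounds together gives
\[
(\bv_0^{\min}-\bv_i^{\max})\|\Pi_i\hat\Pi_0\|_{\opn}\le c\beta.
\]
The assumption $2\bv_i^{\max}\le \bv_0^{\min}$ yields $\bv_0^{\min}-\bv_i^{\max}\ge \bv_0^{\min}/2$, and hence the first claimed bound $2c\beta/\bv_0^{\min}$.

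\textbf{Second claim.} Repeat the same argument with $\Pi_i^+=\sum_{j\ge i}\Pi_j$ in place of $\Pi_i$. This projection is also spectral for $\rho$, and its eigenvalues all lie below $\bv_i^{\max}$, since the buckets $j\ge i$ have decreasing values, including the last bucket, whose values are at most $2^{-k}$. So Fact~\ref{fact:bound_overlap} applies verbatim with $y=\bv_i^{\max}$ and gives the same bound.

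The main subtlety is not a hard estimate but bookkeeping. Fact~\ref{fact:bound_overlap} must be applied with the correct projection on the correct side; adjoints preserve the operator norm, so this is harmless. One also has to check that the eigenvalue interval for $\Pi_i^+$ (respectively for $\hat\Pi_0$) is indeed bounded above by $\bv_i^{\max}$ (respectively below by $\bv_0^{\min}$). The condition $i\ge2$ is only there to guarantee the gap $2\bv_i^{\max}\le\bv_0^{\min}$, which is itself an explicit hypothesis.
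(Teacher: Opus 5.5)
Your proof is correct and follows the paper's argument essentially step for step. You use the same two applications of \Cref{fact:bound_overlap} (a lower bound via $\hat\Pi_0$ as a spectral projector of $H$, an upper bound via $\Pi_i$ or $\Pi_i^+$ as a spectral projector of $\rho$), the triangle inequality on $H=\rho+(H-\rho)$, and the rearrangement with $\bv_0^{\min}-\bv_i^{\max}\geq \bv_0^{\min}/2$, exactly as the paper does.
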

\begin{proof}
The proofs for both inequalities proceed in the exact same way, in every step, $\Pi_i$ may be replaced with $\Pi_i^+$. By definition of $\hat{\Pi}_0$, it is made from eigenvectors of $H(\rho,\hat\Pi_0^+)=H(\rho,1_d)$, such that we can apply \Cref{fact:bound_overlap} and find that $\bv_0^{\min}\| \Pi_i\hat{\Pi}_0\|_{\opn}\leq \|\Pi_iH(\rho,\hat\Pi_0^+)\hat{\Pi}_0\|_{\opn}$. Similarly, $\Pi_i$ maps to the eigenspace of $\rho$, such that $\|\Pi_i\rho \hat{\Pi}_0\|_{\opn}\leq \bv_i^{\max}\|\Pi_i\hat{\Pi}_0\|_{\opn}$.
    We then have 
    \begin{align}
        \bv_0^{\min}\| \Pi_i\hat{\Pi}_0\|_{\opn}&\leq \|\Pi_iH(\rho,\hat\Pi_0^+)\hat{\Pi}_0\|_{\opn}
        \\
        &\leq \|\Pi_i(H(\rho,\hat\Pi_0^+)-\rho)\hat{\Pi}_0\|_{\opn}+\|\Pi_i\rho \hat{\Pi}_0\|_{\opn}
        \\
        &\leq \|H(\rho,\hat\Pi_0^+)-\rho\|_{\opn}+\bv_i^{\max}\|\Pi_i\hat{\Pi}_0\|_{\opn}
        \\
        \implies \| \Pi_i\hat{\Pi}_0\|_{\opn}&\leq \frac{\|H(\rho,\hat\Pi_0^+)-\rho\|_{\opn}}{\bv_0^{\min}-\bv_i^{\max}}.
    \end{align}
    Lastly, we use that $\bv_i^{\max}\leq \bv_0^{\min}/2$  and $\|H(\rho,\hat\Pi_0^+)-\rho\|_{\opn}\leq c\beta$ by assumption.
\end{proof}

For technical reasons, we will need the guarantee from \Cref{lemma:first_bucket} also for the first bucket: for this we can technically learn the first bucket to the precision of $\hat\Pi_1$, and then 'manually' split it into $\hat\Pi_0$ and $\hat\Pi_1$, such that \eqref{eq:first_bucket_guarantee} also holds for $\hat\Pi_1$.

\begin{lemma}
\label{lemma:equiv:overlap}
    Assume we have a set $\{\hat{\Pi}_j\}$ and state approximations $\{H(\rho,\hat\Pi_j^+)\}$, such that, 
    \begin{equation}
        \forall j\geq 0: \|H(\rho,\hat\Pi_j^+)-\hat{\Pi}_j^+\rho\hat{\Pi}_j^+\|_{\opn}\leq c\sqrt{\eta\bv_j^{\max}},
        \quad \text{and} \quad
        \forall j\geq 2: \|\Pi_j^+\hat{\Pi}_0\|_{\opn}\leq c\sqrt{\eta}.    
    \end{equation}
    Let $k>m>0$ be arbitrary (i.e., $\bv_m^{\min}\geq \eta$). Suppose there exists a constant $\kappa$ such that $\forall j \leq m: \kappa\geq\bv_j^{\max}/\bv_j^{\min}$. Then, for any $i>m+1$,  the following holds:
    \begin{align}
    \label{eq:equiv_io_overlap}
        \|\Pi_i^+\hat{\Pi}_m\|_{\opn}\leq (1+2c\sqrt{\kappa})^mc\sqrt{\frac{\eta}{\bv_m^{\min}}}.
    \end{align}
\end{lemma}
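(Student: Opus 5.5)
We argue by induction on $m$. The claimed bound decomposes as $\|\Pi_i^+\hat\Pi_m\|_{\opn}\leq \|\Pi_i^+\hat\Pi_m^+\hat\Pi_m\|_{\opn}$, since $\hat\Pi_m=\hat\Pi_m^+\hat\Pi_m$. We split $\hat\Pi_m^+=\mathds{1}-\hat\Pi_m^-$ and handle the two pieces separately. The term involving $\hat\Pi_m^-$ is controlled by the induction hypothesis on earlier buckets. The genuinely new piece is analysed exactly as in \Cref{lemma:first_bucket}, but with $H(\rho,\hat\Pi_m^+)$ in place of $H(\rho,\mathds{1}_d)$. The base case $m=0$ (and, via the manual splitting remark, $m=1$) is given by the hypothesis $\|\Pi_j^+\hat\Pi_0\|_{\opn}\leq c\sqrt{\eta}$.

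\textbf{Step 1 (eigen-gap argument).} The range of $\hat\Pi_m$ consists of eigenvectors of $H:=H(\rho,\hat\Pi_m^+)$ with eigenvalues at least $\bv_m^{\min}$, and $H$ is supported on $\hat\Pi_m^+$. By \Cref{fact:bound_overlap},
\begin{align}
\bv_m^{\min}\|\Pi_i^+\hat\Pi_m\|_{\opn}&\leq \|\Pi_i^+H\hat\Pi_m\|_{\opn}\\
&\leq \|H-\hat\Pi_m^+\rho\hat\Pi_m^+\|_{\opn}+\|\Pi_i^+\hat\Pi_m^+\rho\hat\Pi_m^+\hat\Pi_m\|_{\opn}.
\end{align}
The first term is at most $c\sqrt{\eta\bv_m^{\max}}$. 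For the second we write $\hat\Pi_m^+=\mathds{1}-\hat\Pi_m^-$ on the left and obtain
\begin{align}
\|\Pi_i^+\rho\hat\Pi_m\|_{\opn}+\|\Pi_i^+\hat\Pi_m^-\rho\hat\Pi_m\|_{\opn}.
\end{align}
Since $\Pi_i^+$ commutes with $\rho$ and $\|\Pi_i^+\rho\|_{\opn}\leq \bv_i^{\max}$, we have $\|\Pi_i^+\rho\hat\Pi_m\|_{\opn}\leq \bv_i^{\max}\|\Pi_i^+\hat\Pi_m\|_{\opn}$. Because $i>m+1$ gives $\bv_i^{\max}\leq \bv_m^{\min}/2$, this term can be absorbed into the left-hand side at the cost of a factor $2$.

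\textbf{Step 2 (leakage through earlier buckets).} We bound
\begin{align}
\|\Pi_i^+\hat\Pi_m^-\rho\hat\Pi_m\|_{\opn}\leq \sum_{l<m}\|\Pi_i^+\hat\Pi_l\|_{\opn}\,\|\rho\hat\Pi_m\|_{\opn}.
\end{align}
Using \eqref{eq:op_bound}, or more precisely $\|\hat\Pi_m^+\rho\hat\Pi_m^+\|_{\opn}\leq 2\bv_m^{\max}$ together with $\|\rho\hat\Pi_m\|_{\opn}\leq\sqrt{\|\rho\|_{\opn}\|\hat\Pi_m\rho\hat\Pi_m\|_{\opn}}$, we get $\|\rho\hat\Pi_m\|_{\opn}\lesssim \sqrt{\bv_m^{\max}}$. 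This is also where $\kappa$ enters, through $\bv_m^{\max}\leq\kappa\bv_m^{\min}$. The induction hypothesis controls $\|\Pi_i^+\hat\Pi_l\|_{\opn}$ for $l<m$, since $i>m+1>l+1$. Dividing by $\bv_m^{\min}/2$ then produces a recursion of the form
\begin{align}
x_m\leq c\sqrt{\kappa\eta/\bv_m^{\min}}+2c\sqrt{\kappa}\,x_{m-1},
\end{align}
with the $\sqrt{\eta}$-type prefactors matching. Unrolling this recursion gives the factor $(1+2c\sqrt{\kappa})^m$ times $c\sqrt{\eta/\bv_m^{\min}}$.

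\textbf{Main obstacle.} The delicate step is Step 2: making the cross term $\hat\Pi_m^-\rho\hat\Pi_m$ small. A naive bound via $\|\rho\|_{\opn}\leq 1$ loses too much. I would instead exploit that $\hat\Pi_l$ for $l<m$ and $\hat\Pi_m$ are orthogonal, so only the off-diagonal part of $\rho$ between them matters. That part is bounded by the learning error $c\sqrt{\eta\bv_l^{\max}}$, because $H(\rho,\hat\Pi_l^+)$ is block diagonal between $\hat\Pi_l$ and $\hat\Pi_{l+1}^+$. Organising this so that the exponents in $\bv$ line up is what makes the recursion close with the geometric factor $(1+2c\sqrt{\kappa})^m$, and I expect it to need the most care.
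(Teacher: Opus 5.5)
Your Step 1 coincides with the paper's opening: the eigen-gap via the spectral-overlap fact, subtracting the learning error on $\hat\Pi_m^+$, and absorbing $\bv_i^{\max}\|\Pi_i^+\hat\Pi_m\|_{\opn}$ using $\bv_i^{\max}\le \bv_m^{\min}/2$. The gap is Step 2 as written. You bound the leakage by $\sum_{l<m}\|\Pi_i^+\hat\Pi_l\|_{\opn}\|\rho\hat\Pi_m\|_{\opn}$ with $\|\rho\hat\Pi_m\|_{\opn}\le\sqrt{2\bv_m^{\max}}$. After dividing by $\bv_m^{\min}/2$, the coefficient in front of $\sum_{l<m}\|\Pi_i^+\hat\Pi_l\|_{\opn}$ is then $2\sqrt{2\bv_m^{\max}}/\bv_m^{\min}=2^{(m+5)/2}$ for the dyadic buckets. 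This coefficient contains no factor $c$ (and no $\sqrt{\eta}$), and it grows with $m$. So the recursion you claim, $x_m\le c\sqrt{\kappa\eta/\bv_m^{\min}}+2c\sqrt{\kappa}\,x_{m-1}$, does not follow. Unrolling what you actually obtain gives growth of order $2^{\Theta(m^2)}$, which is vacuous once $m$ is of order $\log(1/\eta)$. The factor $c$ in the homogeneous coefficient is essential: it is what later lets one choose $c=O(1/(\sqrt{\kappa}m))$ so that $(1+2c\sqrt{\kappa})^m c$ stays bounded by a constant.

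The remedy is exactly what you describe under ``Main obstacle'', and it is the paper's argument. Set $H_l:=H(\rho,\hat\Pi_l^+)$. Since $\hat\Pi_l$ is a spectral projection of $H_l$, and $\hat\Pi_m\le\hat\Pi_{l+1}^+$ is orthogonal to $\hat\Pi_l$, you get $\hat\Pi_l H_l\hat\Pi_m=H_l\hat\Pi_l\hat\Pi_m=0$. Hence $\hat\Pi_l\rho\hat\Pi_m=\hat\Pi_l(\hat\Pi_l^+\rho\hat\Pi_l^+-H_l)\hat\Pi_m$, and
\[
\|\Pi_i^+\hat\Pi_l\rho\hat\Pi_m\|_{\opn}\le c\sqrt{\eta\bv_l^{\max}}\,\|\Pi_i^+\hat\Pi_l\|_{\opn}.
\]
So the small factor comes from the learning error of the earlier bucket $l$, not from $\hat\Pi_m$. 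Likewise, $\kappa$ enters through $\bv_l^{\max}/\bv_l^{\min}$ (and through $\bv_m^{\max}/\bv_m^{\min}$ in the inhomogeneous term), not through $\|\rho\hat\Pi_m\|_{\opn}$. Normalize $z_l:=\sqrt{\bv_l^{\min}}\|\Pi_i^+\hat\Pi_l\|_{\opn}$ and use $\eta\le\bv_m^{\min}$. This gives the cumulative recursion $z_m\le 2c\sqrt{\kappa\eta}+2c\sqrt{\kappa}\sum_{l<m}z_l$, which involves all earlier buckets, not just $l=m-1$. Its majorant satisfies $\tilde z_m=(1+2c\sqrt{\kappa})\tilde z_{m-1}$, which is where $(1+2c\sqrt{\kappa})^m$ comes from; the base term $z_0$ is supplied by the hypothesis on $\hat\Pi_0$. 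With Step 2 replaced in this way, your proof becomes the paper's. It yields the stated bound up to the same constant factors the paper absorbs.
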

\begin{proof}
    Recall that $\hat{\Pi}_m$ is constructed from eigenvectors of $H(\rho,\hat\Pi_m^+)$ corresponding to eigenvalues at least $\bv_m^{\min}$. Using \Cref{fact:bound_overlap} twice, we find that
    \begin{align}
        \bv_m^{\min}\|\Pi_i\hat{\Pi}_m\|_{\opn}
        \leq \|\Pi_iH(\rho,\hat\Pi_m^+)\hat{\Pi}_m\|_{\opn} 
        &\leq \|\Pi_i(H(\rho,\hat\Pi_m^+)-\rho)\hat{\Pi}_m\|_{\opn}
        +\|\Pi_i\rho\hat{\Pi}_m\|_{\opn}
        \\
        &\leq \|\Pi_i(H(\rho,\hat\Pi_m^+)-\rho)\hat{\Pi}_m\|_{\opn}
        +\bv_i^{\max}\|\Pi_i\hat{\Pi}_m\|_{\opn}.
    \end{align}
    Then
    \begin{align}
        (\bv_m^{\min}-\bv_i^{\max})\|\Pi_i\hat{\Pi}_m\|_{\opn}&\leq \|\Pi_i(H(\rho,\hat\Pi_m^+)-\rho)\hat{\Pi}_m\|_{\opn}
        \\
        &=\|\Pi_i(\hat{\Pi}_m^+ + (1-\hat{\Pi}_m^+))(H(\rho,\hat\Pi_m^+)-\rho)\hat{\Pi}_m\|_{\opn}
        \\
        &\leq \|\Pi_i\hat{\Pi}_m^+(H(\rho,\hat\Pi_m^+)-\rho)\hat{\Pi}_m\|_{\opn}+\|\Pi_i(1-\hat{\Pi}_m^+)(H(\rho,\hat\Pi_m^+)-\rho)\hat{\Pi}_m\|_{\opn}
        \\
        &= \|\Pi_i\hat{\Pi}_m^+(H(\rho,\hat\Pi_m^+)-\rho)\hat{\Pi}_m^+\hat{\Pi}_m\|_{\opn} +\|\Pi_i(1-\hat{\Pi}_m^+)\rho\hat{\Pi}_m\|_{\opn} %
        \\
        &\leq \|\hat{\Pi}_m^+(H(\rho,\hat\Pi_m^+)-\rho)\hat{\Pi}_m^+\|_{\opn} +\|\Pi_i(1-\hat{\Pi}_m^+)\rho\hat{\Pi}_m\|_{\opn}%
        \\
        &= \|H(\rho,\hat\Pi_m^+)-\hat{\Pi}_m^+\rho\hat{\Pi}_m^+\|_{\opn}+\|\Pi_i(1-\hat{\Pi}_m^+)\rho\hat{\Pi}_m\|_{\opn}.\label{eq:op_bound_intermediate} %
    \end{align}
    For $j<m$, $\hat\Pi_m=\hat\Pi_j^+\hat\Pi_m$. Further, by construction of $\hat\Pi_j$, we have that $H(\rho,\hat\Pi_j^+)\hat\Pi_j=\hat\Pi_jH(\rho,\hat\Pi_j^+)$. Thus %
    \begin{align}
        \|\Pi_i(1-\hat{\Pi}_m^+)\rho \hat{\Pi}_m\|_{\opn}&\leq \sum_{j=0}^{m-1}\|\Pi_i\hat{\Pi}_j\rho\hat{\Pi}_m\|_{\opn}
        \\
        &= \sum_{j=0}^{m-1}\|\Pi_i\hat{\Pi}_j(H(\rho,\hat\Pi_j^+)\hat{\Pi}_j-\rho)\hat{\Pi}_m\|_{\opn}
        \\
        &= \sum_{j=0}^{m-1}\|\Pi_i\hat{\Pi}_j(H(\rho,\hat\Pi_j^+)-\rho)\hat{\Pi}_m\|_{\opn}
        \\
        &=\sum_{j=0}^{m-1}\|\Pi_i\hat{\Pi}_j\hat{\Pi}_j^+(H(\rho,\hat\Pi_j^+)-\rho)\hat{\Pi}_j^+\hat{\Pi}_m\|_{\opn}
        \\
        &\leq \sum_{j=0}^{m-1}\|\Pi_i\hat{\Pi}_j\|_{\opn}\|\hat{\Pi}_j^+(H(\rho,\hat\Pi_j^+)-\rho)\hat{\Pi}_j^+\|_{\opn}.
    \end{align}
    Using our guarantee, and using that $\bv_m^{\min}-\bv_i^{\max}\geq \bv_m^{\min}/2$ we find that
    \begin{align}
        \|\Pi_i\hat{\Pi}_m\|_{\opn}&\leq 2\frac{\|\hat{\Pi}_m^+(H(\rho,\hat\Pi_m^+)-\rho)\hat{\Pi}_m^+\|_{\opn}}{\bv_m^{\min}}+2\sum_{j=0}^{m-1}\frac{\|\hat{\Pi}_j^+(H(\rho,\hat\Pi_j^+)-\rho)\hat{\Pi}_j^+\|_{\opn}}{\bv_m^{\min}}\|\Pi_i\hat{\Pi}_j\|_{\opn}
        \\
        &\leq 2c\sqrt{\frac{\kappa\eta}{\bv_m^{\min}}}+\sum_{j=0}^{m-1}\frac{2c\sqrt{\eta\bv_j^{\max}}}{\bv_m^{\min}}\|\Pi_i\hat{\Pi}_j\|_{\opn}.
    \end{align}
    To resolve this recursion, define for all $r$, $z_{r,i}:=\sqrt{\bv_r^{\min}}\|\Pi_i\hat{\Pi}_r\|_{\opn}$, then this is equal to
    \begin{align}
        z_{m,i}\leq 2c\sqrt{\kappa\eta}+2c\sum_{j=0}^{m-1}\sqrt{\frac{\eta}{\bv_m^{\min}}\frac{\bv_j^{\max}}{\bv_j^{\min}}}z_{j,i}\leq 2c\sqrt{\kappa\eta}+2c\sqrt{\kappa}\sum_{j=0}^{m-1}z_{j,i}.%
    \end{align}
    Here we used that, since $i>m+1$, $m$ does not denote the last bucket, and therefore we can bound $\eta/\bv_m^{\min}\leq 1$. Solving an upper bound, $z_{m,i}\leq \tilde{z}_{m,i}=2c\sqrt{\kappa\eta}+2c\sqrt{\kappa}\sum_{j=0}^{m-1}\tilde{z}_{j,i}$, we find 
    \begin{align}
        \tilde{z}_{m,i}-\tilde{z}_{m-1,i}= 2c\sqrt{\kappa}\tilde{z}_{m-1,i}\quad \implies\quad  \tilde{z}_{m,i}=(1+2c\sqrt{\kappa})\tilde{z}_{m-1,i},
    \end{align}
    such that $z_{m,i}\leq (1+2c\sqrt{\kappa})^{m-1}(2c\sqrt{\kappa\eta}+z_{0,i})$, i.e.,
    \begin{align}
        \|\Pi_i\hat{\Pi}_m\|_{\opn}\leq 2(1+2c\sqrt{\kappa})^{m-1}c\sqrt{\frac{\eta}{\bv_m^{\min}}},
    \end{align}
    Note that $\Pi_i$ can be replaced by $\Pi_i^+$ in all arguments.
\end{proof}

We now have an approximate bucketing, as well as some guarantees on the overlap between the true buckets and our estimate. We will now use these to show how we can perform the equivalence testing.

\subsection{Testing}
\label{sec:equiv:testing}
Our approach for certification bounds the fidelity as a sum of weighted `pieces' of the state in the Hilbert Schmidt distance, using the ideal bucketing $\{\Pi_i\}$ of the known state $\sigma$. However, since both states are unknown here, we will instead define approximate buckets $\{\hat\Pi_i\}$. Our goal would be to connect the expression using ideal buckets (to which we have no access) to an expression involving our empirical buckets, which we can actually use for equivalence testing. 
As before, $\Pi_{S_j\cup S_l}$ denotes the projector projecting to the subspaces corresponding to the buckets $S_j$ and $S_l$. Note that %
$\bv_k^{\min}=\Theta(\varepsilon/d)$. Following \eqref{eq:bound_f_by_l2}, we have the following:
\begin{align}
    \label{eqn:testfidelityequivalence}
    1-F(\rho,\sigma) \leq D_{\chi^2}(\rho\|\sigma) \leq \sum_{j,l=0}^k\frac{\|\Pi_{S_{j}\cup S_{l}}(\rho-\sigma)\Pi_{S_{j}\cup S_{l}}\|_2^2}{\bv^{\min}_{j}+\bv^{\min}_{l}}\leq 2(k+1)\sum_{j=0}^k\frac{\|\Pi_{j}^+(\rho-\sigma)\Pi_{j}^+\|_2^2}{\bv^{\min}_{j}}.
\end{align}
Note that testing $\{\Pi_j^+\}$ instead of the $\{\Pi_{S_{j}\cup S_{l}}\}$ would not increase the sample complexity we derived for identity testing, since we bounded $d_{\Pi}\leq d$ anyways, and by stepping from $\Pi_{S_{j}\cup S_{l}}$ to $\Pi_{\min\{j,l\}}^+$, we only include smaller eigenvalues.

{%
The problem with \eqref{eqn:testfidelityequivalence} is that we do not know $\{\Pi_j^+\}$. The idea, instead, is to use the following projections as a proxy. Note in particular that the sum always starts two indices earlier.
\begin{equation}
    \label{eq:equiv:bucketing}
    \eqM_{i}:=\sum_{j=\max\{i-2,0\}}^{k}\hat\Pi_j.
\end{equation}
We will now argue how this works formally. If we could ensure that 
\begin{align}
\label{eq:equiv_coseness_ideal_pi}
    \forall j:\|\Pi_{j}^+(\rho-\sigma)\Pi_{j}^+\|_2^2< \frac{\eps s_j^{\min}}{2(k+1)^2},
\end{align}
then this would imply that $F(\rho,\sigma)>1-\eps$ by \eqref{eqn:testfidelityequivalence}. By assumption on the decision gap being at least $\eps$ means that $\rho=\sigma$. On the other hand, if for \emph{any} projection $\Pi$ (not necessarily in $\{\Pi_i\}$) $\|\Pi(\rho-\sigma)\Pi\|_2^2>0$, then we immediately know that $\rho\neq \sigma$, and hence $F(\rho,\sigma)\leq 1-\eps$. In the following, for readability, let $\Delta:=(\bv_k^{\max}/\bv_k^{\min})^{1/2}$.

We argue formally in \Cref{lemma:test_approximate} that testing for equivalence with respect to $\{\hat{M}_i\}$ indeed can be used to test instead of $\{\Pi_i^+\}$. Specifically, we will show that if, for $\cMtest$ a suitable constant, all tests
\begin{align}
    \|\eqM_i(\rho-\sigma)\eqM_i\|_2^2\geq \frac{\cMtest \bv_i^{\max}\eps}{(2\Delta k^2(k+1))^2}
    \quad\text{or}\quad
    \eqM_i\rho \eqM_i=\eqM_i\sigma \eqM_i
\end{align}
indicate equivalence, then, with high probability, \eqref{eq:equiv_coseness_ideal_pi} holds, and hence $\rho=\sigma$ with high probability, and if any test indicates farness, we know $\rho\neq \sigma$.

To determine the sample complexity from using \Cref{idenl2:lemma:equiv_proj}, we note the following facts:
\begin{enumerate}
    \item Here we test to precision $\varepsilon':=\varepsilon \cMtest\bv_i^{\max}/(2 \Delta k^2(k+1))^2$,
    \item The maximal eigenvalue in $\eqM_i\sigma \eqM_i$ is given by $\|\eqM_i\sigma \eqM_i \|_{\opn}=\|\hat\Pi_{i-2}^+\eqM_i\sigma \hat\Pi_{i-2}^+ \|_{\opn}\leq 2\bv_{\max\{i-2,0\}}^{\max}=:x_i$, which follows from the bucketing approach in \Cref{lemma:bucketing_guarantee}. 
\end{enumerate}

The sample complexity of testing bucket $i$ is then, for $x_i:=\min\{d \bv_i^{\max},1\}$,  
\begin{align}
\label{eq:equiv:sc_test_i}
    N_{\text{test}}^{(i)}&\leq \cHSquantum\max\left\{\frac{\sqrt{d_{\Pi}}\min\{d_{\Pi}x_i,1\}}{\varepsilon'},\sqrt{\frac{d_{\Pi}}{\varepsilon'}}\right\}
    \\
    &\leq \frac{\cHSquantum k^4(k+1)^2}{\cMtest}\max\left\{\frac{\sqrt{d_{\Pi}}\Delta^{2}\min\{d_{\Pi}\bv_{\max\{i-2,0\}}^{\max},1\}}{\varepsilon\bv_i^{\max}},\sqrt{\frac{d_{\Pi}\Delta^{2}}{\varepsilon\bv_i^{\max}}}\right\}
    \\
    &\leq \widetilde{O}\left(\max\left\{\frac{d^{1/2}\Delta^2x_i}{\varepsilon\bv_i^{\max}},\frac{d}{\eps}\right\}\right),
\end{align}
where we used that for the last bucket, $\bv_{k}^{\max}=\eta$, and the fact that $\bv_{\max\{i-2,0\}}^{\max}/\bv_{i}^{\max}=\Theta(1)$. It will be clear that the term $d/\eps$ can never dominate, and we will drop it in the remaining discussion.
    
We will later see that $\eta=\widetilde{\Theta}(\eps/d^{1/4})$, such that both terms are in $\widetilde{O}(\min\{d^{9/4}/\eps, d^{3/2}/\eps^2\})$.
}

\color{blue}

\begin{figure}[h!]
        \centering
        \begin{tikzpicture}[scale=0.75]

\definecolor{myyellow}{RGB}{230,230,100}
\definecolor{myblue}{RGB}{120,170,220}
\definecolor{mygreen}{RGB}{120,230,120}
\definecolor{myred}{RGB}{255,0,0}

\newcommand{\drawSquare}[1]{%
    \draw[draw=black] (\wF,0) rectangle (0,\wF-#1);
    \draw[draw=black] (\wF,0) rectangle (#1,\wF);
}
\newcommand{\drawSquareColor}[2]{%
    \fill[fill=#1] (\wF,0) rectangle (#2,\wF-#2);
    \draw[draw=black] (\wF,0) rectangle (0,\wF-#2);
    \draw[draw=black] (\wF,0) rectangle (#2,\wF);
}

\newcommand{\drawSquareStripe}[2]{%
    \fill[pattern={Lines[angle=45, line width=1.25]}, pattern color=#1 ] (\wF,0) rectangle (#2,\wF-#2);
    \draw[draw=black] (\wF,0) rectangle (0,\wF-#2);
    \draw[draw=black] (\wF,0) rectangle (#2,\wF);
}

\newcommand{\drawSquareOnly}[3]{%
    \fill[fill=#1] (#3,\wF-#3) rectangle (#2,\wF-#2);
    \draw[draw=black] (\wF,0) rectangle (0,\wF-#2);
    \draw[draw=black] (\wF,0) rectangle (#2,\wF);
}

\def\wA{0.4}   
\def\wB{1.1}  
\def\wC{1.4}
\def\wD{2.4}   
\def\wE{3}  
\def\wF{3.8}

\drawSquare{0}
\drawSquare{\wA}
\drawSquare{\wB}
\drawSquareOnly{myyellow}{\wC}{\wD}
\drawSquareColor{myblue}{\wD}{\wF}
\drawSquare{\wE}

\draw[thick,->] (4.8,1.9) -- (7,1.9);

\begin{scope}[xshift=8cm]
\drawSquare{0}
\drawSquareColor{myyellow}{\wA}
\drawSquareStripe{myred}{\wA}
\drawSquare{\wB}
\drawSquareColor{myyellow}{\wC}
\drawSquareStripe{myblue}{\wD}
\drawSquare{\wE}

\end{scope}

\end{tikzpicture}
                \caption{Comparing the buckets from identity testing and equivalence testing. A bucket $S$ (yellow) is expanded to cover the entire remainder of the state, i.e., the area with lower eigenvalues as well (blue). We further include two earlier buckets (red, which also covers the off-diagonals). The actual $M$ is then an approximation to this new bucket.}
        \label{fig:buckets}
\end{figure}

\color{black}

We now prove that we can indeed replace testing using $\{\Pi_i^+\}$ by testing with $\{\eqM_i\}$.
\begin{lemma}
\label{lemma:test_approximate}
    For a suitably large implicit constant in $N_{\textnormal{learn}}$, where $\alpha=O(1/k^3)$ as defined in \Cref{lemma:bucketing_guarantee}. Suppose we test, for all $\eqM_i$, whether $\|\eqM_i(\rho-\sigma)\eqM_i\|_2^2\geq \bv_i^{\max}\cMtest\varepsilon/(2\Delta k^2(k+1))^2$ or $\eqM_i\rho \eqM_i=\eqM_i\sigma \eqM_i$ using the bucketing produced by \Cref{alg:equiv_bucketing}, to precision $\delta=0.001/k$ each. $\cMtest$ is a universal constant. Suppose all tests indicate equivalence. Then, with probability at least $0.99$, it holds that
    \begin{equation}
    \label{eq:test_approximate_eq}
        \forall i: \frac{\bv_i^{\min}\varepsilon}{4(k+1)^2} \geq \|\Pi_i^+(\rho-\sigma)\Pi_i^+\|_2^2.
    \end{equation}
\end{lemma}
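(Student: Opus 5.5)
We prove the contrapositive-style implication bucket by bucket. Fix $i$. The idea is to decompose $\Pi_i^+$ into the part captured by the empirical projection $\eqM_i$ and a residual, and to show that the residual carries only a small Hilbert--Schmidt mass of $\rho-\sigma$. Here $\{\Pi_j\}$ is the true bucketing of the state that was learned; we call it $\sigma$, and $\rho$ is handled symmetrically via the $\ell_2$ tests.

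First we condition on the good events. With probability $0.999$, \Cref{lemma:bucketing_guarantee} holds with $\alpha=O(1/k^3)$, and hence so do the overlap bounds of \Cref{lemma:first_bucket} and \Cref{lemma:equiv:overlap}. All $k+1$ tests succeed (union bound with $\delta=0.001/k$). On this event, acceptance of every test gives
\[
\|\eqM_i(\rho-\sigma)\eqM_i\|_2^2\le \bv_i^{\max}\cMtest\varepsilon/(2\Delta k^2(k+1))^2
\]
for every $i$. For $m<i-2$ we have $i>m+1$, so \Cref{lemma:equiv:overlap} yields
\[
\|\Pi_i^+\hat\Pi_m\|_{\opn}\le (1+2c\sqrt\kappa)^m c\sqrt{\eta/\bv_m^{\min}}.
\]
With $c=O(\alpha)=O(1/k^3)$ and $m\le k$, the prefactor $(1+2c\sqrt{\kappa})^m$ is $O(1)$. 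Since $\mathds 1-\eqM_i=\sum_{m<i-2}\hat\Pi_m$, this gives
\[
\|\Pi_i^+(\mathds 1-\eqM_i)\|_{\opn}\le \sum_{m<i-2}O(c)\sqrt{\eta/\bv_m^{\min}}.
\]
The key point is that each of these terms is small relative to $\sqrt{\bv_i^{\max}/\bv_m^{\min}}$, because $\bv_m^{\min}\ge 2\bv_i^{\max}$ for $m\le i-2$. This is exactly why $\eqM_i$ starts two indices earlier.

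Next we write $\Pi_i^+=\Pi_i^+\eqM_i+\Pi_i^+(\mathds 1-\eqM_i)$ on both sides and expand:
\[
\|\Pi_i^+(\rho-\sigma)\Pi_i^+\|_2\le \|\eqM_i(\rho-\sigma)\eqM_i\|_2+2\|\Pi_i^+(\mathds 1-\eqM_i)(\rho-\sigma)\eqM_i\|_2+\|\Pi_i^+(\mathds 1-\eqM_i)(\rho-\sigma)(\mathds 1-\eqM_i)\Pi_i^+\|_2 .
\]
The first term is controlled by the test. For the cross and residual terms we cannot use a bound on $\rho-\sigma$ directly, since $\rho$ is unknown off $\eqM_i$. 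Instead we split $\rho-\sigma$ and bound the $\sigma$-part and the $\rho$-part separately by $\|\Pi_i^+(\mathds 1-\eqM_i)\sqrt\sigma\|_2\,\|\sqrt\sigma\cdots\|$, in the style of the Hölder manipulations in the proof of \Cref{lemma:certification_rank_dependence}.

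For the $\sigma$-part, $\Pi_i^+$ commutes with $\sigma$, so
\[
\|\Pi_i^+ X\sigma\|_2\le \|\sigma\Pi_i^+\|_{\opn}^{1/2}\cdot\|\sqrt{\sigma}\Pi_i^+ X\|_2\le \sqrt{\bv_i^{\max}}\cdot\sqrt{d}\,\|\Pi_i^+X\|_{\opn}\sqrt{\bv_i^{\max}}
\]
for $X=\mathds 1-\eqM_i$. Combined with the overlap bound, this is $O(c)\sqrt{d\,\eta\,\bv_i^{\max}}$, up to the sum over $m$.

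For the $\rho$-part we use that the bucket tests on $\eqM_m$ for $m<i$ have already passed. By an induction on $i$ from $0$ upward, this transfers $\Pi_m^+$-closeness, and hence $\|\Pi_i^+\rho\Pi_i^+\|$ is close to $\|\Pi_i^+\sigma\Pi_i^+\|$; so the same estimate holds for $\rho$ up to a constant. Squaring and collecting the $O(k)$ terms, each carries a factor $c^2=O(\alpha^2)$. These give a contribution of order $k^2\alpha^2\eta\, d\,\bv_i^{\max}\cdot(\text{poly} k)$. The requirement that this be at most $\bv_i^{\min}\varepsilon/(8(k+1)^2)$ is what fixes $\alpha$ and the constant in $N_{\textnormal{learn}}$, with $\eta$ in the regime chosen later. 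The test term contributes $\cMtest\bv_i^{\max}\varepsilon/(\Delta^2k^4(k+1)^2)\le \bv_i^{\min}\varepsilon/(8(k+1)^2)$ after using $\bv_i^{\max}/\bv_i^{\min}\le\Delta^2$, which explains the $\Delta$ factor in the test precision, needed for the last bucket.

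\textbf{Main obstacle.} The hardest step is controlling the leakage term involving the unknown $\rho$ outside $\eqM_i$. The overlap lemmas only control $\sigma$'s eigenspaces. If a trace or $\sqrt d$ factor from converting operator-norm overlap to Hilbert--Schmidt norm cannot be absorbed into $\eta$, I would fall back on bounding the leakage via $\|\Pi_i^+\hat\Pi_m\|_{\opn}$ times $\|\hat\Pi_m(\rho-\sigma)\eqM_{m}\|_2$. Since $\hat\Pi_m\le \eqM_{m}$, this last quantity is itself controlled by the already-accepted test for bucket $m$, which closes the recursion without ever bounding $\rho$ pointwise.
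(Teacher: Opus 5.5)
Your main line has a genuine gap. It sits exactly where you decide that the cross and residual terms ``cannot use a bound on $\rho-\sigma$ directly'' and split them into a $\sigma$-part and a $\rho$-part.

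Bounding the two contributions separately throws away the cancellation that the accepted tests certify. Your passage from operator-norm overlap to $\|\cdot\|_2$ also costs a factor $\sqrt d$. By your own estimate, the $\sigma$-part contributes order $c^2 d\,\eta\,\bv_i^{\max}$. For the last bucket this is $c^2 d\eta^2$, while the target is $\bv_k^{\min}\varepsilon/(4(k+1)^2)=\Theta(\varepsilon^2/(dk^2))$. With $\eta=\Theta(\varepsilon/d^{1/4})$ this forces $c\lesssim 1/(k\,d^{3/4})$, i.e.\ $\alpha$ polynomially small in $d$. That contradicts the hypothesis $\alpha=O(1/k^3)$ and would inflate $N_{\textnormal{learn}}\propto\alpha^{-2}$ by a factor $d^{3/2}$. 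Even the intermediate buckets would already require $c^2 d\eta\lesssim\varepsilon/k^2$.

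The $\rho$-part induction fails as well. An inductive hypothesis $\|\Pi_m^+(\rho-\sigma)\Pi_m^+\|_2\lesssim\sqrt{\bv_m^{\min}\varepsilon}/k$ does not imply $\|\Pi_i^+\rho\Pi_i^+\|_{\opn}=O(\bv_i^{\max})$ once $\bv_i^{\max}<\varepsilon$. Even if it did, the $\rho$-part would suffer the same $\sqrt d$ loss. No induction over $i$ is needed in any case, since all tests are assumed to accept simultaneously.

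The fallback in your last paragraph is the correct argument, and it is how the paper proves the lemma. It has to be the main line, and the computation that closes it has to be carried out.

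For $m\le i-3$, nestedness of the $\eqM$'s gives $\hat\Pi_m=\hat\Pi_m\eqM_m$ and $\eqM_i=\eqM_m\eqM_i$. Hence $\|\Pi_i^+\hat\Pi_m(\rho-\sigma)\eqM_i\Pi_i^+\|_2\le\|\Pi_i^+\hat\Pi_m\|_{\opn}\|\eqM_m(\rho-\sigma)\eqM_m\|_2$. The doubly-residual term is treated the same way, with two overlap factors and $\|\hat\Pi_j(\rho-\sigma)\hat\Pi_l\|_2\le\|\eqM_{\min\{j,l\}}(\rho-\sigma)\eqM_{\min\{j,l\}}\|_2$.

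The step you did not verify is the scaling. The overlap $\cOpMix\sqrt{\eta/\bv_m^{\min}}$ exactly cancels the coarse tolerance $\sqrt{\bv_m^{\max}\cMtest\varepsilon}/(2\Delta k^2(k+1))$ of the earlier test. This leaves $O(\sqrt{\eta\cMtest\varepsilon}/(\Delta k^2(k+1)))$ per term, with no $\sqrt d$. Then sum the $O(k)$ and $O(k^2)$ terms respectively, square, and use $\eta/\Delta^2=\bv_k^{\min}\le\bv_i^{\min}$ together with a small $\cMtest$; this yields the claim.

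This also corrects your explanation of $\Delta$. It appears in the tolerance of every test, not only the last one, because the earlier tests must control the leakage into $\Pi_k^+$. Without the factor $\Delta^{-2}$ that leakage would be of order $\eta\varepsilon\gg\bv_k^{\min}\varepsilon$. The hypothesis $\alpha=O(1/k^3)$ is then used only to keep $(1+2c\sqrt\kappa)^m c\le\cOpMix$ with $c=\alpha$.
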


\begin{proof}
    For the following, recall that by \eqref{eq:equiv:bucketing}, our bucketing is given by $\eqM_i:=\hat{\Pi}_{i-2}^+=\sum_{j=i-2}^{\nu}\hat{\Pi}_j$ and $\eqM_i^-:=\sum_{j=0}^{i-3}\hat{\Pi}_j$, such that $\mathds{1}_d=\eqM_i+\eqM_i^-$. We also note that by a union bound, with probability at least $0.999$, no test is faulty. We then bound
    \begin{align}
        \|\Pi_i^+(\rho-\sigma)\Pi_i^+\|_2^2&=\|\Pi_i^+(\eqM_i+(1-\eqM_i))(\rho-\sigma)(\eqM_i+(1-\eqM_i))\Pi_i^+\|_2^2%
        \\
        &\leq 4\left(\|\eqM_i(\rho-\sigma)\eqM_i\|_2^2+2\|\Pi_i^+\eqM_i^-(\rho-\sigma)\eqM_i\Pi_i^+\|_2^2+\|\Pi_i^+\eqM_i^-(\rho-\sigma)\eqM_i^-\Pi_i^+\|_2^2\right).
    \end{align}
    The first term is bounded by our assumption, and it remains to show that the other terms are small enough such that \eqref{eq:test_approximate_eq} holds. Using \Cref{lemma:first_bucket}, \Cref{lemma:equiv:overlap}, and the guarantee from \Cref{lemma:bucketing_guarantee} and that $j<i-1$,
    it is not difficult to see that by a sufficiently large constant $\cOpMix$, we can bound 
    \begin{align}
        \forall j<i-1:\quad \|\Pi_i^{+}\hat{\Pi}_j\|_{\opn}\leq \cOpMix\sqrt{\frac{\eta}{\bv_j^{\min}}},
        \qquad 
        \|\hat{M}_j (\rho-\sigma) \hat{M}_j\|_2\leq \frac{\sqrt{\bv_j^{\max}\cMtest\varepsilon}}{2\Delta k^2(k+1)}.
    \end{align}
    Here we assumed that the constants in \eqref{eq:equiv_io_overlap} are such that $(1+2c\sqrt{\kappa})^mc\leq \cOpMix$, which requires $c=O(1/(\sqrt{\kappa}m))$, which is only polylogarithmic in the parameters.
    Thus,
    \begin{align}
        \|\Pi_i^+\eqM_i^-(\rho-\sigma)\eqM_i\Pi_i^+\|_2&\leq \sum_{j=0}^{i-3}\|\Pi_i^{+}\hat{\Pi}_j (\rho-\sigma) \eqM_i\Pi_i^+\|_2
        \\
        &\leq \sum_{j=0}^{i-3}\|\Pi_i^{+}\hat{\Pi}_j\|_{\opn}\|\hat{\Pi}_j\eqM_j (\rho-\sigma) \eqM_j\eqM_i\Pi_i^+\|_2
        \\
        &\leq \sum_{j=0}^{i-3}\|\Pi_i^{+}\hat{\Pi}_j\|_{\opn}\|\eqM_j (\rho-\sigma) \eqM_j\|_2
        \\
        &\leq \sum_{j=0}^{i-3}\cOpMix\sqrt{\frac{\bv_j^{\max}}{\bv_j^{\min}}}\frac{\sqrt{\eta\cMtest\varepsilon}}{\Delta k^2(k+1)}\leq \frac{\cOpMix\sqrt{\eta\cMtest\varepsilon}}{\Delta(k+1)}.
    \end{align}
    To bound the remaining term, note that for any $jk$, $l$, $\hat\Pi_j\eqM_{\min\{j,l\}}=\hat\Pi_j$ and $\hat\Pi_l\eqM_{\min\{j,l\}}=\hat\Pi_l$.
    \begin{align}
        \|\Pi_i^{+}\eqM_i^-(\rho-\sigma)\eqM_i^-\Pi_i^{+}\|_2&\leq \sum_{j,l=0}^{i-3}\|\Pi_i^{+}\hat{\Pi}_j (\rho-\sigma) \hat{\Pi}_l\Pi_i^{+}\|_2
        \\
        &\leq \sum_{j,l=0}^{i-3}\|\Pi_i^{+}\hat{\Pi}_j\|_{\opn} \|\hat{\Pi}_j(\rho-\sigma)\hat{\Pi}_l\|_2 \|\hat{\Pi}_l\Pi_i^{+}\|_{\opn}
        \\
        &\leq \sum_{j,l=0}^{i-3}\|\Pi_i^{+}\hat{\Pi}_j\|_{\opn} \|\eqM_{\min\{j,l\}}(\rho-\sigma)\eqM_{\min\{j,l\}}\|_2 \|\hat{\Pi}_l\Pi_i^{+}\|_{\opn}
        \\
        &\leq \sum_{j,l=0}^{i-3}\frac{(\cOpMix)^2}{k^2(k+1)}\sqrt{\frac{\eta}{\bv_j^{\min}} \frac{\cMtest\max\{\bv_j^{\max},\bv_l^{\max}\}\varepsilon}{\Delta^2} \frac{\eta}{\bv_l^{\min}}}
        \\
        & \leq \sum_{j,l=0}^{i-3}\frac{(\cOpMix)^2}{ k^2(k+1)}\sqrt{\frac{2\cMtest \eta}{\Delta^2\min\{\bv_j^{\min},\bv_l^{\min}\}}}\sqrt{\eta \varepsilon}\leq \frac{(\cOpMix)^2\sqrt{\eta\cMtest  \varepsilon}}{\Delta(k+1)}.
    \end{align}
    Finally, note that all tests accepting guarantees that $\|\eqM_i(\rho-\sigma)\eqM_i\|_2^2< \bv_i^{\min}\cMtest \varepsilon/(k+1)^2$
    Choosing the implicit constant small enough, we thus found that, for suitably chosen $\cMtest$, 
    \begin{align}
        \|\Pi_i^+(\rho-\sigma)\Pi_i^+\|_2^2\leq \|\eqM_i(\rho-\sigma)\eqM_i\|_2^2+9\left(\frac{[(\cOpMix)^2+\cOpMix]\sqrt{\eta\cMtest  \varepsilon}}{\Delta(k+1)}\right)^2\leq \frac{\bv_i^{\min}\eps}{4(k+1)^2}.
    \end{align}
    In the last step, we used that $\eta/\Delta^2=\bv_k^{\max}/\Delta^2= \bv_k^{\min}$.%
    This implies our claimed result. %
\end{proof}

\subsection{Balancing the Costs of Learning and Testing}
\label{sec:equiv:balancing}
As mentioned before, our overall approach is to suitably tune the approximation threshold $\eta$ such that the sample complexity of approximate learning via bucketing and the sample complexity of testing are equal.

For the approximate bucketing, similar to the case of certification, the final sample complexity is dominated by the second term in \Cref{lemma:learn_buckets_chl} as is easily checked. For the following, let $x_i:=\min\{2d\bv_i^{\max},1\}\geq \tr[\hat\Pi_i^+\sigma]$ by \eqref{eq:op_bound}.  Then, by \Cref{lemma:learn_buckets_chl}, the cost of learning for bucket $i$ is
\begin{equation}\label{eqn:samplecomplexitylearning}
    \cOpLearn\sqrt{\frac{d \cdot \tr[\hat\Pi_i^+\sigma]}{N_{\text{learn}}^{(i)}}}\leq \cOpLearn\sqrt{\frac{x_i d}{N_{\text{learn}}^{(i)}}}\leq\sqrt{\eta\bv_i^{\max}}
    \quad\Leftrightarrow\quad 
    N_{\text{learn}}^{(i)}\geq(\cOpLearn)^2\frac{x_i d}{\eta\bv_i^{\max}},
\end{equation}
and we may bound the maximal cost of learning a bucket by the cost of the last bucket we learn, where $\bv_i^{\max}=\eta$, which we simply denote by $N_{\text{learn}}\leq(\cOpLearn)^2dx/\eta^2$ (and let $x_k:=\min\{2d\eta,1\}$ be the $x$ of the last bucket).
We can similarly analyze the cost for testing the buckets. Now for the sample complexity of testing, by \Cref{idenl2:lemma:equiv_proj}, we can say that
\begin{align}\label{eqn:samplecomplexitytesting}
    N_{\text{test}}^{(i)}\leq\cHSquantum\frac{d^{1/2}x_i}{\varepsilon\bv_i^{\max}/\Delta^2}.
\end{align}
The bound for testing the last bucket will clearly dominate this sample complexity. Therefore, the respective costs of learning and testing are
\begin{align}\label{eqn:samplecomplexitytestinglearning}
    N_{\text{learn}}\leq(\cOpLearn)^2\frac{dx_{k-1}}{\eta^2},
    \qquad
    N_{\text{test}}=\cHSquantum\frac{d^{1/2}x_k \eta/(\eps/d)}{\eps\eta}=\cHSquantum\frac{d^{3/2}x_k}{\eps^2}.
\end{align}
Finally, to minimize the total sample complexity of our equivalence testing, we equate the sample complexities of learning and testing as in \eqref{eqn:samplecomplexitytestinglearning}, using that $x_{k-1}\leq \Theta(x_k)$
\begin{equation}
    N_{\text{test}}=N_{\text{learn}}\implies \eta=\frac{\cOpLearn}{(\cHSquantum)^{1/2}}\frac{\varepsilon}{d^{1/4}}.
\end{equation}
By setting the above value of $\eta$, we obtain the final sample complexity scales as:
\begin{equation}
    N=\widetilde{O}\left(\min\left\{\frac{d^{3/2}}{\varepsilon^2},\frac{d^{9/4}}{\varepsilon}\right\}\right),
\end{equation}
Note that the above mentioned sample complexity shows an improvement over testing in trace distance for the regime $\varepsilon<1/d^{3/4}$.

\color{blue}

\color{black}

We can now conclude our claimed result.
\fideqtesting*
\begin{proof}
    In \Cref{sec:equiv:learning}, we have shown how to learn the buckets approximately. Then, in \Cref{lemma:test_approximate}, we proved that we can use the approximate buckets $\{M_i\}$ as substitutes for the actual buckets to perform testing and find a test which fails in case $\rho\neq\sigma$, since then
    \begin{align}
        \eps\leq 1-F(\rho,\sigma) \leq 2(k+1)\sum_{j=0}^k\frac{\|\Pi_{j}^+(\rho-\sigma)\Pi_{j}^+\|_2^2}{\bv^{\min}_{j}},
    \end{align}
    whereas a successful testing according to \Cref{lemma:test_approximate} guarantees that $1-F(\rho,\sigma)\leq \eps/2$. The sample complexity derived above does indeed bound the costs of learning and testing as mentioned in \Cref{sec:equiv:balancing},
    \begin{align}
        \forall i: N=\widetilde{O}\left(\min\left\{\frac{d^{3/2}}{\varepsilon^2},\frac{d^{9/4}}{\varepsilon}\right\}\right)=\widetilde{O}\left(\frac{d}{\eta}\min\left\{\frac{1}{\eta},d\right\}\right)\geq \widetilde{O}\left(\frac{d\cdot \min\left\{1,d\lambda_i^{\max}\right\}}{\lambda_i^{\min}\eta}\right)=N_{\text{learn}}^{(i)},
    \end{align}
    and $N\geq N_{\text{test}}^{(i)}$ (see \eqref{eq:equiv:sc_test_i}) always. 
    We test in total $k+1$ buckets. Since $k=\lceil \log \frac{d}{\eps} \rceil$, this only leads to additional polylogarithmic factors. This completes the proof of our equivalence testing.
\end{proof}

\section{Mutual Information Testing}
\label{sec:equivalence_mi}

In this section, we show how to perform mutual information testing. Our main result is as follows:

\mitesting*

Our approach is to first reduce the problem to independence testing in (in)fidelity. This problem in turn is solved with a specialized version of equivalence testing between $\rho_{AC}$ and $\rho_A\otimes\rho_C$, which makes use of the product structure of $\rho_A\otimes\rho_C$.

\color{blue}

\color{black}

\subsection{Reduction to Independence Testing in Fidelity}

Consider two quantum states $\rho, \sigma$ with spectral decompositions $\rho=\sum_i\rho_i\dyad{\rho_i}$ and $\sigma=\sum_i\sigma_i\dyad{\sigma_i}$. Then the following holds~\cite{Flammia2024quantumchisquared}: 
\footnote{We note that the original statements are written in terms of the \emph{squared Hellinger distance}, which we bound by the fidelity.},
\begin{align}
    D(\rho\|\sigma)\leq 4\left(2+D_{\infty}^{\textnormal{Ren}}(\rho\|\sigma)\right)(1-F(\rho,\sigma)),
\end{align}
where
\begin{align}
    D_{\infty}^{\textnormal{Ren}}(\rho\|\sigma):=\max_{i,j|\braket{\rho_i}{\sigma_j}\neq 0} \ln\left(\frac{\rho_i}{\sigma_j}\right).
\end{align}
This can be used to bound the mutual information by fidelity, which directly implies the following result:
\begin{lemma}[{Follows from \cite[Thm.\ 4.1]{Flammia2024quantumchisquared}}]
Let $\rho_{AC} \in \mathbb{C}^{d_{AC} \times d_{AC}}$ be an unknown bipartite quantum state and $\eta \in (0,1)$ be a parameter. Suppose there exists an algorithm that can distinguish between $\rho_{AC}=\rho_A\otimes\rho_C$ and $F(\rho_{AC},\rho_A\otimes\rho_C)\leq 1-\eta$ with probability at least $0.99$ by performing measurement on samples of $\rho_{AC}$ and $\rho_A\otimes\rho_C$, using at most $N$ samples. Then, after the preprocessing discussed in \Cref{sec:preprocessmitesting}, the algorithm can distinguish between the following: 
 \begin{align}
        \rho_{AC}=\rho_A\otimes\rho_C
        \quad\text{or}\quad I(A:C)_{\rho_{AC}}\geq\varepsilon.
\end{align}
with probability at least $0.99$, where $\eta=\varepsilon/O(\log(d_Ad_C/\varepsilon))$.

\end{lemma}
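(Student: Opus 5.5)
The plan is to combine the inequality stated just above, $D(\rho\|\sigma)\leq 4(2+D_{\infty}^{\textnormal{Ren}}(\rho\|\sigma))(1-F(\rho,\sigma))$, with a depolarizing preprocessing that keeps $D_{\infty}^{\textnormal{Ren}}$ logarithmic. Take $\rho=\rho_{AC}$ and $\sigma=\rho_A\otimes\rho_C$, so that $D(\rho\|\sigma)=I(A:C)_\rho$. Completeness is immediate: if $\rho_{AC}=\rho_A\otimes\rho_C$, then the preprocessed state is still a product state, and the independence tester accepts.

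For soundness, the preprocessing I would use applies local depolarizing channels on $A$ and $C$ separately. Concretely,
\begin{align}
\tilde\rho_{AC}=(\mathcal{D}^A_\nu\otimes\mathcal{D}^C_\nu)(\rho_{AC}),\qquad \mathcal{D}^X_\nu(\tau)=(1-\nu)\tau+\nu\,\mathbf{1}_X/d_X,
\end{align}
with $\nu=\varepsilon/\textnormal{poly}(\log(d_Ad_C/\varepsilon))$ or a similarly small parameter. Local channels map product states to product states, so $\tilde\rho_A\otimes\tilde\rho_C$ is the image of $\rho_A\otimes\rho_C$ under the same channel. Sample access to the processed states is simulated directly.

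After the channel, every eigenvalue of $\tilde\rho_A\otimes\tilde\rho_C$ is at least $(\nu/d_A)(\nu/d_C)$. Every eigenvalue of $\tilde\rho_{AC}$ is at most $1$. Hence
\begin{align}
D_{\infty}^{\textnormal{Ren}}(\tilde\rho_{AC}\|\tilde\rho_A\otimes\tilde\rho_C)\leq \ln\frac{d_Ad_C}{\nu^2}=O(\log(d_Ad_C/\varepsilon)),
\end{align}
and the inequality gives $I(A:C)_{\tilde\rho}\leq O(\log(d_Ad_C/\varepsilon))\,(1-F(\tilde\rho_{AC},\tilde\rho_A\otimes\tilde\rho_C))$.

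The remaining step, which I expect to be the main obstacle, is showing that the preprocessing does not destroy too much mutual information. I need $I(A:C)_{\tilde\rho}\geq\varepsilon/2$ whenever $I(A:C)_\rho\geq\varepsilon$. Data processing only gives the wrong direction, $I(A:C)_{\tilde\rho}\leq I(A:C)_\rho$, so a continuity argument is needed.

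Mutual information is continuous in trace distance. The Alicki--Fannes--Winter bound gives $|I(A:C)_\rho-I(A:C)_{\tilde\rho}|\leq O(t\log(d_Ad_C)+h(t))$, where $t=\|\rho-\tilde\rho\|_{\tr}\leq 2\nu$. I would choose $\nu=\varepsilon/(c\log(d_Ad_C/\varepsilon))$ with a suitable constant $c$, so that this loss is at most $\varepsilon/2$. One point needs checking: the binary entropy term $h(2\nu)\approx 2\nu\log(1/\nu)$ must also be controlled by this choice, which holds once the logarithmic factor in $\nu$ is large enough.

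With this choice, $D_{\infty}^{\textnormal{Ren}}$ is still $O(\log(d_Ad_C/\varepsilon))$. Therefore $1-F(\tilde\rho_{AC},\tilde\rho_A\otimes\tilde\rho_C)\geq \varepsilon/O(\log(d_Ad_C/\varepsilon))=\eta$, so the fidelity tester with parameter $\eta$ rejects with probability at least $0.99$, which proves the claim.
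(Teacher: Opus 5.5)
Your proposal is correct and takes essentially the route the paper intends: your $\mathcal{D}^A_\nu\otimes\mathcal{D}^C_\nu$ is exactly the channel of \Cref{sec:preprocessmitesting}, and bounding $D_{\infty}^{\textnormal{Ren}}$ by $\ln(d_Ad_C/\nu^2)$ before applying the stated Flammia--O'Donnell inequality is how the paper's citation combines with that preprocessing. The paper leaves implicit the step that $I(A:C)$ survives the preprocessing, and you handle it correctly with Alicki--Fannes--Winter; your log-smaller choice $\nu=\varepsilon/O(\log(d_Ad_C/\varepsilon))$, rather than the paper's $\nu_A=\nu_C=\varepsilon/1000$, is what keeps the $O(t\log(d_Ad_C)+h(t))$ loss below $\varepsilon/2$ uniformly in $d_A$, $d_C$ and $\varepsilon$.
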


In the following subsections, we will derive an algorithm which performs independence testing in fidelity, stated as follows:

\begin{lemma}
Let $\sigma_{AC},\sigma_A\otimes\sigma_C \in \mathbb{C}^{d_{AC} \times d_{AC}}$ be unknown bipartite quantum states and $\eps \in (0,1)$ be a parameter. Then, using single-copy adaptive measurements, in order to distinguish whether
    \begin{align}
        \sigma_{AC}=\sigma_A\otimes\sigma_C
        \quad\text{or}\quad F({\sigma_{AC}, \sigma_A\otimes\sigma_C})\leq 1-\varepsilon
    \end{align}
with probability at least $0.99$, $\widetilde{O}(\min\{(d_Ad_C)^{3/2}/\varepsilon^2,d_A^{9/4}d_C^{3/4}/\varepsilon\})$ samples of $\sigma_{AC}$ and $\sigma_A\otimes\sigma_C$ are sufficient.
\end{lemma}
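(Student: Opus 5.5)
The plan is to run the equivalence-testing pipeline of \Cref{sec:equivalence}, with $\rho:=\sigma_{AC}$ and $\sigma:=\sigma_A\otimes\sigma_C$, changing only the bucketing step. The bucketing is built from the product state. The first term $(d_Ad_C)^{3/2}/\eps^2$ is immediate: apply \Cref{lemma:fid_eq_testing}, or the trace-distance tester via \Cref{lem:fucsvandegraff}, directly to the pair $\sigma_{AC}$, $\sigma_A\otimes\sigma_C$. The work is in the second term $d_A^{9/4}d_C^{3/4}/\eps$. After depolarizing as in \Cref{rem:min_ev}, all eigenvalues are $\widetilde{\Omega}(\eps/(d_Ad_C))$.

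\textbf{Learning step.} Instead of bucketing $\sigma_A\otimes\sigma_C$ as a whole, run \Cref{alg:equiv_bucketing} separately on $\sigma_A$ with threshold $\eta_A$ and on $\sigma_C$ with threshold $\eta_C$. By \Cref{lemma:bucketing_guarantee}, this costs $\widetilde{O}(d_A\min\{d_A\eta_A,1\}/\eta_A^2)$ and $\widetilde{O}(d_C\min\{d_C\eta_C,1\}/\eta_C^2)$ samples respectively. It yields approximate projections $\{\hat\Pi^A_a\}$ and $\{\hat\Pi^C_c\}$ with the operator-norm guarantees of that lemma.

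\textbf{Product buckets.} Set $\hat\Pi_{ac}:=\hat\Pi^A_a\otimes\hat\Pi^C_c$. The true buckets of the product state are $\Pi^A_a\otimes\Pi^C_c$, with eigenvalues in $[\bv_a^{\min}\bv_c^{\min},\,\bv_a^{\max}\bv_c^{\max}]$. I then group the pairs $(a,c)$ by the dyadic value of the product $2^{-a-c}$, which gives $O(k^2)$ buckets. The overlap bounds transfer to tensor products: $\|(\Pi_i^{A,+}\otimes\Pi^{C,+}_{j})(\hat\Pi^A_a\otimes\hat\Pi^C_c)\|_{\opn}$ factorizes. So \Cref{lemma:first_bucket} and \Cref{lemma:equiv:overlap}, applied to each factor, control the leakage between far-apart product buckets. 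The analogue of \Cref{lemma:test_approximate} then goes through with $\eqM$-type projections of the form $\hat\Pi^{A,+}_{a-2}\otimes\hat\Pi^{C,+}_{c-2}$. Using the $\chi^2$ bound \eqref{eqn:testfidelityequivalence} over the product eigenbasis, which is the eigenbasis of $\sigma$, acceptance of all $\ell_2$ tests (\Cref{idenl2:lemma:equiv_proj}) implies $1-F\le\eps/2$.

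\textbf{Balancing.} The dominant test is on the residual block, where one factor sits in its last bucket. There the ratio $\bv^{\max}/\bv^{\min}$ is of order $\eta_A\eta_C\,d_Ad_C/\eps$, or the mixed analogues, giving a testing cost of roughly $\sqrt{d_Ad_C}\cdot d_Ad_C\eta/\eps^2$ with the relevant $\eta$. I would balance learning against testing. Taking $\eta_C\ge\eta_A$, the smaller system is cheap to learn accurately. A natural choice is to make $\eta_A\cdot(\text{typical }C\text{ eigenvalue }1/d_C)$ the effective product threshold. This should give $\eta_A\approx\eps d_C^{1/4}/d_A^{1/4}$ up to logs, so that $d_A^2/\eta_A\approx d_A^{9/4}d_C^{3/4}/\eps$ after accounting for the $d_C$ factor.

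I expect the main obstacle to be the mixed buckets. These are products of a well-learned large-eigenvalue bucket on one side with the unresolved tail bucket on the other. The tail of each factor has unbounded $\bv^{\max}/\bv^{\min}$, and the cross-leakage terms in the proof of \Cref{lemma:test_approximate} must be re-estimated with the product thresholds. I would first try to bound these terms factorwise using Hölder's inequality and the bound \eqref{eq:op_bound} on each factor. If the exponents do not match, I would treat $A$ and $C$ with different thresholds and re-optimize.
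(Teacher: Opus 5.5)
\textbf{Gap: the step that gives $d_A^{9/4}d_C^{3/4}/\eps$ is left open.} Your architecture is the paper's: learn $\sigma_A$ and $\sigma_C$ separately with \Cref{alg:equiv_bucketing}, test on tensor-product buckets via \Cref{idenl2:lemma:equiv_proj}, and factorize the overlap bounds. But you leave open exactly the step that yields $d_A^{9/4}d_C^{3/4}/\eps$, and the three ideas the paper uses there are missing.
(i) $\eta_C$ is not balanced but pushed to the bottom, $\eta_C=\Theta(\eps/d_C)$. So $\sigma_C$ is learned down to its depolarized minimal eigenvalue, at cost $\widetilde{O}(d_C^3/\eps)$, which is dominated since $d_A\geq d_C$. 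Every $C$-bucket, including the last, then has $\bv^{\max}/\bv^{\min}=O(1)$, and only $A$ has an unresolved tail.
(ii) The $A$-tail is absorbed by running \emph{every} $\ell_2$ test at the uniformly tightened precision $\eps\,\bv^{\max}_{ij,kl}(d_C/d_A)^{3/4}$. This is the analogue of the $1/\Delta^2$ factor in \Cref{lemma:test_approximate}. Because $\eta_A(d_C/d_A)^{3/4}=\eps/d_A$ lies below every $A$-eigenvalue, it swallows both the leakage $\|\Pi^{A,+}_x\hat\Pi^A_\alpha\|_{\opn}^2\bv^{\max}_{A,\alpha}\lesssim\eta_A$ and the unbounded spread of the last $A$-bucket.
(iii) The test projections cannot be single rectangles $\hat\Pi^{A,+}_{a}\otimes\hat\Pi^{C,+}_{c}$ plugged into the one-dimensional chain \eqref{eqn:testfidelityequivalence}. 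Take product buckets $(u,r),(v,s)$ with $u<v$ and $r>s$. The smallest rectangle containing both has top eigenvalue about $\bv_{A,u}\bv_{C,s}$. This exceeds the $\chi^2$ denominator $\bv_{A,u}\bv_{C,r}+\bv_{A,v}\bv_{C,s}$ by a factor of order $2^{\min\{r-s,v-u\}}$, and the test cost grows with that ratio. The paper instead keeps all $O((k_Ak_C)^2)$ pairs and tests on the join $\eqM_{ij,kl}$ of two rectangles. Grouping by dyadic product value would need staircase projections, which are not tensor products, so your factorized overlap bound does not cover them as stated.

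\textbf{Your balancing is also off.} With $\eta_A\approx\eps d_C^{1/4}/d_A^{1/4}$, the learning cost is $d_A^2/\eta_A\approx d_A^{9/4}d_C^{-1/4}/\eps$, not $d_A^{9/4}d_C^{3/4}/\eps$. The tail tests then overshoot. The last $A$-bucket is paired with every $C$-bucket $y$, with trace $\lesssim\min\{d_A\eta_A,1\}\min\{d_C\bv_{C,y},1\}$, while the precision must reach about $\eps\cdot(\eps/d_A)\bv_{C,y}$. So for $d_A\eta_A\leq 1$ these tests cost up to $d_A^{5/2}d_C^{3/2}\eta_A/\eps^2\approx d_A^{9/4}d_C^{7/4}/\eps$. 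Your ``effective product threshold $\eta_A/d_C$'' misses that $\min\{d_C\bv_{C,y},1\}/\bv_{C,y}$ reaches $d_C$. Balancing $d_A^2/\eta_A$ against $d_A^{5/2}d_C^{3/2}\eta_A/\eps^2$ gives the paper's $\eta_A=\Theta(\eps/(d_A^{1/4}d_C^{3/4}))$. For this choice both costs are $\widetilde{O}(\min\{(d_Ad_C)^{3/2}/\eps^2,d_A^{9/4}d_C^{3/4}/\eps\})$.

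\textbf{Preprocessing.} Use the local channel $\mathcal{D}_{\nu_A}\otimes\mathcal{D}_{\nu_C}$ of \Cref{sec:preprocessmitesting}, not the global one of \Cref{rem:min_ev}. Only the local channel maps product states to product states. That keeps the null hypothesis intact when you form the product sample from marginals, and keeps the second state's spectrum the product of marginal spectra with minima $\Omega(\eps/d_A)$ and $\Omega(\eps/d_C)$. The factorized bucketing needs exactly this.
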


The above theorem follows from \eqref{eq:mi:sample_complexity}.

Finally, our mutual information testing result follows from combining the above two lemmas.

\mitesting*

\begin{remark}
    Technically speaking, we are always using two copies of $\rho_{AC}$ to create a sample of $\rho_A\otimes\rho_C$, even though the measurements are performed on a single state of dimension $d_Ad_C$. We consider this a reasonable assumption, but we assume it is also possible to simply learn the smaller subsystem, and then simulate measurements on $\rho_A\otimes\hat\rho_C$ from measurements on $\sigma$ alone. This in particular requires to consider a robust version of \Cref{idenl2:lemma:equiv_proj} (note that the underlying \cite{pmlr-v291-seyfried25a} allows for robustness too). See also \cite{Flammia2024quantumchisquared}). %
\end{remark}

\subsection{Preprocessing Step}\label{sec:preprocessmitesting}

We preprocess our samples, $\rho_{AC}$ and $\sigma_A\otimes\sigma_C$ as follows. We then apply a channel (dependent on $\nu_A$, $\nu_C \in (0,\varepsilon/4)$ to be determined), which maps the samples $\sigma_{AC}$ to $\sigma_{AC}^{[\nu]}$ as follows:
\begin{align*}\label{eqn:mitestingchannel}
    \sigma_{AC}\mapsto \sigma_{AC}^{[\nu]}:=(1-\nu_A)(1-\nu_C)\sigma_{AC}+\underbrace{(1-\nu_A)\nu_C\sigma_A\otimes \frac{\mathds{1}_C}{d_C}+\nu_A(1-\nu_C)\frac{\mathds{1}_A}{d_A}\otimes \sigma_C + \nu_A\nu_C\frac{\mathds{1}_{AC}}{d_Ad_C}}_{=: X_{AC}}.
\end{align*}
Note in particular that this channel maps
\begin{align}
    \sigma_A\otimes\sigma_C\mapsto \left((1-\nu_A)\sigma_A+\nu_A\frac{\mathds{1}_A}{d_A}\right)\otimes\left((1-\nu_C)\sigma_C+\nu_C\frac{\mathds{1}_C}{d_C}\right),%
\end{align}
which means that when $\sigma_{AC}$ is a product state, $\sigma_{AC}^{[\nu]}$ will be a product state as well. To obtain preprocessed samples from $\sigma_A\otimes\sigma_C$, we simply take partial traces, i.e., $\tr_{C}[\sigma_{AC}] \otimes \tr_{A}[\sigma_{AC}]$. From \eqref{eqn:mitestingchannel}, 
we note that the minimal eigenvalues of the respective marginals $\sigma_A$ and $\sigma_C$ are $\nu_A/d_A$ and $\nu_C/d_C$, respectively. 

Let $\nu:=\nu_A+\nu_C-\nu_A\nu_C$. Note that $X_{AC}/\nu$ is a quantum state. Then, by \Cref{rem:min_ev}, we have
\begin{align}
     F(\sigma_{AC},\sigma_A\otimes\sigma_C)\leq 1-\eps\Rightarrow F\left((1-\nu)\sigma_{AC}+\nu\frac{X_{AC}}{\nu}, (1-\nu)\sigma_{A}\otimes\sigma_C+\nu\frac{X_{AC}}{\nu}\right)&\leq 1-\frac{\varepsilon}{2},
\end{align}
as long as, $\nu\leq \varepsilon/100$ holds.
As mentioned, we will set $\nu_A$ and $\nu_C$ later, via
\begin{align}
\label{eq:mi:nu_def}
    \nu_A:=\frac{\eps}{1000},%
    \qquad
    \nu_C:=\frac{\eps}{1000}.%
\end{align}
In particular, the minimal eigenvalues of $\sigma_A$ and $\sigma_C$ are then $\eps/d_A$ and $\eps/d_C$, respectively.

For equivalence testing, we kept the threshold $\eta$ as a variable which we resolved in the end by equating the costs of learning and testing to intuitively explain why the choice of $\eta$ is optimal. The same can be done here, but for simplicity, we let
\begin{align}
    \eta_A=O\left(\frac{\eps}{d_A^{1/4}d_C^{3/4}}\right)\quad\text{and}\quad \eta_C=O\left(\frac{\eps}{d_C}\right)
\end{align}
for suitable implicit constants. Note in particular that $\eta_A(d_C/d_A)^{3/4}=O(\eps/d_A)=O(\bv^{\min}_{A,k_A})$ and $\bv^{\min}_{C,k_C}=\Theta(\bv^{\max}_{C,k_C})$.

\subsection{Learning}
\label{sec:mi:learning}
We will first approximately learn the marginals $\sigma_A$ and $\sigma_C$ of the unknown state $\sigma_{AC}$ to precision $\eta_A$ and $\eta_C$, respectively, each using \Cref{alg:equiv_bucketing}. Thus, we will obtain sets $\{\hat\Pi_j^{A}\}$, $\{\hat\Pi_j^{C}\}$, and estimators $\{H_A(\rho_A,\hat\Pi_j^{A})\}$, $\{H_C(\rho_C,\hat\Pi_j^{C})\}$ such that following \Cref{lemma:bucketing_guarantee}, we have the following guarantees:
\begin{align}
    \forall j\in [k_A]_0:\|\hat{\Pi}_j^{A,+}(H_{A}(\sigma_A,\hat\Pi_j^{A})-\sigma_A)\hat{\Pi}_j^{A,+}\|_{\opn}\leq c\sqrt{\eta_A \bv_{A,j}^{\max}},
    \\
    \forall j\in [k_C]_0:\|\hat{\Pi}_j^{C,+}(H_{C}(\sigma_C,\hat\Pi_j^{C})-\sigma_C)\hat{\Pi}_j^{C,+}\|_{\opn}\leq c\sqrt{\eta_C \bv_{C,j}^{\max}}.
\end{align}
Following the arguments in the previous section, we note that the guarantees from \Cref{lemma:equiv:overlap} carry over here as well, such that we have
\begin{align}
\label{eq:mi:overlap_bound_op}
    \forall j< i-1:\quad \|\Pi_i^{A,+}\hat\Pi_j^A\|_{\opn}\leq \cOpMix\sqrt{\frac{\eta_A}{\bv_{A,j}^{\min}}},
    \quad
    \|\Pi_i^{C,+}\hat\Pi_j^C\|_{\opn}\leq \cOpMix\sqrt{\frac{\eta_C}{\bv_{C,j}^{\min}}},
\end{align}
where we again introduced a constant $\cOpMix\geq (1+2c\kappa)^kc$ as in \Cref{sec:equiv:testing}.
Moreover, note that by \eqref{eq:op_bound}
\begin{align}
\label{eq:mi:ev_op_bound}
    \|(\hat{\Pi}_i^{A,+}\otimes \hat{\Pi}_j^{C,+})(\sigma_A\otimes\sigma_C) (\hat{\Pi}_i^{A,+}\otimes\hat{\Pi}_j^{C,+})\|_{\opn}=\|\hat{\Pi}_i^{A,+}\sigma_A\hat{\Pi}_i^{A,+}\|_{\opn}\| \hat{\Pi}_j^{C,+}\sigma_C\hat{\Pi}_j^{C,+}\|_{\opn}\leq 4\bv_{A,i}^{\max}\bv_{C,j}^{\max}.
\end{align}

\subsection{Testing}
For simplicity, let us assume for the moment that
we know the state $\sigma_A\otimes \sigma_C$. This will help us in defining the ideal projections we would like to measure. Later in a second step, we will argue that the approximate projections $\{\hat\Pi_j^{A}\}$ and $\{\hat\Pi_j^{C}\}$ obtained from \Cref{sec:mi:learning} will approximate the ideal projections well enough.

Let $\{\ket{a_i}\}$ and $\{a_i\}$ denote the set of eigenvectors and eigenvalues of $\sigma_A$, and $\{\ket{c_i}\}$ and $\{c_i\}$ denote the set of eigenvectors and eigenvalues of $\sigma_C$, respectively. We can also bucket them analogous to \Cref{sec:id_testing}, i.e.,
\begin{align}
&0\leq i<k_A: S_i^A:=\left\{j|2^{-i-1}< a_j\leq 2^{-i}\right\},\qquad S_{k_A}^A:=\left\{j| a_j\leq 2^{-k_A}\right\}, \\
& 0\leq i<k_C: S_i^C:=\left\{j|2^{-i-1}< c_j\leq 2^{-i}\right\},\qquad S_{k_C}^C:=\left\{j| c_j\leq 2^{-k_C}\right\}.
\end{align}
Note that here the thresholds and the number of buckets will in general \emph{differ} for $A$ and $C$, that is, $\eta_A\neq \eta_C$ and $k_A\neq k_C$ (see \Cref{sec:mi:balancing}).

We can now write
\begin{align}
    \sigma_A\otimes\sigma_C=\sum_{i=1}^{d_A}\sum_{j=1}^{d_C}a_ic_j\dyad{a_ic_j}.
\end{align}
Now, let $\rho_{AC}$ be an arbitrary bipartite quantum state. Following \Cref{lemma:l2_proj_sum}, the bound on the $F(\rho_{AC},\sigma_A\otimes\sigma_C)$ may be rewritten as follows:
\begin{align}
\label{eq:mi:intermediate_bound:beginning}
    \frac{1-F(\rho_{AC},\sigma_A\otimes\sigma_C)}{2} &\leq\sum_{i,k=1}^{d_A}\sum_{j,l=1}^{d_C}\frac{|\bra{a_ic_j}\rho_{AC}\ket{a_kc_l}-\delta_{ik}\delta_{jl}a_ic_j|^2}{a_ic_j+a_kc_l}
    \\
    &= \sum_{u,v=0}^{k_A}\sum_{r,s=0}^{k_C}\sum_{i\in S^A_u,j\in S^C_r,k\in S^A_v,l\in S^C_s}\frac{|\bra{a_ic_j}\rho_{AC}\ket{a_kc_l}-\delta_{ik}\delta_{jl}a_ic_j|^2}{a_ic_j+a_kc_l}
    \\
    &\leq \sum_{u,v=0}^{k_A}\sum_{r,s=0}^{k_C}\frac{\sum_{i\in S^A_u,j\in S^C_r,k\in S^A_v,l\in S^C_s}|\bra{a_ic_j}\rho_{AC}\ket{a_kc_l}-\delta_{ik}\delta_{jl}a_ic_j|^2}{\bv_{A,u}^{\min}\bv_{C,r}^{\min}+\bv_{A,v}^{\min}\bv_{C,s}^{\min}}
    \\
    &=\sum_{u,v=0}^{k_A}\sum_{r,s=0}^{k_C}\frac{\|\Pi_u^A\otimes\Pi_r^C(\rho_{AC}-\sigma_A\otimes\sigma_C)\Pi_v^A\otimes\Pi_s^C\|_2^2}{\bv_{A,u}^{\min}\bv_{C,r}^{\min}+\bv_{A,v}^{\min}\bv_{C,s}^{\min}}.\label{eq:mi:intermediate_bound}
\end{align}
We will now bound \eqref{eq:mi:intermediate_bound} using suitable terms such that we can apply \Cref{idenl2:lemma:equiv_proj}. For this, we first define $\Pi_{ij}:=\Pi_i^A\otimes \Pi_j^C$ and $\Pi_{ij}^+:=\sum_{x=i}^{k_A}\sum_{y=j}^{k_C}\Pi_x^A\otimes \Pi_y^C$. Let
\begin{align}
    \label{eq:mi:operator_def}\Pi^+_{ij,kl}:= \mathds{1}[\textnormal{supp}(\Pi_{ij}^++\Pi_{kl}^+)]=\Pi_{ij}^++\Pi_{kl}^+-\Pi_{ij}^+\Pi_{kl}^+,
    \quad\text{and}\quad  
    \bv_{ur,vs}^{\min}=\bv_{A,u}^{\min}\bv_{C,r}^{\min}+\bv_{A,v}^{\min}\bv_{C,s}^{\min},
\end{align}
and define $\bv_{ur,vs}^{\max}$ analogously.
\begin{figure}[h!]
    \centering
    \begin{tikzpicture}[scale=0.75]%

\colorlet{gridc}{black}
\colorlet{bluefill}{blue!50}
\colorlet{redfill}{red!50}
\colorlet{greenfill}{green!65}

\def\wA{1.3}   
\def\wB{2.2}
\def\wC{1.5}
\def\hA{1.3}
\def\hB{1.7}
\def\hC{2}

\def\W{\wA+\wB+\wC}
\def\H{\hA+\hB+\hC}

\begin{scope}[shift={(0,0)}]
    \draw[thick, ->, black] (0,\W) -- (0,-0.3);
    \draw[thick, ->, black] (0,\W) -- (\W+0.3,\W);
  \node[] at (\W+0.15,\W+0.45) {$C$};
  \node[] at (-0.45,-0.15) {$A$};

  \fill[fill=orange!15] (\wA,\hC)--(\wA,\hC+\hB)--(\wA+\hB,\hC+\hB);
  \draw[thick, ->, orange] (\wA,\hC+\hB) -- (\wA+0.85,\hC+\hB-0.85);
  \draw[thick, dashed, orange] (\wA-1.25,\hC-1.25) -- (\wA+3,\hC+3);
  \fill[redfill,opacity=.75] (\wA,\hC) rectangle (\wA+\wB,0);          %
  \fill[bluefill,opacity=.75] (\wA+\wB,\hC+\hB) rectangle (\wA+\wB+\wC,\hC); %
  \fill[bluefill,opacity=.75] (\wA+\wB,\hC) rectangle (\wA+\wB+\wC,0); %
  \fill[pattern={Lines[angle=45, line width=1.25]},pattern color=redfill] (\wA+\wB,\hC) rectangle (\wA+\wB+\wC,0);

  \draw[gridc] (0,0) rectangle ({\wA+\wB+\wC},{\hA+\hB+\hC});
  \draw[gridc] (\wA,0) -- (\wA,\hA+\hB+\hC);
  \draw[gridc] (\wA+\wB,0) -- (\wA+\wB,\hA+\hB+\hC);
  \draw[gridc] (0,\hC) -- (\wA+\wB+\wC,\hC);
  \draw[gridc] (0,\hC+\hB) -- (\wA+\wB+\wC,\hC+\hB);

  \node[] at (\wA,\hA+\hB+\hC+0.45) {$l$};
  \node[] at (\wA+\wB,\hA+\hB+\hC+0.45) {$j$};
  \node[] at (-0.45,\hC+\hB) {$i$};
  \node[] at (-0.45,\hC) {$k$};

\end{scope}

\begin{scope}[shift={({\wA+\wB+\wC+2.0},0)}]
     \draw[thick, ->, black] (0,\W) -- (0,-0.3);
    \draw[thick, ->, black] (0,\W) -- (\W+0.3,\W);
  \node[] at (\W+0.15,\W+0.45) {$C$};
  \node[] at (-0.45,-0.15) {$A$};
  
  \fill[bluefill,opacity=.75] (\wA,\hC+\hB) rectangle (\wA+\wB,\hC);        %
  \fill[bluefill,opacity=.75] (\wA+\wB,\hC+\hB) rectangle (\wA+\wB+\wC,\hC);%
  \fill[bluefill,opacity=.75] (\wA,\hC) rectangle (\wA+\wB,0);              %
\fill[bluefill,opacity=.75] (\wA+\wB,\hC) rectangle (\wA+\wB+\wC,0); %
  \fill[pattern={Lines[angle=45, line width=1.25]},pattern color=redfill] (\wA+\wB,\hC) rectangle (\wA+\wB+\wC,0);     %

  \draw[gridc] (0,0) rectangle ({\wA+\wB+\wC},{\hA+\hB+\hC});
  \draw[gridc] (\wA,0) -- (\wA,\hA+\hB+\hC);
  \draw[gridc] (\wA+\wB,0) -- (\wA+\wB,\hA+\hB+\hC);
  \draw[gridc] (0,\hC) -- (\wA+\wB+\wC,\hC);
  \draw[gridc] (0,\hC+\hB) -- (\wA+\wB+\wC,\hC+\hB);

  \node[] at (\wA,\hA+\hB+\hC+0.45) {$j$};
  \node[] at (\wA+\wB,\hA+\hB+\hC+0.45) {$l$};
  \node[] at (-0.45,\hC+\hB) {$i$};
  \node[] at (-0.45,\hC) {$k$};

\end{scope}

\end{tikzpicture}
    \caption{Illustration of $\Pi_{kl}^+$ (red), and $\Pi_{ij}^+$ (blue). $\Pi_{ij,kl}^+$ corresponds to the entire colored area, the subtracted term $\Pi_{kl}^+\Pi_{ij}^+$, the overlap of $\Pi_{kl}^+$ and $\Pi_{ij}^+$, is marked in red and blue. We cannot just choose $\Pi^+_{\min\{i,k\},\min\{l,j\}}$, as this might introduce eigenvalues which are much larger than $\max\{a_kc_l+a_ic_j\}$ in some instances (orange line, note that diagonal lines correspond to same eigenvalues). %
    }
    \label{fig:placeholder}
\end{figure}

Note that the above relation holds for exact bucketing. However, as we do not know $\sigma_A \otimes \sigma_C$, we will again use an approximate bucketing $\{\hat{M}_{ij,kl}\}$ instead of $\{\Pi_{ij,kl}^+\}$. For this purpose, similar to equivalence testing, we define the following:
\begin{equation}
    \eqM_{ij}:=\hat\Pi_{\max\{0,i-2\},\max\{j-2,0\}}^+,\qquad \eqM_{ij,kl}:= \mathds{1}[\textnormal{supp}(\eqM_{ij}+\eqM_{kl})]=\eqM_{ij}+\eqM_{kl}-\eqM_{ij}\eqM_{kl}.
\end{equation}

In the same spirit as before, we aim to show that testing with our approximate buckets instead of the ideal ones is fine. The proof is in \Cref{lemma:mi_test_approximate}, and the underlying idea is equivalent to \Cref{lemma:test_approximate}, only a bit more care with the different indices is required. It is important to note that applying the testing routine from \Cref{idenl2:lemma:equiv_proj} would be costly if eigenvalues are too small. To avoid this, we apply a global depolarization channel to both $\sigma_{AC}$ and $\sigma_A\otimes\sigma_C$ of strength $\widetilde{O}(\eps)$. Letting $\zeta:=\widetilde{O}(\eps/(d_Ad_C))$, this results in $\rho_{AC,\zeta}=(1-d_Ad_C\zeta)\rho_{AC}+\zeta \mathbf{1}$ and $(\sigma_A\otimes\sigma_C)_{\zeta}=(1-d_Ad_C\zeta)\sigma_A\otimes\sigma_C+\zeta \mathbf{1}$ with minimal eigenvalues of the order $\widetilde{O}(\eps)$. We note that this doesn't change the eigenvectors of $\sigma_A\otimes\sigma_C$, therefore the bucketing for buckets above the threshold remains valid (up to constants), and all buckets below are now also of weight $\widetilde{\Omega}(\zeta)$. As in \eqref{eq:mi:intermediate_bound:beginning}, we may bound 
\begin{align}
    \frac{1-F(\rho_{AC},\sigma_A\otimes\sigma_C)}{4} 
    &\leq \frac{1-F(\rho_{AC,\zeta},(\sigma_A\otimes\sigma_C)_{\zeta})}{2} 
    \\
    &\leq 
    \sum_{u,v=0}^{k_A}\sum_{r,s=0}^{k_C}\frac{(1-d_Ad_C\zeta)^2\|\Pi_{ur}(\rho_{AC}-\sigma_A\otimes\sigma_C)\Pi_{vs}\|_2^2}{\max\{\bv_{A,u}^{\min}\bv_{C,r}^{\min}+\bv_{A,v}^{\min}\bv_{C,s}^{\min},\zeta\}}
    \\
    &\leq 2\sum_{u,v=0}^{k_A}\sum_{r,s=0}^{k_C}\frac{\|\Pi_{ur}\Pi_{ur,vs}^+(\rho_{AC}-\sigma_A\otimes\sigma_C)\Pi_{ur,vs}^+\Pi_{vs}\|_2^2}{\bv_{ur,vs}^{\min}+\zeta}
    \\
    &\leq 2\sum_{u,v=0}^{k_A}\sum_{r,s=0}^{k_C}\frac{\|\Pi_{ur,vs}^+(\rho_{AC}-\sigma_A\otimes\sigma_C)\Pi_{ur,vs}^+\|_2^2}{\bv_{ur,vs}^{\min}+\zeta},
\end{align}
where we applied the Hölder inequality in the last step. If all tests are successful, allows us to directly conclude that with high probability, $F(\rho_{AB},\sigma_A\otimes\sigma_C)\geq 1-\eps/2$, which implies $\rho_{AB}=\sigma_A\otimes\sigma_C$. Note in particular that this depolarization does not need to be physical but only allowed us to skip handling small eigenvalues. We will be interested in the case where $\rho_{AC}=\sigma_{AC}$.
For simplicity we will also drop the $\zeta$ in the following. 
\begin{lemma}
    \label{lemma:mi_test_approximate}
    Assume it holds $\forall ((x,y),(u,v))\in ([k_A]_0\times [k_C]_0)^{\times 2}$ that %
    \begin{align}
    \label{eq:mi_equiv_bucketing}
        \|\eqM_{xy,uv}(\rho-\sigma)\eqM_{xy,uv}\|_2\leq \frac{\sqrt{\varepsilon\cMtest \max\{\bv_{xy,uv}^{\max}(d_C/d_A)^{3/4},\zeta\}}}{(k_Ak_C)^4},
    \end{align}
    and $\delta=0.001/(k_Ak_C)^2$. Then it holds that with high probability,
    \begin{align}
    \label{eq:mi_equiv}
    \forall ((i,j),(k,l))\in ([k_A]_0\times [k_C]_0)^{\times 2}: 
    \quad 
    \|\Pi_{ij,kl}^+(\rho-\sigma)\Pi_{ij,kl}^+\|_2^2\leq \frac{\varepsilon \max\{\bv_{ij,kl}^{\min},\zeta\}}{2(k_Ak_C)^2}.
\end{align}
    
\end{lemma}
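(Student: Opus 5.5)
We will follow the template of \Cref{lemma:test_approximate}, working with the bipartite bucketing. Fix a pair $((i,j),(k,l))$. Write $\eqM:=\eqM_{ij,kl}$ and $\eqM^-:=\mathds{1}-\eqM$, and insert $\mathds{1}=\eqM+\eqM^-$ on both sides of $\rho-\sigma$. This gives
\begin{align}
\|\Pi_{ij,kl}^+(\rho-\sigma)\Pi_{ij,kl}^+\|_2^2\leq 4\Big(\|\eqM(\rho-\sigma)\eqM\|_2^2+2\|\Pi_{ij,kl}^+\eqM^-(\rho-\sigma)\eqM\Pi_{ij,kl}^+\|_2^2+\|\Pi_{ij,kl}^+\eqM^-(\rho-\sigma)\eqM^-\Pi_{ij,kl}^+\|_2^2\Big).
\end{align}
The first term is controlled directly by the hypothesis \eqref{eq:mi_equiv_bucketing}. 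Since $\bv^{\max}_{ij,kl}\leq 2\bv^{\min}_{ij,kl}$ for regular buckets, and $(d_C/d_A)^{3/4}\leq 1$, this term is already below the target up to the $(k_Ak_C)^{-8}$ slack.

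\textbf{Decomposing $\eqM^-$.} The projector $\eqM^-$ is the orthogonal complement of the union of the two product-type supports. It therefore decomposes into sums of products $\hat\Pi^A_x\otimes\hat\Pi^C_y$ in which either $x<\max\{i-2,0\}$ or $y<\max\{j-2,0\}$, and the same holds with $(k,l)$ in place of $(i,j)$. Hence each block of $\eqM^-$ lies, on at least one tensor factor, strictly more than one bucket above the corresponding ideal index. For such a block, the overlap bounds \eqref{eq:mi:overlap_bound_op} give
\begin{align}
\|\Pi^+_{ij,kl}(\hat\Pi^A_x\otimes\hat\Pi^C_y)\|_{\opn}\leq \cOpMix\sqrt{\eta_A/\bv^{\min}_{A,x}}\quad\text{or}\quad \|\Pi^+_{ij,kl}(\hat\Pi^A_x\otimes\hat\Pi^C_y)\|_{\opn}\leq \cOpMix\sqrt{\eta_C/\bv^{\min}_{C,y}},
\end{align}
where the unconstrained factor is simply bounded by $1$. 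Both $\Pi^+_{ij}$ and $\Pi^+_{kl}$ are products, so I will split $\Pi^+_{ij,kl}\leq\Pi^+_{ij}+\Pi^+_{kl}$ and apply the bound to each summand separately.

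\textbf{Bounding the cross terms.} For each block, I sandwich $\rho-\sigma$ between $\hat\Pi^A_x\otimes\hat\Pi^C_y$ and $\hat\Pi^A_{x'}\otimes\hat\Pi^C_{y'}$. Both blocks are contained in $\eqM_{xy,x'y'}$, since $\hat\Pi_x^A\leq\hat\Pi^{A,+}_{x-2}$, and similarly for the others. By Hölder, the block is then bounded by the tested quantity on $\eqM_{xy,x'y'}$, namely $\sqrt{\eps\cMtest\bv^{\max}_{xy,x'y'}(d_C/d_A)^{3/4}}/(k_Ak_C)^4$, multiplied by the overlap factors. When the off-bucket factor is on $A$, I get a product such as
\begin{align}
\frac{\eta_A}{\bv^{\min}_{A,x}}\cdot\bv^{\max}_{A,x}\bv^{\max}_{C,y}\Big(\frac{d_C}{d_A}\Big)^{3/4}\lesssim \eta_A\Big(\frac{d_C}{d_A}\Big)^{3/4}\bv^{\max}_{C,y}=O\Big(\frac{\eps}{d_A}\Big)\bv^{\max}_{C,y}\leq O(\bv^{\min}_{ij,kl}+\zeta).
\end{align}
Here I use the identity noted in the preprocessing, $\eta_A(d_C/d_A)^{3/4}=O(\bv^{\min}_{A,k_A})$, and the fact that the lowest ideal eigenvalue in $\Pi^+_{ij}$ is at least $\bv^{\min}_{A,k_A}\bv^{\min}_{C,j}$. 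When the off-bucket factor is on $C$, I get $\eta_C/\bv^{\min}_{C,y}\cdot\bv^{\max}_{C,y}=O(\eps/d_C)=O(\bv^{\min}_{C,k_C})$. Summing over the $O((k_Ak_C)^2)$ blocks is absorbed by the $(k_Ak_C)^{-8}$ factor. A union bound over all tests, using $\delta=0.001/(k_Ak_C)^2$, gives the high-probability statement.

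\textbf{Main obstacle.} The step I expect to be hardest is the mixed case, where on factor $A$ the block is "early" relative to $i$ but factor $C$ is late, or where $(x,y)$ is early relative to $(i,j)$ but not relative to $(k,l)$. In these cases the tested threshold $\bv^{\max}_{xy,x'y'}$ can be much larger than $\bv^{\min}_{ij,kl}$ on the $C$ factor, which is exactly the orange-line issue in the figure. My plan there is to first bound by the operator-norm overlap on the offending factor only. I would then use that $\Pi^+_{ij,kl}$ restricted to that block has $C$-eigenvalue at most $2\bv^{\max}_{C,\min\{j,l\}}$, so that the $C$-part of $\bv^{\max}$ is effectively replaced by a quantity comparable to the ideal one. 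If this fails, I would fall back on splitting $\Pi^+_{ij,kl}$ into the three disjoint pieces $\Pi^+_{ij}-\Pi^+_{ij}\Pi^+_{kl}$, $\Pi^+_{kl}-\Pi^+_{ij}\Pi^+_{kl}$, and $\Pi^+_{ij}\Pi^+_{kl}$, and treat each piece as a single product bucket.
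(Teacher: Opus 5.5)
\textbf{Gap: the per-block estimate fails when both factors are early.} Your overall architecture matches the paper's: split $\mathds{1}=\eqM_{ij,kl}+(\mathds{1}-\eqM_{ij,kl})$, expand the complement into product blocks $\hat\Pi^A_x\otimes\hat\Pi^C_y$, and bound each block by operator-norm overlaps times the tested norm on $\eqM_{xy,x'y'}$, using $\eta_A(d_C/d_A)^{3/4}=O(\eps/d_A)$. However, the key per-block estimate does not work as you set it up.

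You apply the overlap bound on only one ``off-bucket'' factor and bound the other overlap by $1$. This leaves $O(\eps/d_A)\,\bv^{\max}_{C,y}$. The step $O(\eps/d_A)\,\bv^{\max}_{C,y}\le O(\bv^{\min}_{ij,kl}+\zeta)$ requires $\bv^{\max}_{C,y}\lesssim \bv^{\min}_{C,j}$. That is false whenever $y$ is \emph{also} at least two buckets earlier than $j$.

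Concretely, take $(i,j)=(k,l)=(k_A,k_C)$ and the block $(x,y)=(0,0)$, which lies in $\mathds{1}-\eqM_{ij,kl}$. Your bound gives $\eta_A(d_C/d_A)^{3/4}\,\bv^{\max}_{C,0}=\Theta(\eps/d_A)$, times $\eps\cMtest/(k_Ak_C)^8$. The target is $\eps\max\{\bv^{\min}_{ij,kl},\zeta\}$ with $\max\{\bv^{\min}_{ij,kl},\zeta\}=\widetilde{O}(\eps/(d_Ad_C))$, so you lose a factor of order $d_C$.

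This bad case is the one where the $C$ factor is early (large $C$-eigenvalue), not late as your obstacle paragraph says. Neither of your remedies fixes it:
\begin{itemize}
\item Small eigenvalues of $\sigma_C$ on $\Pi^+_{ij,kl}$ do not shrink the Hilbert--Schmidt norm of $\hat\Pi_{xy}(\rho-\sigma)\cdots$. Only an operator-norm overlap such as $\|\Pi^{C,+}_j\hat\Pi^C_y\|_{\opn}$ can do that.
\item Splitting $\Pi^+_{ij,kl}$ into three disjoint pieces only addresses the mixing between $(i,j)$ and $(k,l)$. Your operator inequality $\Pi^+_{ij,kl}\le\Pi^+_{ij}+\Pi^+_{kl}$ already handles that.
\end{itemize}

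\textbf{Missing idea: a per-factor case distinction, as in the paper.} For a block $\hat\Pi^A_\alpha\otimes\hat\Pi^C_\beta$ and $(x,y)\in\{(i,j),(k,l)\}$, treat each tensor factor separately.
\begin{itemize}
\item If $\alpha\le x-2$, use \eqref{eq:mi:overlap_bound_op} to get $\|\Pi^{A,+}_x\hat\Pi^A_\alpha\|_{\opn}^2\,\bv^{\max}_{A,\alpha}\le 4(\cOpMix)^2\eta_A$.
\item Otherwise, bound the overlap by $1$ and use $\bv^{\max}_{A,\alpha}\le\bv^{\max}_{A,x-1}=O(\bv^{\min}_{A,x})$. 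For $x=k_A$ this holds only after multiplying by $(d_C/d_A)^{3/4}$.
\item Do the same on the $C$ factor, with $\eta_C=O(\eps/d_C)$ and $\bv^{\max}_{C,k_C}=\Theta(\bv^{\min}_{C,k_C})$.
\end{itemize}
Since $\eta_A(d_C/d_A)^{3/4}=O(\eps/d_A)\le O(\bv^{\min}_{A,x})$ and $\eta_C\le O(\bv^{\min}_{C,y})$, the product of the two factor quantities times $(d_C/d_A)^{3/4}$ is $O(\bv^{\min}_{A,x}\bv^{\min}_{C,y})\le O(\bv^{\min}_{ij,kl})$ in all four cases. In the $(0,0)$ example both overlaps are used, and you get $O(\eps^2/(d_Ad_C))$ as required.

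\textbf{Smaller issue in your first term.} For $i=k_A$ the ratio $\bv^{\max}_{A,k_A}/\bv^{\min}_{A,k_A}$ is unbounded. So $(d_C/d_A)^{3/4}\le 1$ is not enough there; you need $\eta_A(d_C/d_A)^{3/4}=O(\eps/d_A)$ as well.
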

\begin{proof}
The chosen value of $\delta$ guarantees with a union bound that all tests are reliable with probability at least $0.999$. We first note that for any $X$,
\begin{align}
    \|\Pi_{ij,kl}^+X\|_{2}\leq \|\Pi_{ij}^+X\|_{2}+\|\Pi_{kl}^+X\|_{2}+\|\Pi_{kl}^+\|_{\opn}\|\Pi_{ij}^+X\|_{2}\leq 2\left(\|\Pi_{ij}^+X\|_{2}+\|\Pi_{kl}^+X\|_{2}\right).
\end{align}
Next, define 
\begin{align}
    Z:=\{(i,j),(k,l)\}\quad\text{and}\quad K:=\{(x,y)|\textnormal{supp}(\hat\Pi_{xy})\subseteq \textnormal{supp}(1-\eqM_{ij,kl})\}.
\end{align}
We then bound:
\begin{align}
    \|\Pi_{ij,kl}^+(\rho-\sigma)\Pi_{ij,kl}^+\|_2&= \|\Pi_{ij,kl}^+(\eqM_{ij,kl}+(1-\eqM_{ij,kl}))(\rho-\sigma)(\eqM_{ij,kl}+(1-\eqM_{ij,kl}))\Pi_{ij,kl}^+\|_2\label{eq:mi:buckets_0}
    \\
    &\leq \|\eqM_{ij,kl}(\rho-\sigma)\eqM_{ij,kl}\|_2
    \\
    &\quad+2\|\Pi_{ij,kl}^+(1-\eqM_{ij,kl})(\rho-\sigma)\eqM_{ij,kl}\Pi_{ij,kl}^+\|_2
    \\
    &\quad +\|\Pi_{ij,kl}^+(1-\eqM_{ij,kl})(\rho-\sigma)(1-\eqM_{ij,kl})\Pi_{ij,kl}^+\|_2
    \\
    &\leq \|\eqM_{ij,kl}(\rho-\sigma)\eqM_{ij,kl}\|_2
    \\
    &\quad+4\sum_{(x,y),(u,v),(r,s)\in Z}\sum_{(\alpha,\beta)\in K}
    \|\Pi_{xy}^+\hat\Pi_{\alpha\beta}(\rho-\sigma)\eqM_{rs}\Pi_{uv}^+\|_2 \label{eq:mi:buckets_1}
    \\
    &\quad+2\sum_{(x,y),(u,v)\in Z}\sum_{(\alpha,\beta),(\gamma,\zeta)\in K}\|\Pi_{xy}^+\hat\Pi_{\alpha\beta}(\rho-\sigma)\hat\Pi_{\gamma\zeta}\Pi_{uv}^+\|_2 \label{eq:mi:buckets_2}.
\end{align}
Analogous to before, we aim to show that the latter terms can be bounded by $c\varepsilon \sigma_{ij,kl}^{\min}$ for a suitably small constant $c$. %
Now, by assumption on our testing,
\begin{align}
    \|\hat \Pi_{xy}(\rho-\sigma)\eqM_{uv}\|_2\leq \|\eqM_{xy,uv}(\rho-\sigma)\eqM_{xy,uv}\|_2\leq \frac{\sqrt{\varepsilon\cMtest \max\{\bv_{xy,uv}^{\max}(d_C/d_A)^{3/4},\zeta\}}}{(k_Ak_C)^4}.
\end{align}
For the following, let $(x,y),(u,v),(r,s)\in Z$ and $(\alpha,\beta)\in K$ be arbitrary. Then
\begin{enumerate}
    \item First to bound \eqref{eq:mi:buckets_1}, using that $\Pi_{xy}^+:=\Pi_x^{A,+}\otimes \Pi_y^{C,+}$,
    \begin{align}
         &\|\Pi_{xy}^+(\hat\Pi_{\alpha}^{A}\otimes \hat\Pi_{\beta}^{C})(\rho-\sigma)\eqM_{rs}\Pi_{uv}^+\|_2 
         \\
         &\leq \|\Pi_{x}^{A,+}\hat\Pi_{\alpha}^{A}\|_{\opn}\|\Pi_{y}^{C,+}\hat\Pi_{\beta}^{C}\|_{\opn}\|\hat\Pi_{\alpha\beta}(\rho-\sigma)\eqM_{rs}\|_2
         \\
         &\leq \|\Pi_{x}^{A,+}\hat\Pi_{\alpha}^{A}\|_{\opn}\|\Pi_{y}^{C,+}\hat\Pi_{\beta}^{C}\|_{\opn}\frac{\sqrt{\varepsilon\cMtest\max\{\bv_{\alpha\beta, rs}^{\max}(d_C/d_A)^{3/4},\zeta\}}}{(k_Ak_C)^4}
         \\
         &\leq \frac{\sqrt{\eps\cMtest \|\Pi_{x}^{A,+}\hat\Pi_{\alpha}^{A}\|_{\opn}^2\|\Pi_{y}^{C,+}\hat\Pi_{\beta}^{C}\|_{\opn}^2\max\{(\bv_{A,\alpha}^{\max}\bv_{C,\beta}^{\max}+\bv_{A,r}^{\max}\bv_{C,s}^{\max})(d_C/d_A)^{3/4},\zeta\}}}{(k_Ak_C)^4}.%
    \end{align}
    We have for all $y$, including the last bucket, that $\bv_{C,y}^{\max}\leq 2\bv_{C,y}^{\min}$ because of the choice for $\eta_C$.
    Since $y\in\{j,l\}$, we have $\bv_{C,y}^{\min}\leq \bv_{C,j}^{\min}+\bv_{C,l}^{\min}$. Next, we make a case distinction.
    \begin{itemize}
        \item Either $x-2\geq\alpha$, in which case we can use \eqref{eq:mi:overlap_bound_op} to bound
        \begin{align}
            \|\Pi_{x}^{A,+}\hat\Pi_{\alpha}^{A}\|_{\opn}\sqrt{\bv_{A,\alpha}^{\max}}
            &\leq \cOpMix\sqrt{\frac{\eta_A}{\bv_{A,\alpha}^{\min}}}\sqrt{\bv_{A,\alpha}^{\max}}
            \leq 2\cOpMix\sqrt{\eta_A}.%
        \end{align} 
        Here, we used that $\bv_{A,\alpha}^{\max}/\bv_{A,\alpha}^{\min}$ is bounded since $\alpha$ cannot be the last bucket. With $\eta_A(d_C/d_A)^{3/4}=\eps/d_A\leq \min\{\bv_{A,i}^{\min},\bv_{A,k}^{\min}\}$.
        \item Otherwise, if $x-2<\alpha$, we can simply bound 
        \begin{align}
            \|\Pi_{x}^{A,+}\hat\Pi_{\alpha}^{A}\|_{\opn}\sqrt{\bv_{A,\alpha}^{\max}}\leq 2\sqrt{\bv_{A,\alpha}^{\max}}.
        \end{align}
        Since $x\in\{i,k\}$ ($x=i$ if $y=j$ and $x=k$ when $y=l$), we have that $\bv_{A,\alpha}^{\max}\bv_{C,y}^{\min}(d_C/d_A)^{3/4}\leq \bv_{A,i}^{\min}\bv_{C,j}^{\min}+\bv_{A,k}^{\min}\bv_{C,l}^{\min}$.
        And for $C$ similarly. In particular. Combined, we find
        \begin{align}
            \|\Pi_{x}^{A,+}\hat\Pi_{\alpha}^{A}\|_{\opn}\|\Pi_{y}^{C,+}\hat\Pi_{\beta}^{C}\|_{\opn}\sqrt{\bv_{A,\alpha}^{\max}\bv_{C,\beta}^{\max}(d_C/d_A)^{3/4}}%
            &\leq 4\sqrt{\bv_{A,i}^{\min}\bv_{C,j}^{\min}+\bv_{A,k}^{\min}\bv_{C,l}^{\min}}.
        \end{align}
    \end{itemize}
    Thus, in both cases, we can bound
    \begin{align}
        \|\Pi_{x}^{A,+}\hat\Pi_{\alpha}^{A}\|_{\opn}\|\Pi_{y}^{C,+}\hat\Pi_{\beta}^{C}\|_{\opn}\sqrt{\bv_{A,\alpha}^{\max}\bv_{C,\beta}^{\max}(d_C/d_A)^{3/4}}&\leq 4\sqrt{\bv_{A,i}^{\min}\bv_{C,j}^{\min}+\bv_{A,k}^{\min}\bv_{C,l}^{\min}}.
    \end{align}
    With $(r,s)\in\{(i,j),(k,l)\}$ we also directly have that $\bv_{A,r}^{\max}\bv_{C,s}^{\max}(d_C/d_A)^{3/4}\leq \bv_{A,i}^{\min}\bv_{C,j}^{\min}+\bv_{A,k}^{\min}\bv_{C,l}^{\min}$. Thus, we have that
    \begin{align}
        \|\Pi_{xy}^+\hat\Pi_{\alpha\beta}(\rho-\sigma)\eqM_{rs}\Pi_{uv}^+\|_2%
        &\leq 
        4\frac{\sqrt{\eps\cMtest\max\{\bv_{A,i}^{\min}\bv_{C,j}^{\min}+\bv_{A,k}^{\min}\bv_{C,l}^{\min},\zeta\}}}{(k_Ak_C)^4}%
    \end{align}
    \item For bounding the term in \eqref{eq:mi:buckets_2}, by a similar argument,
    \begin{align}
        &\|\Pi_{xy}^+\hat\Pi_{\alpha\beta}(\rho-\sigma)\hat\Pi_{\gamma\zeta}\Pi_{uv}^+\|_2
        \\
        &\leq \|\Pi_{x}^{A,+}\hat\Pi_{\alpha}^{A}\|_{\opn}\|\Pi_{y}^{C,+}\hat\Pi_{\beta}^{C}\|_{\opn}\|\hat\Pi_{\alpha\beta}(\rho-\sigma)\hat\Pi_{\gamma\zeta}\|_2\|\Pi_{u}^{A,+}\hat\Pi_{\gamma}^{A}\|_{\opn}\|\Pi_{v}^{C,+}\hat\Pi_{\zeta}^{C}\|_{\opn}
        \\
        &\leq \|\Pi_{x}^{A,+}\hat\Pi_{\alpha}^{A}\|_{\opn}\|\Pi_{y}^{C,+}\hat\Pi_{\beta}^{C}\|_{\opn}\frac{\sqrt{\varepsilon\cMtest\max\{\bv_{\alpha\beta,\gamma\zeta}^{\max}(d_C/d_A)^{3/4},\zeta\}}}{(k_Ak_C)^4}\|\Pi_{u}^{A,+}\hat\Pi_{\gamma}^{A}\|_{\opn}\|\Pi_{v}^{C,+}\hat\Pi_{\zeta}^{C}\|_{\opn}
        \\
        &%
        \leq 16\frac{\sqrt{\varepsilon\cMtest\max\{\bv_{ij,kl}^{\min},\zeta\}}}{(k_Ak_C)^4}.
    \end{align}
\end{enumerate}
Inserted into \eqref{eq:mi:buckets_0}, we find
\begin{align}
    \|\Pi_{ij,kl}^+(\rho-\sigma)\Pi_{ij,kl}^+\|_2&\leq \|\hat M_{ij,kl}^+(\rho-\sigma)\hat M_{ij,kl}^+\|_2+400c(k_Ak_C)^2\frac{\sqrt{\varepsilon\cMtest\max\{\bv_{ij,kl}^{\min},\zeta\}}}{(k_Ak_C)^4}
    \\
    &\leq O\left(\sqrt{\frac{\varepsilon \max\{\bv_{ij,kl}^{\min},\zeta\}}{2(k_Ak_C)^2}}\right).
\end{align}

\end{proof}

\subsection{The Costs of Learning and Testing}
\label{sec:mi:balancing}

We bound the cost of testing first, using \Cref{idenl2:lemma:equiv_proj}. We will later see we can ignore the second case, '$\sqrt{d_{\Pi}/\eps'}$', $\eps':=\eps \bv_{ij,kl}^{\max} (d_C/d_A)^{3/4}$ in the sample complexity of \Cref{idenl2:lemma:equiv_proj}, as it will never be dominating, which we do here directly for readability. First, note that we have
\begin{align}
    \tr[\eqM_{ij,kl}\sigma_A\otimes\sigma_C]&\leq 2\max_{(x,y)\in\{(i,j),(k,l)\}}\tr[\hat\Pi_{\max\{x-2,0\}}^{A,+}\sigma_A]\tr[\hat\Pi_{\max\{y-2,0\}}^{C,+}\sigma_C]
    \\
    &\leq 2\max_{(x,y)\in\{(i,j),(k,l)\}}\min\left\{d_A\|\hat\Pi_{\max\{x-2,0\}}^{A,+}\sigma_A\hat\Pi_{\max\{x-2,0\}}^{A,+}\|_{\opn},1\right\}
    \\
    &\qquad\qquad\cdot \min\left\{d_C\|\hat\Pi_{\max\{y-2,0\}}^{C,+}\sigma_C\hat\Pi_{\max\{y-2,0\}}^{C,+}\|_{\opn},1\right\}
    \\
    &\leq 16\max_{(x,y)\in\{(i,j),(k,l)\}}\min\left\{d_A\bv_{A,x}^{\max},1\right\}\min\left\{d_C\bv_{C,y}^{\max},1\right\}.
\end{align}
We can use this to bound the cost of testing in \Cref{idenl2:lemma:equiv_proj} by
\begin{align}
    N_{\textnormal{test}}^{ij,kl}&\leq \cHSquantum\frac{\sqrt{d_Ad_C}\tr[\eqM_{ij,kl}\sigma_A\otimes\sigma_C]}{\varepsilon\bv_{ij,kl}^{\max}(d_C/d_A)^{3/4}}
    \\
    &\leq 16\cHSquantum\frac{\sqrt{d_Ad_C}}{\varepsilon}\max_{(x,y)\in\{(i,j),(k,l)\}}\frac{\min\{d_A\bv_{A,x}^{\max},1\}\min\{d_C\bv_{C,y}^{\max},1\}}{\bv_{A,x}^{\max}\bv_{C,y}^{\max}(d_C/d_A)^{3/4}}
    \\
    &\leq 16\cHSquantum\frac{\sqrt{d_Ad_C}}{\varepsilon}\max_{(x,y)\in\{(i,j),(k,l)\}}\frac{d_C\min\{d_A\bv_{A,x}^{\max},1\}}{\bv_{A,x}^{\max}(d_C/d_A)^{3/4}}
    \\
    &\leq 16\cHSquantum\frac{d_A^{5/4}d_C^{3/4}}{\varepsilon}\max_{(x,y)\in\{(i,j),(k,l)\}}\min\{d_A,1/\eta_A\}\leq O\left(\min\left\{\frac{(d_Ad_C)^{3/2}}{\eps^2},\frac{d_A^{9/4}d_C^{3/4}}{\eps}\right\}\right).\label{eq:mi_sc_raw}
\end{align}

The cost of learning is bounded analogously as for equivalence testing. With $x_{A,l}:=O(\min\{d_A\bv_{A,l}^{\max},1\})$,
\begin{equation}
    \bv_{A,l}^{\max}\leq \cOpLearn\sqrt{\frac{x_{A,l}d_A}{N_{\text{learn}}^{A, (l)}}}\quad\Leftrightarrow\quad N_{\text{learn}}^{A, (l)}=(\cOpLearn)^2\frac{x_{A,l}d_A}{\eta_A^2},
\end{equation}
it is clear that the last bucket we learn (with precision $\eta_A$) is most costly, with
\begin{align}
    N_{\text{learn}}^{A,(k_A-1)}=(\cOpLearn)^2\frac{x_Ad_A}{\eta_A^2}&=(\cOpLearn)^2d_A\min\{d_A/\eta_A,1/\eta_A^2\}
    \\
    &\leq \widetilde{O}\left(\min\left\{\frac{(d_Ad_C)^{3/2}}{\varepsilon^2},\frac{d_A^{9/4}d_C^{3/4}}{\varepsilon}\right\}\right).
\end{align}
In particular, this always dominates the cost of learning $\sigma_C$, which is in $\widetilde{O}(d_C^3/\eps)$.
The costs between the learning and testing tasks are therefore balanced, up to logarithmic factors, and we found an overall sample complexity of
\begin{equation}
    \label{eq:mi:sample_complexity}N=\widetilde{O}\left(\min\left\{\frac{(d_Ad_C)^{3/2}}{\varepsilon^2},\frac{d_A^{9/4}d_C^{3/4}}{\varepsilon}\right\}\right).
\end{equation}

\section{Nonadaptive Lower Bounds for Equivalence Testing in Qubits}
\label{sec:lower_equiv_fidelity_qubits}
In this section, we provide lower bounds for testing whether two qubit states are the same or are at least gapped away by $\eps$ in fidelity.

\lbnonadaptive*

The general setup follows \cite{bubeck_entanglement_2020} and is stated below:
\begin{enumerate}
    \item A measurement schedule $M$ defines a series of $N$ measurements, without loss of generality we can assume them to be rank one states. Thus, this is a list of $N$ measurements $M_i$. The set of all possible measurement schedules is denoted by $\mathcal{M}$.
    \item For a given measurement schedule $M$, a result $R(M)$ is a list of length $N$, containing possible measurement outcomes, i.e., $R_i\in M_i$. The set of possible results for a given $M$ is $\mathcal{R}(M)$.
    \item A property $\mathcal{S}_x$ is a subset of all states.
    \item The distribution over results for a property $\mathcal{S}_x$ is given by
    \begin{align}
        \left(p_x^N(M)\right)(R)&=\E_{\rho\in \mathcal{S}_x}\left[\Pr[\text{$M$ measured on $\rho$ results in $R$}]\right]
        \\
        &=\E_{\rho\in \mathcal{S}_x}\left[\prod_{R_i\in R}\Pr[\text{$M_i$ measured on $\rho$ results in $R_i$}]\right]
    \end{align}
\end{enumerate}
For readability, we abbreviate $\Pr[R_i|\rho]:=\Pr[\text{$M_i$ measured on $\rho$ results in $R_i$}]$.
\begin{lemma}[{\cite[Lemma 2.8]{bubeck_entanglement_2020}}]
\label{lemma:dist_lower_bound}
    Let $\mathcal{M}$ be a set of measurement schedules, and let $\mathcal{S}_0$ and $\mathcal{S}_1$ be two properties. If
    \begin{align}
        \forall M\in \mathcal{M}:\|p_0^N(M)-p_1^N(M)\|_1\leq 1/3,
    \end{align}
    then it is not possible to reliably distinguish property $\mathcal{S}_0$ from $\mathcal{S}_1$ using $N$ samples.
\end{lemma}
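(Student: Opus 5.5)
The plan is a standard Le Cam two-point (two-prior) argument, applied separately to each fixed measurement schedule and then averaged over the algorithm's internal randomness. First I will fix what ``reliably distinguish'' means. A non-adaptive tester with $N$ samples first draws a schedule $M\in\mathcal{M}$ from some distribution $\mu$, independently of the unknown state. It then measures to obtain a result $R\in\mathcal{R}(M)$ and outputs $f(M,R,\xi)\in\{0,1\}$, where $\xi$ is further internal randomness. Being reliable means that for every $\rho\in\mathcal{S}_0$ the output is $0$ with probability at least $2/3$, and for every $\rho\in\mathcal{S}_1$ the output is $1$ with probability at least $2/3$.

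The first step passes from the worst case to the average case. Since the guarantee holds for every state in $\mathcal{S}_x$, it also holds when $\rho$ is drawn from the distribution over $\mathcal{S}_x$ used to define $p_x^N(M)$. Because $M$ and $\xi$ are independent of $\rho$, Fubini shows that, conditioned on $M$, the law of $R$ under the random $\rho$ is exactly $p_x^N(M)$. Hence
\begin{align}
\tfrac{2}{3}\le \E_{M\sim\mu}\Big[\Pr_{R\sim p_x^N(M),\,\xi}[f(M,R,\xi)=x]\Big]\qquad\text{for } x\in\{0,1\}.
\end{align}

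The second step bounds the success for a fixed $M$. For fixed $M$ and fixed $\xi$, write $A:=\{R: f(M,R,\xi)=0\}$. Then
\begin{align}
p_0^N(M)(A)+p_1^N(M)(A^c)=1+p_0^N(M)(A)-p_1^N(M)(A)\le 1+\tfrac12\|p_0^N(M)-p_1^N(M)\|_1\le 1+\tfrac16,
\end{align}
using the hypothesis. Averaging over $\xi$ preserves this bound, since the sum is linear in the decision rule.

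The third step averages and derives a contradiction. Summing the two average-case inequalities from the first step gives $4/3\le \E_M[\,\cdots\,]\le 7/6$, which is impossible, so no such tester exists. There is no real obstacle in this argument. The only points needing care are that the schedule is chosen independently of $\rho$, which is exactly where non-adaptivity is used, and the factor $\tfrac12$ relating $\ell_1$ to total variation. Both are handled above.
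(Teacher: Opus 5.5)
Your proposal is correct. It is the standard Le Cam two-point argument behind the cited Lemma~2.8 of \cite{bubeck_entanglement_2020}; the paper itself gives no proof beyond that citation. Averaging the worst-case guarantee over the priors defining $p_x^N(M)$ (with the schedule drawn independently of the state, which is where non-adaptivity enters) and using $p_0^N(M)(A)+p_1^N(M)(A^c)\le 1+\tfrac12\|p_0^N(M)-p_1^N(M)\|_1\le 7/6<4/3$ for every (randomized) decision rule yields exactly the required contradiction.
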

In our case, this setting is slightly modified: we consider \emph{tuples} of states. The properties we consider contain elements of the form $\{(\rho_{\text{I}},\rho_{\text{II}})\}$. For $\mathcal{S}_0$, $\rho_{\text{I}}=\rho_{\text{II}}$, and for $\mathcal{S}_1$, we have $F(\rho_{\text{I}},\rho_{\text{II}})\leq 1-\eps$. Specifically, for a unitary $U$, we define
\begin{align}
    \rho_{0,U}&:=U\textnormal{diag}(1-\eps,\eps)U^{\dagger}=U\dyad{0}U^{\dagger}+\rho_{\eps,U},
    \\
    \rho_{1,U}&:=U\textnormal{diag}(1-2\eps,2\eps)U^{\dagger}=U\dyad{0}U^{\dagger}+2\rho_{\eps,U},
\end{align}
for $\rho_{\eps,U}=U\textnormal{diag}(-\eps,\eps)U^{\dagger}$. The distance in fidelity is simply 
\begin{align}
    F(\rho_{0,U},\rho_{1,U})&=\sqrt{(1-\eps)(1-2\eps)}+\sqrt{\eps(2\eps)}
    \\
    &=\sqrt{1-3\eps+2\eps^2}+\sqrt{2\eps^2}=1-\left(\frac{3}{2}-\sqrt{2}\right)\eps +O(\eps^2)=1-\Omega(\eps).
\end{align}
Thus, $1-F(\rho_{0,U},\rho_{1,U})\geq \Omega(\eps)$ and a rescaling satisfies the setting in \Cref{theo:lbnonadaptive}. To get an instance from $\mathcal{S}_0$, we sample a Haar-random unitary $U$ and then return samples of $(\rho_{0,U},\rho_{0,U})$, and $(\rho_{0,U},\rho_{1,U})$ for $\mathcal{S}_1$.

Each step of the measurement schedule now consists of two measurements. These are independent, such that 
\begin{align}
    \Pr[R_i|(\rho_{\text{I}},\rho_{\text{II}})]=\Pr[R_{i,\text{I}}|\rho_{\text{I}}]\Pr[R_{i,\text{II}}|\rho_{\text{II}}].    
\end{align}
Note that due to the non-adaptive setting, we can naturally see the measurement schedule $\mathcal{A}$ as a tuple as well, $\mathcal{A}=(\mathcal{A}_{\text{I}},\mathcal{A}_{\text{II}})$. Together, we can bound
\begin{align}
    \|p_0^N(M)-p_1^N(M)\|_1&=\sum_{R\in \mathcal{R}(\mathcal{A})}\left|\int_{\mathcal{S}_0}\prod_{i}\Pr[R_i(\rho_{\text{I}},\rho_{\text{II}})]d\rho-\int_{\mathcal{S}_1}\prod_{i}\Pr[R_i(\rho_{\text{I}},\rho_{\text{II}})]d\rho\right|
    \\
    &=\sum_{R\in \mathcal{R}(\mathcal{A})}\left|\int_{U}\prod_{i}\Pr[R_i(\rho_{0,U},\rho_{0,U})]d\rho-\prod_{i}\Pr[R_i(\rho_{0,U},\rho_{1,U})]d\rho\right|
    \\
    &\leq \sum_{R\in \mathcal{R}(\mathcal{A})}\int_{U}\left|\prod_{i}\Pr[R_i(\rho_{0,U},\rho_{0,U})]-\prod_{i}\Pr[R_i(\rho_{0,U},\rho_{1,U})]\right|
    \\
    &= \sum_{R_{\text{I}}\in \mathcal{R}(\mathcal{A}_{\text{I}})}\sum_{R_{\text{II}}\in \mathcal{R}(\mathcal{A}_{\text{II}})}\int_{U}\prod_{i}\Pr[R_{i,\text{I}}(\rho_{0,U})]\left|\prod_{i}\Pr[R_{i,\text{II}}(\rho_{0,U})]-\prod_{i}\Pr[R_{i,\text{II}}(\rho_{1,U})]\right|
    \\
    &= \sum_{R_{\text{II}}\in \mathcal{R}(\mathcal{A}_{\text{II}})}\int_{U}\left|\prod_{i}\Pr[R_{i,\text{II}}(\rho_{0,U})]-\prod_{i}\Pr[R_{i,\text{II}}(\rho_{1,U})]\right|.\label{eq:lower_bound_tvd_in_R}
\end{align}

Without loss of generality, we can assume that the measurement consists of weighted rank one projections. Note that we may write the state $\rho$ as $U^{\dagger}((1-\eps)\dyad{0}+\eps\dyad{1})U$. Given a POVM $M_i$, then, for each element $c_{i,j}\dyad{M_i,j}$, we can define the $\emph{bias}$
\begin{align}
    y_{M_i,j,U}:=|\bra{M_i,j}U\ket{0}|^2.%
\end{align}
Note that $y_{M_i,j,\rho}\in [0,1]$. %
In the following, for notational convenience, we will write for a set of measurements $\{M_i\}$, 
\begin{align}
    y_{i,j,U}:=y_{M_i,j,U},\qquad \eps_{i,j,U}=\bra{M_i,j}\rho_{\eps,U}\ket{M_i,j}.
\end{align}
The basic intuition for why the lower bound is $\widetilde{\Omega}(1/\eps^2)$ as opposed to the $O(1/\eps)$ in distribution testing is that using a non-adaptive schedule implies that with high probability, only a small fraction of the measurements is very well aligned with the actual basis: if only a $O(\eps)$ fraction of measurement is very strongly biased, which is very useful to detect the `noise' distinguishing $\rho_{0,U}$ from $\rho_{1,U}$, then we need $O((1/\eps)\cdot 1/\eps)=\widetilde{\Omega}(1/\eps^2)$ samples in order to distinguish. We will split the measurements into buckets, depending on their bias, and then bound the terms individually. We use the following fact:
\begin{align}
    \text{Let $\ket{\psi}$ be Haar random. Then, for $x\in[0,1]$ $\Pr[|\braket{0}{\psi}|^2\leq x]=x$}
\end{align}
Then we have the following lemma.
\begin{lemma}
\label{lemma:lower_nonadaptive_buckets}
    Suppose we are given an unknown, Haar random unitary $U$. Fix any measurement schedule consisting of $N$ measurements. Let $S_{0,U}:=\{(i,j)|  y_{i,j,U}+\eps_{i,j,U}\leq 2\eps\}$ and, for $k>0$, $S_{k,U}:=\{(i,j)| 2^{k}\eps\leq y_{i,j,U}+\eps_{i,j,U}\leq 2^{k+1}\eps\}$. Then, with probability $1-p_a$, it holds for all $k\in[1,2,...,\log(1/\eps)]$ that
    \begin{align}
        \sum_{(i,j)\in S_{k,U}}c_{i,j}\leq 2N\min\{2^{k+1}\eps,1-2^k\eps\}\log(1/\eps)/p_a,
    \end{align}
    and $\sum_{(i,j)\in S_{0,U}}c_{i,j}\leq 4N\eps\log(1/\eps)/p_a$.
\end{lemma}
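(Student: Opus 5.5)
The plan is to bound the \emph{expected} weight in each bucket and then apply Markov's inequality with a union bound over the $O(\log(1/\eps))$ buckets. First I would simplify the quantity that defines the buckets. Since $\rho_{\eps,U}=U\textnormal{diag}(-\eps,\eps)U^{\dagger}$, we have
\begin{align}
\eps_{i,j,U}=-\eps y_{i,j,U}+\eps(1-y_{i,j,U})=\eps(1-2y_{i,j,U}).
\end{align}
Hence $y_{i,j,U}+\eps_{i,j,U}=(1-2\eps)y_{i,j,U}+\eps$, which is an affine, increasing function of the bias $y_{i,j,U}$. Each bucket condition $(i,j)\in S_{k,U}$ is therefore equivalent to $y_{i,j,U}$ lying in an interval $I_k$. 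Up to constants, this interval is $[(2^k-1)\eps,(2^{k+1}-1)\eps]/(1-2\eps)$, intersected with $[0,1]$; for $k=0$ it is $[0,\eps/(1-2\eps)]$.

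Next, fix a measurement element $\dyad{M_i,j}$. Because $U$ is Haar random, $U\ket{0}$ is a Haar-random pure qubit state, so $y_{i,j,U}=|\braket{M_i,j}{U0}|^2$ is uniform on $[0,1]$ by the stated fact. It follows that $\Pr[(i,j)\in S_{k,U}]=|I_k\cap[0,1]|$. This length is at most $O(2^k\eps)$. Since $y\le 1$, it is also bounded by the length of the part of $[0,1]$ above the lower endpoint, i.e.\ at most about $1-(2^k-1)\eps/(1-2\eps)$. Up to constant adjustments, absorbed by the factor $2$, this gives the $\min\{2^{k+1}\eps,1-2^k\eps\}$ form.

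Then I use linearity of expectation together with the POVM normalization. For each $i$, $\sum_j c_{i,j}\dyad{M_i,j}=\mathbf{1}_2$, so $\sum_j c_{i,j}=2$, and $\sum_{i,j}c_{i,j}=2N$. Therefore
\begin{align}
\E_U\Big[\sum_{(i,j)\in S_{k,U}}c_{i,j}\Big]\le 2N\min\{2^{k+1}\eps,1-2^k\eps\},
\end{align}
and for $k=0$ the same argument gives at most $2N\cdot 2\eps=4N\eps$. Since these sums are nonnegative, Markov's inequality with failure probability $p_a/\log(1/\eps)$ per bucket yields the stated bounds, including the factor $\log(1/\eps)/p_a$. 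A union bound over the $\log(1/\eps)+1$ buckets then gives total failure probability $O(p_a)$; this can be rescaled by a constant, or one can absorb the extra bucket into the logarithm.

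The main obstacle is purely bookkeeping: checking that the interval lengths really fit under $\min\{2^{k+1}\eps,1-2^k\eps\}$ at the edge cases. These are the top bucket, where $2^{k+1}\eps>1$, and the slight shift caused by the factor $(1-2\eps)$. If the constants do not match exactly, I would use the slack $2$ versus $1$ in the prefactor, or the freedom in the $\log(1/\eps)$ count, to absorb them.
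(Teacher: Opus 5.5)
Your proposal is correct and follows the paper's proof. Both bound the expected bucket weight using that $y_{i,j,U}$ is uniform on $[0,1]$ for a Haar-random qubit state together with $\sum_j c_{i,j}=2$, then apply Markov's inequality with failure probability $p_a/\log(1/\eps)$ per bucket and a union bound. Your explicit rewriting $y_{i,j,U}+\eps_{i,j,U}=(1-2\eps)y_{i,j,U}+\eps$ is more careful than the paper, which states the bucket condition directly in terms of $y_{i,j,U}$. The resulting $(1+O(\eps))$ edge-case losses are absorbed by the slack in the $2^{k+1}\eps$ term and in the union bound.
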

\begin{proof}

Fix a $k>0$. Then 
\begin{align}
    \E_U\left[\sum_{(i,j)\in S_{k,U}}c_{i,j}\right]&= \sum_{i=1}^N\sum_{j\in M_i}c_{i,j}\E_U[1[2^{k}\eps \leq y_{i,j,U}\leq 2^{k+1}\eps]]
    \\
    &=\sum_{i=1}^N\sum_{j\in M_i}c_{i,j}\Pr[2^{k}\eps \leq y_{i,j,U}\leq 2^{k+1}\eps].
\end{align}
Since $U$ is Haar random, the precise choice of $\ket{M_i,j}$ is irrelevant. Instead, we can directly inspect an arbitrary Haar random state $\ket{\phi}$, such that 
\begin{align}
    \E_U\left[\sum_{(i,j)\in S_{k,U}}c_{i,j}\right]&=\sum_{i=1}^N\sum_{j\in M_i}c_{i,j}\Pr[2^{k}\eps \leq |\bra{\phi}\ket{0}|^2\leq 2^{k+1}\eps].
\end{align}
Now, if $2^{k}\eps\leq 1/2$, then 
\begin{align}
    &\Pr[2^{k}\eps\leq |\bra{\phi}\ket{0}|^2\leq 2^{k+1}\eps]\leq 
    \begin{cases}
        \Pr[ |\bra{\phi}\ket{0}|^2\leq 2^{k+1}\eps]= 2^{k+1}\eps,
        \\
        \Pr[1-|\bra{\phi}\ket{0}|^2\leq 1-2^{k}\eps]=1-2^{k}\eps.
    \end{cases}
\end{align}
Thus,
\begin{align}
    \E_U\left[\sum_{(i,j)\in S_{k,U}}c_{i,j}\right]\leq \sum_{i=1}^N\sum_{j\in M_i}c_{i,j}\min\{2^{k+1}\eps,1-2^k\eps\}\leq 2N\min\{2^{k+1}\eps,1-2^k\eps\}.
\end{align}
Now use Markov 
\begin{align}
    \Pr\left[\sum_{(i,j)\in S_{k,U}}c_{i,j}\geq 2N\min\{2^{k+1}\eps,1-2^k\eps\}\log(1/\eps)/p_a\right]\leq \frac{p_a}{\log(1/\eps)}.
\end{align}
A union bound over all $\log(1/\eps)$ sets then gives the desired result, the statement for $k=0$ follows analogously.
\end{proof}

Throughout, we assume that 
\begin{align}
\label{eq:n_lower_qubit_bound}
    N\leq \frac{c_Np_a}{\eps^2\log^4(1/\eps)},
\end{align}
for $c_N$ a small constant. Since we aim to show an asymptotic statement, we may also assume that $\eps\leq \eps_0$ for some $\eps_0$ of our choice.

Thus, we can write \eqref{eq:lower_bound_tvd_in_R}
\begin{align}
\label{eq:lower_rewrite_tv}
    \|p_0^N(M)-p_1^N(M)\|_1&\leq\sum_{R_{\text{II}}\in \mathcal{R}(\mathcal{A}_{\text{II}})}\int_{U}\left|\prod_{i=1}^N\Pr[R_{i,\text{II}}(\rho_{0,U})]-\prod_{i=1}^N\Pr[R_{i,\text{II}}(\rho_{1,U})]\right|
    \\
    &=\sum_{R_{\text{II}}\in \mathcal{R}(\mathcal{A}_{\text{II}})}\int_U\underbrace{\prod_{i=1}^N\Pr[R_{i,\text{II}}(\rho_{0,U})]}_{\Pr(R_{\text{II}},U):=}\underbrace{\left|\prod_{i=1}^N\frac{\Pr[R_{i,\text{II}}(\rho_{1,U})]}{\Pr[R_{i,\text{II}}(\rho_{0,U})]}-1\right|}_{f(R_{\text{II}},U):=}.
\end{align}
For readability, we will write $T:=R_{\text{II}}$, and $\mathcal{T}(U)$ the set of all possible outcomes $T$ for which $(U,T)$ is typical. The basic idea will now be to perform a case distinction. We denote all pairs $(U,T)$ for which  $\bra{M_i,j}\rho_0\ket{M_i,j}=y_{i,j,U}+\eps_{i,j,U}\geq\eps\log(1/\eps)$ and the bucketing of \Cref{lemma:lower_nonadaptive_buckets} holds as \emph{typical}. Then
\begin{align}
\label{eq:f_T_bound}
    \sum_{T\in \mathcal{R}(\mathcal{A}_{\text{II}})}f(T,U)\Pr[T,U]&\leq \sum_{\substack{T\in \mathcal{R}(\mathcal{A}_{\text{II}})\\ (T,U)~\text{typical}}}f(T,U)\Pr[T,U]+\sum_{\substack{T\in \mathcal{R}(\mathcal{A}_{\text{II}})\\ (T,U)~\text{not typical}}}f(T,U)\Pr[T,U]
\end{align}
To do this, we rewrite
\begin{align}
f(R_{\text{II}},U)=\left|\prod_{i=1}^N\frac{\Pr[R_{i,\text{II}}(\rho_{1,U})]}{\Pr[R_{i,\text{II}}(\rho_{0,U})]}-1\right|&=\left|\exp\left(\sum_{i=1}^N\log\left(\frac{\Pr[R_{i,\text{II}}(\rho_{1,U})]}{\Pr[R_{i,\text{II}}(\rho_{0,U})]}\right)\right)-1\right|,
\\
&=\left|\exp\left(\sum_{i=1}^N\log\left(1+\frac{c_{i,j}\eps_{i,j,U}}{\Pr[R_{i,\text{II}}(\rho_{0,U})]}\right)\right)-1\right|.
\end{align}
each term
\begin{align}
    X_i:=\log\left(1+\frac{\eps_{i,j,U}}{y_{i,j,U}+\eps_{i,j,U}}\right)
\end{align}
is a random variable, and we will either apply the Bernstein inequality, (\Cref{lemma:bernstein} below) over $\sum_{i\in S_k}X_i$, or argue that we don't see too heavily biased elements (see \eqref{eq:no_heavy_biased}). %
\begin{align}
\label{eq:lower_f}
f(T,U)&=%
    \Bigg|\underbrace{\exp\left(\E_{\mathcal{T}(U)}\left[\sum_{i=1}^N\log\left(1+\frac{\eps_{i,j,U}}{y_{i,j,U}+\eps_{i,j,U}}\right)\right]\right)}_{g_{\textnormal{typical}}^U:=}\prod_{i=1}^N\underbrace{\exp(t_i(T,U))}_{g_{\textnormal{bias~}i}^U}-1\Bigg|,
\end{align}
where $t_i$ is simply defined as the difference between the expectation value and the actual value. It remains to show that $f(T,U)=O(1)$ for an arbitrary small constant, dependent on the implicit small constant in $N$. 

In the following, we assume a fixed $U$ and always implicitly condition on $T_i$ being typical. We denote the set of such $T_i$'s as $\mathcal{T}_i(U)$, and by $\mathcal{T}(U)$ the set of all typical transcripts, i.e., the product of the $\mathcal{T}_i(U)$. Note that $\Pr[(T,U)\text{~typical}]=1-o(1)$: For the observed transcript $T'$ and $T\in \mathcal{T}(U)$, we have
\begin{align}
    \Pr[T'=T(\rho_{0,U})|(T',U)\text{~typical}]&=\frac{\Pr[T'=T(\rho_{0,U}),(T',U)\text{~typical}]}{\Pr[(T',U)\text{~typical}]}
    \\
    &=\frac{\Pr[T(\rho_{0,U})]}{1-o(1)}=\Pr[T(\rho_{0,U})](1+o(1)).
\end{align}
Further, for a typical $T$, the different $T_i$ remain independent under conditioning. %
For convenience, we drop the $U$ in the following, and neglect the conditioning in the following, except for normalization issues: note in particular that $\sum_{T_i\in \mathcal{T}_i}\Pr[T_i(\rho_0,U)]=1-o(1)\neq 1$, in general.

\begin{lemma}
\label{lemma:helper_lower}
    For typical $(U,T)$, it holds that
    \begin{align}
        \sum_{i=1}^N\sum_{T_i\in \mathcal{T}_i}\frac{(c_{i,j}\eps_{i,j,U})^2}{\Pr[T_i(\rho_{0,U})]}\leq 4\eps^2\log(1/\eps)^2N/p_a.
    \end{align}
    Further, the probability that there is \emph{any} occurrence of a light element can be bounded by $O(\eps^2\log(1/\eps)^3N/p_a)$.
\end{lemma}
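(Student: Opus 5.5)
The plan is to compute everything per POVM element, using the bucket mass bounds of \Cref{lemma:lower_nonadaptive_buckets}; these hold because typicality of $(U,T)$ includes that the bucketing event holds. First we make the per-outcome quantities explicit. For the $i$-th measurement and its element $c_{i,j}\dyad{M_i,j}$, the outcome probability under $\rho_{0,U}$ is $\Pr[T_i=j\,|\,\rho_{0,U}]=c_{i,j}(y_{i,j,U}+\eps_{i,j,U})$. Since $\rho_{\eps,U}=U\textnormal{diag}(-\eps,\eps)U^{\dagger}$, we have $|\bra{M_i,j}U\ket{1}|^2=1-y_{i,j,U}$. Hence $\eps_{i,j,U}=\eps(1-2y_{i,j,U})$, and in particular $|\eps_{i,j,U}|\leq\eps$. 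Each summand in the first claim therefore simplifies to
\begin{align}
\frac{(c_{i,j}\eps_{i,j,U})^2}{c_{i,j}(y_{i,j,U}+\eps_{i,j,U})}\leq \frac{c_{i,j}\,\eps^2}{y_{i,j,U}+\eps_{i,j,U}}.
\end{align}

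For the first bound, we group the pairs $(i,j)$ according to the buckets $S_{k,U}$. On $S_{k,U}$ with $k\geq 1$ the denominator is at least $2^k\eps$. Moreover, typicality restricts to outcomes with $y_{i,j,U}+\eps_{i,j,U}\geq \eps\log(1/\eps)$, so only buckets with $2^{k+1}\geq \log(1/\eps)$ contribute; in particular, $S_{0,U}$ plays no role. By \Cref{lemma:lower_nonadaptive_buckets}, the $c$-mass of $S_{k,U}$ is at most $2N\,2^{k+1}\eps\log(1/\eps)/p_a$. Hence the contribution of bucket $k$ is at most
\begin{align}
\frac{\eps^2}{2^k\eps}\cdot 2N\,2^{k+1}\eps\log(1/\eps)/p_a = 4N\eps^2\log(1/\eps)/p_a,
\end{align}
independently of $k$. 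Summing over the at most $\log(1/\eps)$ buckets gives the claimed $4\eps^2\log(1/\eps)^2N/p_a$, up to the constant convention in the bucket count. The point is that the factor $2^k$ gained in the denominator exactly cancels the growth $2^{k+1}$ of the bucket mass.

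For the second claim, a \emph{light} element is an outcome with $y_{i,j,U}+\eps_{i,j,U}<\eps\log(1/\eps)$. These outcomes lie in $S_{0,U}$ or in buckets $S_{k,U}$ with $2^k\leq\log(1/\eps)$. By a union bound over the $N$ measurements, the probability of observing any such outcome is at most the sum of $c_{i,j}(y_{i,j,U}+\eps_{i,j,U})$ over these pairs. On bucket $k$, each outcome probability per unit weight is at most $2^{k+1}\eps$ and the mass is at most $2N\,2^{k+1}\eps\log(1/\eps)/p_a$; for $S_{0,U}$ we use the bound $4N\eps\log(1/\eps)/p_a$ together with probability at most $2\eps$. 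This yields
\begin{align}
\sum_{k:\,2^k\leq\log(1/\eps)} 2N\,2^{2k+2}\eps^2\log(1/\eps)/p_a = O\!\left(\eps^2\log(1/\eps)^3 N/p_a\right),
\end{align}
because this geometric sum is dominated by its top term $2^{2k}\approx\log(1/\eps)^2$.

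The main obstacle is bookkeeping rather than depth. Three points need care. First, the conditioning on typicality only changes normalizations by a factor $1+o(1)$, which we absorb into constants. Second, the sign of $\eps_{i,j,U}$ must be handled correctly: we bucket by $y_{i,j,U}+\eps_{i,j,U}$, exactly as in \Cref{lemma:lower_nonadaptive_buckets}, rather than by $y_{i,j,U}$, so the mass bounds apply verbatim. Third, we need $\eps_{i,j,U}^2\leq\eps^2$ uniformly, which follows from $y_{i,j,U}\in[0,1]$.
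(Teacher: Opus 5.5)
Your proposal is correct and follows essentially the paper's own argument. You bound each summand by $c_{i,j}\eps^2/(y_{i,j,U}+\eps_{i,j,U})$ and apply the bucket mass bounds of \Cref{lemma:lower_nonadaptive_buckets}, so that $2^k$ in the denominator cancels the $2^{k+1}$ growth of the mass; for light elements you take a union bound over the light buckets and sum the resulting geometric series. The only cosmetic differences are that you bound the outcome probability per bucket by $2^{k+1}\eps$ instead of uniformly by $\eps\log(1/\eps)$, and that you discard $S_{0,U}$ via typicality; neither changes the argument or the order of the bounds.
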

\begin{proof}
    \begin{align}
        \sum_{i=1}^N\sum_{T_i\in \mathcal{T}_i}\frac{(c_{i,j}\eps_{i,j,U})^2}{\Pr[T_i(\rho_{0,U})]}
        &\leq  2\sum_{i=1}^N\sum_{j\in M_i}\frac{c_{i,j}\eps_{i,j,U}^2}{y_{i,j,U}+\eps_{i,j,U}}%
        \\
        &\leq 2\sum_{k}\sum_{(i,j)\in S_k}\frac{c_{i,j}\eps^2}{y_{i,j,U}+\eps_{i,j,U}}%
        \\
        &\leq 2\eps^2\sum_{k}\sum_{(i,j)\in S_k}\frac{c_{i,j}}{2^{k}\eps}
        \leq 
        2\eps^2 \sum_{k} \frac{2\log(1/\eps)N2^k\eps}{2^{k}p_a\eps}\leq 
        \frac{4\eps^2\log(1/\eps)^2N}{p_a}.
    \end{align}
    For the second statement, the probability that there is \emph{any} occurrence of a light element can then be bounded by (let $k_*:=\log\log(1/\eps)-1$, such that $2^{k+1}\eps=\eps\log(1/\eps)$)
    \begin{align}
        \sum_{i,j}c_{i,j}1[T_i(\rho_{0,U})\leq \eps\log(1/\eps)]T_i(\rho_{0,U})&\leq \eps\log(1/\eps)\sum_{i,j}c_{i,j}1[T_i(\rho_{0,U})\leq \eps\log(1/\eps)]
        \\
        &=\eps\log(1/\eps)\sum_{k=0}\sum_{(i,j)\in S_k}c_{i,j}
        \\
        &\leq 2\eps^2\log(1/\eps)^3N/p_a, \label{eq:no_heavy_biased}
    \end{align}
    using \Cref{lemma:lower_nonadaptive_buckets} in the last step.
\end{proof}

For $g_{\textnormal{typical}}$, it holds that
\begin{restatable}[]{lemma}{lowerqubittypical}
\label{lemma:lower_qubit_typical}
It holds that
    \begin{align}
        g_{\textnormal{typical}}^U:=\exp\left(\E_{R\in \mathcal{T}(U)}\left[\sum_{i=1}^N\log\left(1+\frac{c_{i,j}\eps_{i,j,U}}{\Pr[R_{i,\textnormal{II}}(\rho_{0,U})]}\right)\right]\right)\in [1-4c_N,1+4c_N],
    \end{align}
    for $c_N$ the small constant in the definition of $N$ in \eqref{eq:n_lower_qubit_bound}.
\end{restatable}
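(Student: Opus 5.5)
The plan is to Taylor-expand the logarithm and control the first- and second-order terms separately, using that the expectation of the first-order term vanishes up to the typicality normalization. For a single measurement $i$ with outcomes $j$, write $p_{i,j}:=\Pr[T_i=j(\rho_{0,U})]=c_{i,j}(y_{i,j,U}+\eps_{i,j,U})$. The ratio inside the logarithm is $1+x_{i,j}$ with $x_{i,j}:=c_{i,j}\eps_{i,j,U}/p_{i,j}$. On typical outcomes $y_{i,j,U}+\eps_{i,j,U}\geq \eps\log(1/\eps)$ and $|\eps_{i,j,U}|\leq \eps$, so $|x_{i,j}|\leq 1/\log(1/\eps)\leq 1/2$ for $\eps\leq\eps_0$. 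Hence $\log(1+x)\in[x-x^2,x]$ applies uniformly. This gives
\begin{align}
\sum_i \E_{T_i\in\mathcal{T}_i}[x_{i,T_i}]-\sum_i\E_{T_i\in\mathcal{T}_i}[x_{i,T_i}^2]\leq \log g^U_{\textnormal{typical}}\leq \sum_i \E_{T_i\in\mathcal{T}_i}[x_{i,T_i}],
\end{align}
where the expectation over $\mathcal{T}(U)$ factorizes over $i$ because the typical $T_i$ stay independent under conditioning.

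The second-order term is handled directly by \Cref{lemma:helper_lower}. Up to the $(1+o(1))$ normalization, $\E_{T_i}[x_{i,T_i}^2]=\sum_{j\in\mathcal{T}_i}(c_{i,j}\eps_{i,j,U})^2/p_{i,j}$. Summed over $i$, \Cref{lemma:helper_lower} bounds this by $4\eps^2\log(1/\eps)^2N/p_a$. Substituting \eqref{eq:n_lower_qubit_bound} turns it into $4c_N/\log^2(1/\eps)\leq c_N$.

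The first-order term is the main point, and it is essentially a cancellation. Up to normalization, $\E_{T_i}[x_{i,T_i}]=\sum_{j\in\mathcal{T}_i}c_{i,j}\eps_{i,j,U}$. Summing over all $j$ would give $\tr[\rho_{\eps,U}\sum_j c_{i,j}\dyad{M_i,j}]=\tr[\rho_{\eps,U}]=0$ by POVM completeness. What remains is the contribution of the atypical (light) outcomes, $-\sum_{j\notin\mathcal{T}_i}c_{i,j}\eps_{i,j,U}$, together with the $o(1)$ normalization correction times the full sum, which vanishes. For the light outcomes I use $|\eps_{i,j,U}|\leq\eps$ and count the weight $\sum_{(i,j):\,y+\eps_{i,j,U}<\eps\log(1/\eps)}c_{i,j}$ over the buckets $S_0,\dots,S_{k_*}$ of \Cref{lemma:lower_nonadaptive_buckets}. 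Each bucket has weight at most $O(N2^{k}\eps\log(1/\eps)/p_a)$, so the total is $O(N\eps\log^2(1/\eps)/p_a)$. Multiplying by $\eps$ gives $O(N\eps^2\log^2(1/\eps)/p_a)\leq c_N$ by \eqref{eq:n_lower_qubit_bound}.

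Combining, $|\log g^U_{\textnormal{typical}}|\leq 2c_N$, hence $g^U_{\textnormal{typical}}\in[e^{-2c_N},e^{2c_N}]\subseteq[1-4c_N,1+4c_N]$ for $c_N$ small.

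The main obstacle is the bookkeeping of the conditioning. Conditioning on typicality renormalizes each $T_i$ by $1/(1-q_i)$, where $q_i$ is the light mass for measurement $i$. The normalization must be shown to affect the first-order term only through $q_i$ times quantities already bounded; I would absorb it by writing $\E_{\mathcal{T}_i}[x]=(\sum_{j\in\mathcal{T}_i}c_{i,j}\eps_{i,j,U})/(1-q_i)$ and bounding $\sum_i q_i$ via \eqref{eq:no_heavy_biased}.
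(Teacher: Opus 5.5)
Your proposal is correct and follows essentially the same route as the paper. You sandwich $\log(1+x)$ between $x-x^2$ and $x$, cancel the first-order term via POVM completeness and $\tr[\rho_{\eps,U}]=0$ up to the atypical outcomes, and control the second-order term with \Cref{lemma:helper_lower} and \eqref{eq:n_lower_qubit_bound}. Your version is in fact slightly more careful than the paper's. You bound the atypical first-order remainder and the renormalization factor $1/(1-q_i)$ in both directions, whereas the paper does this only for the upper bound and silently sets the typical first-order sum to zero in the lower bound.
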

\begin{proof}
    We bound 
    \begin{align}
        &\E_{R\in \mathcal{T}(U)}\left[\sum_{i=1}^N\log\left(1+\frac{c_{i,j}\eps_{i,j,U}}{\Pr[T_i(\rho_{0,U})]}\right)\right]
        \\
        &=\sum_{R_1,...,R_N}\prod_{k=1}^N\Pr[T_{k}(\rho_{0,U})]\sum_{i=1}^N\log\left(1+\frac{c_{i,j}\eps_{i,j,U}}{\Pr[T_{i}(\rho_{0,U})]}\right)
        \\
        &= \sum_{i=1}^N\sum_{T_i\in \mathcal{T}_i}\Pr[T_{i}(\rho_{0,U})]\log\left(1+\frac{c_{i,j}\eps_{i,j,U}}{\Pr[T_{i}(\rho_{0,U})]}\right)
        \\
        &\leq \sum_{i=1}^N\sum_{T_i\in \mathcal{T}_i}c_{i,j}\eps_{i,j,U}
        \\
        &\leq %
        \sum_{i=1}^N\sum_{j\in M_i}\tr[c_{i,j}\dyad{M_i,j}\rho_{\eps}]+\sum_{i=1}^N\sum_{T_i\notin \mathcal{T}_i}c_{i,j}|\eps_{i,j,U}|
        \\
        &=%
        \sum_{i=1}^N\tr[\rho_{\eps}]+\sum_{i=1}^N\sum_{T_i\notin \mathcal{T}_i}\Pr[T_i(\rho_{0,U})]\leq 4\eps^2\log(1/\eps)^3N/p_a.
    \end{align}
    On the other hand, using that for $x\geq -0.68$, $\log(1+x)\geq x-x^2$,
    \begin{align}
        &\sum_{i=1}^N\E_{T\in \mathcal{T}(U)}\left[\log\left(1+\frac{c_{i,j}\eps_{i,j,U}}{\Pr[T_i(\rho_{0,U})]}\right)\right]
        \\
        &\geq \sum_{i=1}^N\sum_{T_i\in \mathcal{T}_i(U)}\Pr[T_i(\rho_{0,U})]\left(\frac{c_{i,j}\eps_{i,j,U}}{\Pr[T_{i}(\rho_{0,U})]}-\frac{(c_{i,j}\eps_{i,j,U})^2}{\Pr[T_i(\rho_{0,U})]^2}\right)\label{eq:lower_bound_negative_log}
        \\
        &=%
        -\sum_{i=1}^N\sum_{T_i\in \mathcal{T}_i(U)}\frac{(c_{i,j}\eps_{i,j,U})^2}{\Pr[T_i(\rho_{0,U})]}\geq -4\eps^2\log(1/\eps)^3N/p_a,
    \end{align}
    using \Cref{lemma:helper_lower}.
\end{proof}
Total mass of $k<k_*$ is at most $2N\eps\log(1/\eps)^2$.
For $k$ such that $y_{i,j,U}\in S_k$ satisfy $y_{i,j,U}\leq \eps\log(1/\eps)$, we see at most $O(\eps\log^2(1/\eps)N)$ samples. 
By \Cref{lemma:helper_lower} we may assume that we see no such extremely light elements at all.

For buckets with a less severe bias, we use Bernstein's inequality.
\begin{lemma}[Bernstein inequality~{\cite{conc_ineqs_2013}}]
\label{lemma:bernstein}
    Let $X_1, \ldots ,X_n$ be independent random variables with finite variance such that $X_i\leq b$ for some $b>0$ almost surely for all $i \in [n]$. Let 
    \begin{align}
        X:=\sum_{i=1}^nX_i-\E[X_i],\quad v:=\sum_{i=1}^n\E[X_i^2].
    \end{align}
    Then, for any $y>0$,
    \begin{align}
        \Pr[X\geq y]\leq \exp\left(\frac{-y^2}{2v+2by/3}\right).
    \end{align}
\end{lemma}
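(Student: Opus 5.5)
The plan is the standard Cram\'er--Chernoff argument. We bound the log-moment generating function of each centered summand using only the one-sided bound $X_i\leq b$ and the uncentered second moment $\E[X_i^2]$. This gives a sub-gamma bound for $X$, which is then optimized. Throughout, let $\phi(u):=e^u-u-1$ and fix $\lambda\in(0,3/b)$.

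\textbf{Step 1 (monotonicity).} We first show that $u\mapsto \phi(u)/u^2$, extended by $1/2$ at $u=0$, is nondecreasing on all of $\mathbb{R}$. One route: the map is the expectation $\int_0^1 (1-s)e^{su}\,ds$, which is increasing in $u$. Since $\lambda X_i\leq \lambda b$ almost surely, this gives, regardless of the sign of $X_i$,
\begin{align}
\phi(\lambda X_i)\leq X_i^2\,\frac{\phi(\lambda b)}{b^2}.
\end{align}
This is exactly where the hypothesis $X_i\leq b$, rather than a bound on the centered variable, enters. It is also why $v$ is the uncentered quantity $\sum_i\E[X_i^2]$.

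\textbf{Step 2 (log-MGF bound).} Using $\log z\leq z-1$ and then Step 1,
\begin{align}
\log\E\left[e^{\lambda(X_i-\E[X_i])}\right]=\log\E\left[e^{\lambda X_i}\right]-\lambda\E[X_i]\leq \E[\phi(\lambda X_i)]\leq \E[X_i^2]\frac{\phi(\lambda b)}{b^2}.
\end{align}
By independence, summing over $i$ gives $\log\E[e^{\lambda X}]\leq v\,\phi(\lambda b)/b^2$.

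\textbf{Step 3 (sub-gamma form).} Next we show $\phi(u)\leq \frac{u^2}{2(1-u/3)}$ for $u\in[0,3)$. Expand $\phi(u)=\sum_{k\geq2}u^k/k!$ and use $k!\geq 2\cdot 3^{k-2}$, so that $\phi(u)\leq \frac{u^2}{2}\sum_{m\geq0}(u/3)^m$. Setting $c:=b/3$, this yields
\begin{align}
\log\E[e^{\lambda X}]\leq \frac{v\lambda^2}{2(1-c\lambda)}\qquad\text{for } \lambda\in(0,1/c).
\end{align}

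\textbf{Step 4 (Chernoff optimization).} Markov's inequality gives $\Pr[X\geq y]\leq\exp\!\left(-\lambda y+\frac{v\lambda^2}{2(1-c\lambda)}\right)$. Choose $\lambda=y/(v+cy)$, which satisfies $c\lambda<1$. Then $1-c\lambda=v/(v+cy)$, so the quadratic term equals $y^2/(2(v+cy))$. The linear term equals $-y^2/(v+cy)$. Hence the exponent is $-y^2/(2v+2by/3)$, as claimed. If $v=0$, then $X_i=0$ almost surely and the claim is trivial.

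The only non-routine point is Step 1. It must hold for negative arguments as well, since $X_i$ is bounded only from above, and the integral representation handles this directly. The remaining steps are standard series and algebra estimates.
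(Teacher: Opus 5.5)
Your proof is correct. It is the standard Cram\'er--Chernoff argument: Bennett's log-MGF bound via monotonicity of $\phi(u)/u^2$, then the sub-gamma estimate $\phi(u)\le u^2/(2(1-u/3))$, then the choice $\lambda=y/(v+by/3)$, which is essentially how the cited concentration-inequalities reference proves it; the paper itself gives no proof of this lemma and only cites it.
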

In our case, we let $c$ be a constant, say $c=1/100$, and let
\begin{align}
    X_i:=\log\left(1+\frac{c_{i,j}\eps_{i,j,U}}{\Pr[T_i(\rho_{0,U})]}\right),
    \qquad\text{such that}\qquad 
    X=\sum_k t_k.
\end{align}
Note that the $X_i$ are bounded because
\begin{align}
     X_i\leq \log\left(1+\frac{|c_{i,j}|\eps}{\Pr[T_{i}(\rho_{0},U)]}\right)\leq \frac{|c_{i,j}|\eps}{\Pr[T_{i}(\rho_{0},U)]}\leq \frac{1}{\log(1/\eps)},
\end{align}
and $X_i$ can also be lower bounded by $-2/\log(1/\eps)$.
\begin{align}
    v=\sum_{i=1}^N\E_{T_i\in\mathcal{T}_i}[X_i^2]
    &\leq 2\sum_{i=1}^N\sum_{T_i\in \mathcal{T}_i}\Pr[T_i(\rho_{0,U})]\left(\frac{c_{i,j}\eps_{i,j,U}}{\Pr[T_i(\rho_{0,U})]}\right)^2\leq %
    8\eps^2\log(1/\eps)^2N/p_a,
\end{align}
by \Cref{lemma:helper_lower}. The factor $2$ appears when bounding $X_i^2$ if $T_i(\rho_{\eps,U})$ is negative, see \eqref{eq:lower_bound_negative_log}.Note that
\begin{align}
\label{eq:x_to_bias}
    X=\log\left(\prod_{i=1}^N\exp(t_i(T,U))\right)
\end{align}
We have 
\begin{align}
    \Pr[X\geq c]&\leq \exp\left(\frac{-c^2}{2v+2bc/3}\right)
    \\
    &\leq \exp\left(\frac{-c^2}{16\eps^2\log(1/\eps)^2N/p_a+c/\log(1/\eps)}\right)
    \\
    &\leq \exp\left(-\frac{c\log(1/\eps)}{2}\right)=\eps^{c/2},
\end{align}
which can be made arbitrarily small by choosing $\eps_0$ appropriately, say smaller than $1/100$. By replacing $c$ with $kc$, we can find $\Pr[X\geq kc]\leq \eps^{kc/2}$, and by applying Bernstein to $-X_i$, we obtain $\Pr[-X\geq kc]\leq \eps^{kc/2}$. Therefore, using \eqref{eq:x_to_bias}, we find  
\begin{align}
    \Pr[\prod_k\exp(t_k)\geq e^{kc}]\leq \frac{1}{100^k}\quad\text{and}\quad\Pr[\prod_k\exp(t_k)\leq e^{-kc}]\leq \frac{1}{100^k} .
\end{align}
Combining the above with \Cref{lemma:lower_qubit_typical}, we can bound \eqref{eq:f_T_bound} using a step-wise bound on the tail, which decays exponentially (recall the definition of $g_{\text{typical}}$ from \eqref{eq:lower_f}). For integers $k\geq 0$ define $A_k(U):=\{T|(T,U)\text{~typical},kc \leq |\sum_i t_i|\leq (k+1)c\}$. Then, with $1-4c_N\leq g_{\text{typical}}^U\leq 1+4c_N$,
\begin{align}
    \sum_{\substack{T\in \mathcal{R}(\mathcal{A}_{\text{II}})\\ (T,U)~\text{typical}}}f(T,U)\Pr[T,U]
    &\leq \int_U dU \sum_{k=0}\sum_{A_k(U)}f(T,U)\Pr[T,U] 
    \\
    &\leq 2\int_UdU\sum_{k=0}\sum_{A_k(U)}\Pr[T,U]\left| (1+4c_N)e^{(k+1)c}-1\right|
    \\
    &\leq 2\int_U dU\left| (1+4c_N)e^{c}-1\right| 
    \\&\quad + 2(1+4c_N)\int_U dU \sum_{k=1}\Pr[\left|\sum_it_i\right|\geq kc]e^{(k+1)c}
    \\
    &\leq 2((1+4c_N)\left(1+\frac{1}{20}\right)-1) + 2(1+4c_N)\sum_{k=1}\frac{e^{(k+1)c}}{100^k}
    \\
    &\leq \frac{1}{10}+O(c+c_N).
\end{align}
Lastly, note that
\begin{align}
    \sum_{T~\text{not typical}}\Pr[T]f(T)&= \sum_{T~\text{not typical}}\int_U dU\left|\prod_{i=1}^Nc_{i,j}(y_{i,j,U}+\eps_{i,j,U})-\prod_{i=1}^Nc_{i,j}(y_{i,j,U}+2\eps_{i,j,U})\right|
    \\
    &\leq \sum_{T~\text{not typical}}\int_U dU\left(\prod_{i=1}^Nc_{i,j}(y_{i,j,U}+\eps_{i,j,U})+\prod_{i=1}^Nc_{i,j}(y_{i,j,U}+2\eps_{i,j,U})\right)
    \\
    &\leq O\left(\Pr[T~\text{not typical for }\rho_0]+\Pr[T~\text{not typical for }\rho_1]\right)
    \\
    &\leq O\left(\Pr[T~\text{not typical for }\rho_0]\right)
    \\
    &\leq O\left(\Pr[\text{\Cref{lemma:lower_nonadaptive_buckets} doesn't hold}]+\Pr[\exists~\text{light element}]\right)
    \\
    &=O(p_a)+o(1),
\end{align}
where we used that $\rho_{1,U}\leq 2\rho_{0,U}$ such that the probabilities of a $T$ being typical can be related, up to constant factors. Therefore, we bounded (see \eqref{eq:lower_bound_tvd_in_R} and \eqref{eq:lower_rewrite_tv}) $\|p_0^N(M)-p_1^N(M)\|_1\leq 1/3$, and can conclude our claimed lower bound by \Cref{lemma:dist_lower_bound} if $c_N$, $p_a$, and $c$ are chosen suitably small. %

\section{Certification \& Equivalence Testing of Quantum Markov Chains}\label{sec:cmitesting}

In this section, we prove the following result for testing quantum Markov chains, that is quantum states which satisfy $I(A:C|B)_{\rho}:=D(\rho_{ABC}\|\rho_{AB}\otimes 1_C)-D(\rho_{BC}\|\rho_{B}\otimes 1_C)=0$.

\lemmarkov*

Our results are analogous to the work of \cite{DBLP:journals/tit/GaoY25}, who studied the same problems under multi-copy measurements.

In order to prove the above theorem, we need the following lemma from \cite{DBLP:journals/tit/GaoY25}, which states that for certification and equivalence testing, it is sufficient to test marginals of the Markov chain.
\begin{lemma}[{\cite[Thm.\ 3.7]{DBLP:journals/tit/GaoY25}}]
\label{lemma:cmi_decomposition}
For states $\rho_{AB}$, $\sigma_{AB}$, $\rho_{BC}$ and $\sigma_{BC}$. Let $\rho_B:=\tr_A[\rho_{AB}]$ and $\sigma_B:=\tr_C[\sigma_{BC}]$. Let $\eps_1,\eps_2,\eps_3\in [0,1]$ be such that
    \begin{align}
        F(\rho_{AB},\sigma_{AB})\geq 1-\eps_1,\quad 
        F(\rho_{BC},\sigma_{BC})\geq 1-\eps_2,\quad 
        F(\rho_{B},\sigma_{B})\geq 1-\eps_3.
    \end{align}
    Then
    \begin{align}
        F(\rho_{BC}^{1/2}\rho_{B}^{-1/2}\rho_{AB}\rho_{B}^{-1/2}\rho_{BC}^{1/2},\sigma_{BC}^{1/2}\sigma_{B}^{-1/2}\sigma_{AB}\sigma_{B}^{-1/2}\sigma_{BC}^{1/2})\geq 1-2(\sqrt{\eps_1}+\sqrt{\eps_2}+\sqrt{\eps_3})^2.
    \end{align}
\end{lemma}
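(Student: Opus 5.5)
The strategy is to view both operators as Petz recovery maps and to control their distance with the Bures distance $D_B(\cdot,\cdot)=\sqrt{2(1-F(\cdot,\cdot))}$, which satisfies the triangle inequality. Write $V_\rho:=\rho_{BC}^{1/2}\rho_B^{-1/2}$ and $V_\sigma:=\sigma_{BC}^{1/2}\sigma_B^{-1/2}$, each tensored with $\mathds{1}_A$. Then the two states in the statement are $\mathcal{R}_\rho(\rho_{AB})=V_\rho\rho_{AB}V_\rho^\dagger$ and $\mathcal{R}_\sigma(\sigma_{AB})=V_\sigma\sigma_{AB}V_\sigma^\dagger$. Here $\mathcal{R}_\rho:B\to BC$ is completely positive and trace preserving on the support of $\rho_B$, since $V_\rho^\dagger V_\rho$ is the support projector of $\rho_B$; assume full-rank marginals, or depolarize as in \Cref{rem:min_ev}. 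The goal is
\begin{align}
D_B\big(\mathcal{R}_\rho(\rho_{AB}),\mathcal{R}_\sigma(\sigma_{AB})\big)\leq 2\big(\sqrt{\eps_1}+\sqrt{\eps_2}+\sqrt{\eps_3}\big),
\end{align}
using $\sqrt{2(1-F)}\le\sqrt{2\eps}$ for each pair. Squaring and dividing by $2$ then gives exactly the claimed $1-2(\sqrt{\eps_1}+\sqrt{\eps_2}+\sqrt{\eps_3})^2$.

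\textbf{Step 1 (change the input).} By the triangle inequality, split into $D_B(\mathcal{R}_\rho(\rho_{AB}),\mathcal{R}_\rho(\sigma_{AB}))$ plus $D_B(\mathcal{R}_\rho(\sigma_{AB}),\mathcal{R}_\sigma(\sigma_{AB}))$. Monotonicity of fidelity under the channel $\mathcal{R}_\rho$ bounds the first term by $D_B(\rho_{AB},\sigma_{AB})\le\sqrt{2\eps_1}$.

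\textbf{Step 2 (change the map at fixed input).} For the second term, use the purification characterization of fidelity: for any operators $X,Y$ with $XX^\dagger=P$ and $YY^\dagger=Q$, one has $F(P,Q)\ge|\tr[X^\dagger Y]|$, hence $D_B(P,Q)\le\|X-Y\|_2$. Take $X=V_\rho\sigma_{AB}^{1/2}$ and $Y=V_\sigma\sigma_{AB}^{1/2}W$, where the unitary $W$ acting on $B$ (and $C$) is still free. The key observation is that $V_\rho$ and $V_\sigma$ act trivially on $A$. Therefore $\|(V_\rho-V_\sigma W')\sigma_{AB}^{1/2}\|_2^2$ depends on $\sigma_{AB}$ only through $\sigma_B$, and equals $\|V_\rho\sigma_B^{1/2}-V_\sigma\sigma_B^{1/2}W''\|_2^2$. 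This in turn equals $\|\rho_{BC}^{1/2}\rho_B^{-1/2}\sigma_B^{1/2}-\sigma_{BC}^{1/2}W''\|_2$ for a suitable unitary. The problem thus reduces to bipartite quantities on $BC$ and $B$.

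\textbf{Step 3 (split into the $\eps_2$ and $\eps_3$ terms).} Insert $\rho_{BC}^{1/2}U$ as an intermediate. The difference $\rho_{BC}^{1/2}U-\sigma_{BC}^{1/2}W''$ is at most $\sqrt{2\eps_2}$ in $\|\cdot\|_2$ for the optimal Uhlmann unitary. The remaining piece $\rho_{BC}^{1/2}\rho_B^{-1/2}(\sigma_B^{1/2}-\rho_B^{1/2}U_B)$ is bounded by $\|V_\rho\|_{\opn}$ times $\|\sigma_B^{1/2}-\rho_B^{1/2}U_B\|_2$. Because $V_\rho^\dagger V_\rho$ is a projector, $\|V_\rho\|_{\opn}\le 1$, so this piece is at most $\sqrt{2\eps_3}$ for the optimal unitary on $B$.

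\textbf{Main obstacle.} The difficulty is making the unitary freedoms compatible. The Uhlmann unitary for $(\rho_{BC},\sigma_{BC})$ lives on $BC$, while the one for $(\rho_B,\sigma_B)$ lives on $B$, and the insertion above needs $\rho_{BC}^{1/2}\rho_B^{-1/2}\rho_B^{1/2}U_B$ to match $\rho_{BC}^{1/2}U$. To handle this, I would purify onto a reference system and let the unitaries act on the purifying spaces instead. Concretely, I would replace $\sigma_B^{1/2}$ by a purification $|\psi_\sigma\rangle_{BR}$ and apply Uhlmann's theorem there, since unitaries on the reference commute with $V_\rho$ and $V_\sigma$. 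If this still leaves a mismatch, I would accept an extra $\sqrt{\eps_3}$-type term. That accounts for the slack factor $2$ in the bound and the symmetric appearance of all three $\sqrt{\eps_i}$.
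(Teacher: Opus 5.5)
The paper does not prove this lemma; it imports it from Gao--Yu, so your argument has to stand on its own. The architecture is sound: a Bures triangle inequality, data processing for the change of input (Step~1), and a reduction of the change-of-map term to $\sigma_B$. But the decisive estimate in Steps~2--3 is not established, and two of the tools you use are false.

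First, the Petz map acts as $X\mapsto V_\rho(X\otimes\mathds{1}_C)V_\rho^\dagger$. Trace preservation therefore means $\tr_C[V_\rho^\dagger V_\rho]=\Pi_{\rho_B}$, which needs $\tr_C\rho_{BC}=\rho_B$. The operator $V_\rho^\dagger V_\rho=(\rho_B^{-1/2}\otimes\mathds{1}_C)\rho_{BC}(\rho_B^{-1/2}\otimes\mathds{1}_C)$ itself is not a projector, and $\|V_\rho\|_{\opn}\le 1$ fails: for $\rho_{BC}$ near maximally entangled with $d_B=d_C=d$ one gets $\|V_\rho\|_{\opn}\approx\sqrt{d}$. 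What is true, and suffices for your $\eps_3$-piece, is $\|V_\rho(Z\otimes\mathds{1}_C)\|_2^2=\tr[Z^\dagger\Pi_{\rho_B}Z]\le\|Z\|_2^2$.

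Second, and more seriously, the unitary freedom you invoke in Step~2 is not available. $V_\sigma(\sigma_{AB}^{1/2}\otimes\mathds{1}_C)W$ is a valid square root for every unitary $W$. However, your expression collapses to a function of $\sigma_B$ only if $W$ can be moved to the left of $\sigma_{AB}^{1/2}\otimes\mathds{1}_C$. That requires $W$ to commute with $\sigma_{AB}\otimes\mathds{1}_C$, which generically forces $W=\mathds{1}_{AB}\otimes W_C$. This restriction is unavoidable. The bound must hold for $\sigma_{AB}$ in which $A$ purifies $B$; then $\mathcal{R}_\sigma(\sigma_{AB})$ has rank at most $d_C$, and Uhlmann's freedom is exactly a unitary on $C$. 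Purifying $\sigma_B$ onto a reference does not help, because a reference unitary may be applied to one side only if it leaves $\sigma_{AB}$ invariant, not merely $\sigma_B$. So you cannot insert the $B$-local Uhlmann unitary for $(\rho_B,\sigma_B)$, and the Uhlmann unitary for $(\rho_{BC},\sigma_{BC})$ is generally not of the form $U_B\otimes W_C$. The ``main obstacle'' you flag is real, and the proposal leaves it open.

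The missing idea is to drop Uhlmann unitaries altogether and work with canonical square roots, paying with
\[
\|\sqrt{\rho}-\sqrt{\sigma}\|_2^2=2-2\tr[\sqrt{\rho}\sqrt{\sigma}]\le 2\big(1-F(\rho,\sigma)^2\big)\le 4\big(1-F(\rho,\sigma)\big).
\]
Here $\tr[\sqrt{\rho}\sqrt{\sigma}]\ge F(\rho,\sigma)^2$ follows from monotonicity of $\tr[\rho^{1/2}\sigma^{1/2}]$ under partial trace, applied to Uhlmann-optimal purifications. Take $W=\mathds{1}$ and use $V_\rho(\rho_B^{1/2}\otimes\mathds{1}_C)=\rho_{BC}^{1/2}$ together with the partial-trace bound above. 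This gives
\[
D_B\big(\mathcal{R}_\rho(\sigma_{AB}),\mathcal{R}_\sigma(\sigma_{AB})\big)\le\big\|V_\rho\big((\sigma_B^{1/2}-\rho_B^{1/2})\otimes\mathds{1}_C\big)\big\|_2+\|\rho_{BC}^{1/2}-\sigma_{BC}^{1/2}\|_2\le 2\sqrt{\eps_3}+2\sqrt{\eps_2}.
\]
Combined with your Step~1, this yields $D_B\le 2(\sqrt{\eps_1}+\sqrt{\eps_2}+\sqrt{\eps_3})$, which is exactly the claimed bound. So the factor $2$ in the statement is the price of canonical rather than optimal square roots, not room for an extra correction term.

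Both the partial-trace identity and the reduction to $\sigma_B$ require consistent marginals, $\tr_C\rho_{BC}=\rho_B$ and $\tr_A\sigma_{AB}=\sigma_B$, plus suitable support conditions. The statement leaves these implicit, but they hold in the Markov-chain application.
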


Now we are ready to prove \Cref{lem:markov}. 

\begin{proof}[Proof of \Cref{lem:markov}]
From \Cref{lemma:cmi_decomposition}, it is clear that it is sufficient to test the marginals of the Markov chain for our purpose. For (i), we perform the certification with respect to the fidelity of the marginals $A,B$ and $B,C$ with respect to the known state $\sigma_{ABC}$ and unknown state $\rho_{ABC}$. We use our certification algorithm from \Cref{lemma:fid_id_testing}. The sample complexity of this testing problem then follows directly.

For (ii), since both the states $\sigma_{ABC}$ and $\rho_{ABC}$ are unknown, we perform equivalence testing with respect to the fidelity  of the marginals $A,B$ and $B,C$. The final sample complexity follows from our equivalence testing bound in \Cref{lemma:fid_eq_testing}.
\end{proof}

\section{Acknowledgements}
We thank the reviewers for their comments and suggestions which improved the presentation of the paper.
This project is supported by the National Research Foundation, Singapore through the National Quantum Office, hosted in A*STAR, under its Centre for Quantum Technologies Funding Initiative (S24Q2d0009) and by the NRF Investigatorship award (NRF-NRFI10-2024-0006).
\newline
JS would like to thank Paco Aghamalyan-Lim for his encouragement throughout this project. 

\paragraph{Statement on AI usage:} The motivations, ideas, and technical results in this work were derived and proven by the authors. ChatGPT Plus was used for basic feedback on writing and for retrieving basic inequalities and literature pointers, particularly regarding the $\ell_{\infty}$-norm (e.g., \Cref{lemma:op_prop_1} or \Cref{fact:bound_overlap}, and advice which led the authors to derive a more efficient use of the add-zero trick in \eqref{eq:op_bound_intermediate}). ChatGPT-5.6 Sol Pro was used for technical proof-checking, and found typos and a few minor technical issues, as well as a conditioning and normalization issue in \Cref{sec:lower_equiv_fidelity_qubits}, the latter requiring a modification to the original argument while the other issues were very straightforward to repair. The issues were resolved by the authors.

\bibliographystyle{alpha} 
\bibliography{biblio}
\newpage
\appendix

\end{document}